\documentclass[11pt]{article}
\usepackage{graphicx} 

\usepackage{amsthm}
\usepackage[T1]{fontenc}
\usepackage{pgfplots}
\usepackage{pgfplotstable}
\pgfplotsset{compat=1.8}
\usepackage[aboveskip=8pt,belowskip=-4pt]{caption}
\usepackage{subcaption}
\usepackage{xspace} 
\usepackage[margin=1in]{geometry}
\usepackage{textcomp}
\usepackage{balance}  
\usepackage{xcolor}
\usepackage{amsmath}
\usepackage{bbm}
\DeclareMathOperator*{\argmax}{argmax}
\usepackage{algorithm}
\usepackage{algorithmicx}
\usepackage{cite}
\usepackage{float}
\usepackage{siunitx}
\usepackage{graphicx}
\usepackage{xspace}
\usepackage{thm-restate}
\usepackage[disable]{todonotes}
\usepackage{balance}  
\usepackage{booktabs} 

\usepackage{marginnote} 
\usepackage{url}
\usepackage{amssymb}
\usepackage{enumitem}
\graphicspath{{./fig/}}
\usepackage{mathtools}
\usepackage[noend]{algpseudocode}
\newtheorem{theorem}{Theorem}
\newtheorem{lemma}{Lemma}

\newtheorem{definition}{Definition}

\newtheorem{proposition}{Proposition}
\usepackage{comment}
\usepackage{color}
\usepackage{cleveref}
\usepackage{tikz}
\usetikzlibrary{fadings}
\usepackage{MnSymbol,wasysym} 

\usepackage{titling}
\thanksmarkseries{arabic}

\newcommand{\defn}[1]       {{\textit{\textbf{\boldmath #1}}}}
 
\newcommand{\poly}{\operatorname*{poly}}
\newcommand{\polylog}{\operatorname*{polylog}}
\newcommand{\E}{\mathbb{E}}

\newenvironment{restatedtheorem}[1]
  {\par\medskip\noindent\textbf{Theorem~#1.}\itshape\hspace{0.5em}\ignorespaces}
  {\par\medskip\noindent\normalfont\ignorespacesafterend}

\newif\ifcomments
\commentsfalse

\newcommand{\thmmainbody}{%
    Let $f: [N] \to [N]$ be a function which we have $O(1)$-time oracle access to, and that is updated over time by $\mathsf{Update}(x, y)$ operations. For $T = O(\log N / \log \log N)$, we can construct in linear time a data structure that supports $O(T)$-time iterators, that uses space $O(N \log N / T)$ bits, and that supports $O(T)$-time $\mathsf{Update}(x, y)$ operations. Both the initial construction and each $\mathsf{Update}$ operation succeed with high probability in $N$.%
}

\newcommand{\thmunorderedgraphsbody}{%
    Let $\epsilon \in (0, \frac{1}{2})$ satisfy $\epsilon^{-1}=O(\log n/\log\log n)$. Let $G=([n],E)$ be a graph that is updated over time by edge insertions and
    deletions. There is a $(1+\epsilon)$-compact dynamic graph data structure that, given the initial edge set $E$, can be constructed in $O(n+m)$ time; that supports adjacency queries in $O(1)$ time, insertions and deletions in $O(\epsilon^{-1})$ amortized time, and iterator queries in $O(\epsilon^{-1})$ time per neighbor, where the processing/insert/delete times are all with high probability in $n$.%
}

\newcommand{\thmfiatnaorbody}{%
    Let $f: [N] \to [N]$ be a function which we have $O(1)$-time oracle access to, and that is updated over time by $\mathsf{Update}(x, y)$ operations. For $t \in O(N^{1/3})$, we can construct in $\tilde{O}(N)$ time a data structure that supports $\tilde{O}(t^3)$-time iterators, that uses space $O(N \log N / t)$ bits, and that supports $O(t^3)$-time $\mathsf{Update}(x, y)$ operations. Both the initial construction and each $\mathsf{Update}$ operation succeed with high probability in $N$.%
}

\title{Dynamic and Optimal Function Inversion in the Small-Time Regime}

\author{John Kuszmaul\thanks{MIT CSAIL. Email: john.kuszmaul@gmail.com. Research supported by NSF Grant CCF-2330048, BSF Grant 2024233, and a Simons Investigator Award.} \and William Kuszmaul\thanks{CMU. Email: kuszmaul@cmu.edu. Research supported in part by NSF grant CCF-2504471 and by a Jane Street grant.}}
\date{}

\begin{document}

\maketitle

\begin{abstract}
The classic function-inversion problem considers the task of constructing a data structure which, given access to a constant-time oracle for a function $f : [N] \rightarrow [N]$, supports efficient inverse-queries on $f$. This problem has been studied extensively in the small-space/large-time regime, where one wishes to use space $S$, say, $N^{1 - \Omega(1)}$ bits, and where the query time is intended to be a small polynomial of $N$. Much less attention has been given to the \emph{small-time/large-space} regime, where $S = (N \log N) / t$ for some relatively small $t$, and where the goal is to achieve a good space bound as a function of $t$. 

In this paper, we give an optimal solution in the small-time regime, achieving space $S = O(N \log N / t)$ and time $O(t)$ for any $t \le O(\log N / \log \log N)$. This matches a lower bound by Yao (and is the first parameter regime where the lower bound has been matched for general functions). Additionally, we extend our solution to support point-updates to $f$, also in $O(t)$ time. Our techniques for supporting point updates also extend to the classic function-inversion solution of Fiat and Naor. 

All of our results are motivated by the data-structural perspective on function inversion, in which the goal is to supplement an already-existing data structure $\mathcal{D}_1$ (which, as part of its functionality, encodes some function $f$) with a small secondary data structure $\mathcal{D}_2$ that supports inverse queries. Our results allow $\mathcal{D}_2$ to be implemented in $(N \log N)/t$ bits with $O(t)$ query (and update) times -- if $\mathcal{D}_1$ is itself $\Theta(N \log N)$ bits, this results in the overall space usage increasing by only a $(1 + O(1/t))$ factor. 

As a sample application of our results, we show how to construct dynamic unordered graphs that use space $(1 + \epsilon)$-close to information-theoretically optimal while offering adjacency queries, neighborhood queries, and edge insertions/deletions in amortized time $O(\epsilon^{-1})$.
\end{abstract}

\section{Introduction}

The function-inversion problem considers the following basic question. Suppose we are given constant-time black-box access to some function $f: [N] \rightarrow [N]$, and that we wish to construct a data structure $\mathcal{D}$ that answers queries of the form 
$$
\mathsf{Inverse}(y) = \begin{cases}
x & \text{ for some } x \in f^{-1}(y) \text{ if } |f^{-1}(y)| > 0, \\
\bot & \text{ otherwise.}
\end{cases}
$$
Given a space budget of $S = (N \log N) / t$ bits, for some $S$, how efficiently can we hope to answer such queries?

On the upper-bound side, a classic result of Fiat and Naor \cite{FiatNa91, FiatNa00} achieves $\tilde{O}(t^3)$-time inverse queries for any function $f$, extending an earlier result of Hellman \cite{Hellman80} who achieved time $\tilde{O}(t^2)$ for a random function.

These early constructions \cite{Hellman80, FiatNa91, FiatNa00} were motivated by applications to cryptography, in which one wishes to build a data structure $\mathcal{D}$ that can speed up the task of inverting some cryptographic hash function $f$. Indeed, a variation of Hellman's original construction, known as a \emph{rainbow table} \cite{oechslin2003making}, has been widely used as a practical technique to crack passwords \cite{thing2009novel, meyer2011breaking,Losby2026NetNTLMv1RainbowTables}, and is one of the reasons that password salting has become a widely-accepted security practice \cite{CrackStation2021SaltedPasswordHashing,Arias2025AddingSaltToHashing}.\footnote{It turns out that the bottleneck in applying Hellman's construction (or rainbow tables) is the \emph{domain size} of $f$ (rather than the co-domain size). Without salting, it is possible to restrict the domain of $f$ to a smaller size (say, short strings only, or some set of highly structured inputs) in order to make the construction practical even against high-quality cryptographic hash functions.} 

On the lower-bound side, Yao proved in 1990 \cite{Yao90} that any solution must incur $\Omega(t)$-time queries. Despite a great deal of interest, it is not known whether Yao's bound is achievable (or how to get any non-trivial results for $S \ll N^{1/2}$ \cite{golovnevGuPe23}). (It is, however, known that any $t^{2 - \Omega(1)}$-time solution would require a different class of data structure than the one considered by Fiat and Naor \cite{barkan2006rigorous}.)

Cryptographic applications of function inversion naturally tend to focus on what we will call the \emph{small-space/large-time regime}, where the goal is to use space $N^{1 - \Omega(1)}$ bits, while allowing for a query time of the form $N^{\Theta(1)}$. In this paper, we consider instead the \emph{small-time/large-space regime}, where the goal is to support a query time as small as possible, while also ensuring that the space usage is $o(N \log N)$ bits. 

This is motivated in part by the following basic data-structural setting. Suppose we are maintaining a $\Theta(N \log N)$-bit data structure $\mathcal{D}_1$ that either implicitly or explicitly encodes a function $f: [N] \rightarrow [N]$, and we want to \emph{add} a secondary data structure $\mathcal{D}_2$ to support inverse queries on $f$. A natural goal here is to design $\mathcal{D}_2$ both to be very time efficient (hence the small-time regime) while also using space $o(N \log N)$ bits (so that \emph{overall} space efficiency is preserved, up to $1 + o(1)$ factors). This raises a natural question: can we, in this low-time regime, hope to do better than the classic Fiat and Naor construction?

Of course, in data structural settings, the function $f$ being considered is often itself a \emph{dynamically changing object}. This raises a second question, which is interesting in all parameter regimes: Can we build function-inversion data structures that support efficient point-updates to $f$, while still supporting efficient function inversion, \emph{even as $f$ changes over time?}

\paragraph{This paper. }In this paper, we construct an optimal data structure for the small-time/large-space regime. For any $t \in O(\log N / \log \log N)$, we are able to construct a solution with space $O(N \log N / t)$ bits, while supporting $O(t)$-time inversion queries (thereby matching Yao's lower bound). Moreover, our solution is dynamic, allowing point updates to $f$ to \emph{also} be implemented in $O(t)$ time. These guarantees are captured in the following theorem, which is proven in Section \ref{sec:main}:

\begin{restatedtheorem}{\ref{thm:main}}
    \thmmainbody
\end{restatedtheorem}

Our solution can be viewed as an extension of the basic chaining idea introduced by Hellman \cite{Hellman80} and Fiat and Naor \cite{FiatNa91,FiatNa00}, but with significant data-structural machinery to speed up both the query time, the construction time, and ultimately the update time. Interestingly, the algorithmic approach that we develop for dynamizing our data structure can also be applied, at a high level, to Fiat and Naor's original construction -- this is presented in Section \ref{sec:fiat-naor-dynamic}.

\paragraph{An application to dynamic unordered graphs. }In addition to our results on function inversion, we also apply our results to the problem of storing a dynamic $(1 + \epsilon)$-compact unordered graph. Suppose we are given an unordered graph $G$ with labeled vertices $[n]$ and with $m$ edges, and we wish to store it in a data structure using close-to-optimal space, while supporting basic graph operations such as edge insertion/deletion, neighborhood queries, etc. A simple adjacency-list representation wastes (at least) a factor-of-two on space, because it stores each edge $\{i, j\}$ \emph{twice}, once from $i$ to $j$, and once from $j$ to $i$. This raises a natural question: Can one hope to get within a factor of $(1 + \epsilon)$ of the information-theoretically optimal space bound for the graph, while still supporting basic graph operations in close-to-constant time? 

The state of the art for this problem, due to Farzan and Munro \cite{FarzanMunro08} in 2008, is a \emph{static} data structure that supports $O(\epsilon^{-1})$-time neighborhood and adjacency queries. It has remained an open question whether a \emph{dynamic} solution to the same problem may exist. 

In Section \ref{sec:unordered-graphs}, we present a dynamic data structure that supports $O(\epsilon^{-1})$ time for every operation (neighborhood and adjacency queries, edge insertions, and edge deletions), while using space within a factor of $(1 + \epsilon)$ of information-theoretically optimal (and provided that $m = n^{2 - \Omega(1)}$). This result is summarized in the following theorem, which is proven in Section \ref{sec:unordered-graphs}:

\begin{restatedtheorem}{\ref{thm:dynamic-unordered-graphs}}
\thmunorderedgraphsbody
\end{restatedtheorem}

At a high level, our approach to proving Theorem \ref{thm:dynamic-unordered-graphs} is to encode the graph $G$ as a function $f_G$ that can be used to follow edges $(i, j)$ with $i < j$; and to then use function inversion to follow edges in the other direction $j \rightarrow i$. We then show how to encode the function $f_G$ efficiently (using close to optimal space) and in a way that allows us to support both adjacency and neighborhood queries (this requires several interesting data-structural ideas). Our approach also results in a new and arguably simpler solution to the static version of the problem.

\section{Related Work}
Hellman \cite{Hellman80} initiated the study of function inversion in $1980$. The analysis, which crucially assumes $f: [N] \to [N]$ is a random function, demonstrates a $O(N \log N / t)$-space solution with query time $\tilde{O}(t^2)$. 

In a breakthrough result of $1991$, Fiat and Naor presented a $\tilde{O}(t^3)$ query-time solution for arbitrary function inversion \cite{FiatNa91, FiatNa00}. Their work also extends to support query time $\tilde{O}(N t^2 q(f))$, where $q$ is the probability that two independent random elements of $[N]$ collide under $f$. As a corollary, they provide a query time of $\tilde{O}(t^2)$ for functions with maximum in-degree $\polylog(N)$ (a property that a random function satisfies with high probability in $N$).

Much more recently, Golovnev, Guo, Peters, and Stephens-Davidowitz \cite{golovnevGuPe23} presented a (non-uniform) solution that achieves a nontrivial time bound even for $t \in [N^{1/3}, \sqrt{N}]$ (or, equivalently, $S \in [\sqrt{N}, N^{2/3}]$). In this parameter regime, their result achieves inverse query time $\tilde{O}(t \sqrt{N})$. It remains open whether a nontrivial query time (i.e., with query time $o(N)$) is possible for $S \ll \sqrt{N}$. 

Whereas the original $\tilde{O}(t^3)$ bound by Fiat and Naor hides $\polylog N$ terms, a more recent version of the construction, by Bibbens, Borevitz, and McCauley \cite{BibbensBorevitzMcCauley2026TextIndexingMismatches}, reduces this to a $\polylog t$ term, thereby extending the original construction to offer nontrivial guarantees even for small values of $t$. 


On the lower bound side, the state of the art remains a paper by Yao \cite{Yao90} from $1990$. Yao proved that any solution using space $S = N \log N / t$ bits must incur expected query time $\Omega(t)$. Yao's result applies not only to function inversion, but also the special case of inverting a permutation $\pi: [N] \to [N]$. See also Theorem 14 of \cite{impagliazzo11} for an explicit proof of Yao's lower bound. 

Barkan, Biham, and Shamir \cite{barkan2006rigorous} show that if one restricts to a specific class of data structures (intuitively the class that naturally generalizes the Fiat and Naor solution), then there is a query-time lower bound of $\Omega(t^2 / \log N)$. Corrigan-Gibbs and Kogan \cite{corrigan2019function} show that, in general, any improvement to Yao's $\Omega(t)$ lower bound (for all data structures)
would imply new lower bounds for depth-two circuits with arbitrary gates.

A natural relaxation of function inversion is the problem of building a \emph{single} data structure that answers \emph{both} forward and inverse queries on $f$ (we will call this the \emph{function bi-evaluation problem}). Munro, Raman, Raman, and Rao \cite{munro2012succinct} give a solution using space $(1 + \epsilon) N \log N$ bits while supporting both function-evaluation and function inversion in time $O(\epsilon^{-1})$.\footnote{In fact, their solution supports evaluation of arbitrary powers $f^k(x)$, $k \in \mathbb{Z}$ in $O(\epsilon^{-1})$ time.} This result is achieved via a reduction to storing permutations, which can then be solved using a simple back-pointer construction due to Hellman \cite{Hellman80}. An immediate corollary of our results is a dynamic version of this result (allowing updates to $f$). 

A natural relaxation of function inversion is to build a \emph{single}
data structure that supports both forward evaluation and inverse-image
queries on a static function $f:[N]\to[N]$; we call this the
\emph{function bi-evaluation problem}.  Munro, Raman, Raman, and Rao
\cite{munro2012succinct} give a solution using
$(1+1/t)N\lg N+O(1)$ bits (for $t \in O(\log n / \log \log n)$
which computes positive powers $f^k(x)$ in time $O(t)$ and returns the set of
negative powers $f^{-k}(x)=\{y : f^k(y)=x\}$ 
in time $O(1+|f^{-k}(x)| \cdot t)$ for any integer $k$.
Their construction proceeds by reducing the problem to two other 
static problems: (1) the problem of encoding a permutation and its inverse,
for which Hellman's original construction suffices, and
(2) the problem of efficiently answering level-ancestor queries in a succinct
static rooted tree. It remains an open question whether a dynamic variant
of Munro et al.'s result exists \cite{munro2012succinct}; one corollary
of this paper is that, if one wishes only to perform $f$ and $f^{-1}$ queries
(not $f^k$ for arbitrary $k$), then a dynamic solution is possible.

We remark that, for algorithmic applications (even static ones), function inversion can, in some cases, be much more powerful than function bi-evaluation on its own. Consider, for example, the graph application in Section \ref{sec:dynamic-unordered-graphs}, which encodes a graph $G$ in a function $f_G$. Directly encoding $f_G$ using function bi-evaluation, even in the static setting, would result in an encoding that (1) uses $\approx (m + n) \log (m + n)$ bits, which is a substantial constant factor larger than the optimal $\log \binom{\binom{n}{2}}{m}$ bits (the issue is that the $f_G$ is not actually an \emph{arbitrary} function); and (2) fails to answer either adjacency queries or neighborhood queries (both of which, in our solution, exploit features of our specific data structure for forward-encoding $f_G$). For applications such as this one, the fact that function inversion can be used with an arbitrary (potentially problem-specific) encoding of the forward function $f$ (optimizing both space and functionality over a naive encoding) is critical. 

Whereas early work on function inversion was motivated primarily by cryptographic applications, there has been a significant line of recent works showing how to apply it as an \emph{algorithmic tool} in a variety of other settings \cite{
    KopelowitzPorat2019Strong3SUMIndexing,
    GolovnevEtAl2020DataStructuresMeetCryptography,
    AronovEtAl2023CollinearityIndexing,
    BilleEtAl2024GappedStringIndexing,
    AronovEtAl2024ImplicitSets,
    McCauley2024ANNFunctionInversion,
    MazorPass2024Perebor,
    HiraharaIlangoWilliams2024Compression,
    DinurGolovnev2026Improved3SUMIndexing,
    KirkpatrickEtAl2026Preprocessed3SUMUnknownUniverses,
    BibbensBorevitzMcCauley2026TextIndexingMismatches,
    CharalampopoulosEtAl2026SuffixRandomAccess
}. This phenomenon is one of the primary motivations of the current paper. We are optimistic that, by extending function inversion to be both dynamic and, in the low-time regime, optimal, we will enable additional algorithmic applications in the future.

\section{Preliminaries}

In the function-inversion problem, one is given constant-time oracle access to a function $f:[N] \rightarrow[N]$. One must then construct a data structure which (with help from the $f$-oracle) answers $f$-inverse queries, where for any $y \in [N]$, $\mathsf{Inverse}(y)$ returns an element in $f^{-1}(y)$, if $f^{-1}(y) \neq \emptyset$, or $\bot$ if $f^{-1}(y) = \emptyset$. The space of this inversion data structure is typically denoted by $S = (N \log N) / t$ bits, and the goal is to achieve a small query time $T$ as a function of $t$. In cases where $T = \Theta(t)$, we will often discard the variable $t$ and use use $T$ only (so the space is $S = \Theta((N \log N) / T)$).

In this paper, we will also be interested in the dynamic version of the problem, in which we must support \emph{updates to $f$}. For $x, y \in [N]$, an update $\textsf{Update}(x, y)$ updates $f(x)$ to be $y$. In the dynamic function-inversion problem, the $f$-oracle updates automatically (always serving as an oracle for the current version of $f$), and the goal is to construct a function-inversion data structure that supports both efficient inverse queries and efficient updates. 

In developing dynamic function-inversion data structures, a crucial intermediate step will be to extend function inversion to recover not just a single inverse, but all of the inverses of an element $y$. Specifically, we will be interested in supporting \emph{iterators} which can be used as follows: an iterator for an element $y \in [N]$ can be initialized by calling a function $\mathsf{Iterate}(y)$ which returns an iterator $q$; one can then loop through the inverses $x \in f^{-1}(y)$ one by one with calls to $\mathsf{Next}(q)$, which either returns the next inverse (the order is up to the data structure) or returns $\bot$ if there are no more inverses. Whenever an update occurs, all existing iterators become invalid.

When analyzing our data structures, we will often want to support high-probability guarantees. Specifically, we will use the term \defn{with high probability in $N$} to mean with probability at least $1 - 1 / \poly(N)$ for an arbitrarily large polynomial $\poly(N)$ of our choice. 

Finally, several of our constructions will make use of the following retrieval data structure due to Demaine, Meyer auf der Heide, Pagh, and P\u{a}tra\c{s}cu~\cite{demaine2006dictionariis}:

\begin{proposition}
There is a retrieval data structure that, given a set $S \subseteq [N]$ of at most $M$ elements, and given a function $f: S \rightarrow [K]$ allows one to evaluate $f(x)$ for any $x \in S$ in $O(1)$ time, while using $O(M \log \log N + N \log K)$ bits of space, and supporting updates to both $S$ (via insertion/deletion) and $f$ (when an element inserted/deleted to $S$) in $O(1)$ time, with high probability in $M$. For elements $x \not\in S$, the data structure has no guarantees on what value it returns. \label{prop:retrieval}
\end{proposition}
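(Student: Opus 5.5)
Since Proposition~\ref{prop:retrieval} is quoted from Demaine, Meyer auf der Heide, Pagh, and P\u{a}tra\c{s}cu~\cite{demaine2006dictionariis}, I would reconstruct the standard argument rather than invent a new one. The guiding idea is that a retrieval structure never has to answer membership queries. It therefore does not need to store the keys of $S$; it only needs enough information to tell elements of $S$ apart from one another. I read the space term $N\log K$ as the intended $M\log K$, one value per stored element. The plan is to pay $O(\log\log N)$ bits per element for this ``disambiguation'' and $\log K$ bits per element for the value.

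\textbf{Main structure.} Draw a hash function $h:[N]\to[R]$ from a family with enough independence, where $R = M\log^{c} N$ for a constant $c$. The main structure is a dynamic dictionary whose keys are the hashed values $h(x)$ for $x \in S$, with satellite data $f(x)$. Inside it, I would use quotienting. Split $[R]$ into about $M$ buckets using the high-order bits of a further random permutation of $h(x)$, and store only the remaining $\log(R/M) = O(\log\log N)$ bits explicitly. Each entry then costs $O(\log\log N + \log K)$ bits, giving $O(M\log\log N + M\log K)$ bits in total. To answer a query $x$, compute $h(x)$, locate its bucket, match the stored quotient, and return the associated value. If $x\notin S$, an arbitrary value is returned, which is allowed.

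\textbf{Collisions.} Two elements of $S$ collide under $h$ with probability about $1/R$ per pair. Hence the expected number of elements involved in a collision is at most about $M^2/R = M/\log^{c}N$, and a Chernoff-type bound using limited independence makes this hold with high probability. Such elements are moved to a small secondary dictionary that stores full keys, at $O(\log N + \log K)$ bits each, for $O(M)$ extra bits overall. A query first checks the secondary dictionary, which does store full keys and so answers membership exactly, and then falls back to the main structure. Insertions and deletions update the collision set locally: an insertion that creates a collision moves the affected element(s) into the secondary dictionary, and a deletion that resolves a collision moves the survivor back.

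\textbf{Main obstacle.} The hard step is obtaining \emph{worst-case} $O(1)$ time with high probability, while keeping compact quotiented storage, as $M$ changes. For the bucket layer I would try a de-amortized backyard cuckoo hashing scheme in the style of Arbitman--Naor--Segev. In that scheme, buckets of polylogarithmic capacity are stored with succinct in-bucket encodings manipulated through word-RAM lookup tables, and a small overflow ``backyard'' is handled by de-amortized cuckoo hashing with a constant-size queue. Changes in $M$ by constant factors would be handled by global rebuilding spread incrementally over subsequent operations, keeping two structures alive during a transition. The high-probability guarantee comes from bounding bucket overflows and backyard size by concentration over the random hash functions.
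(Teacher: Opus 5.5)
The paper gives no proof of this proposition; it imports it from Demaine, Meyer auf der Heide, Pagh, and P\u{a}tra\c{s}cu. Your outline is the standard route to that bound: signatures $h(x)$ in a universe of size $M\polylog N$, a quotiented compact dynamic dictionary paying $O(\log\log N+\log K)$ bits per key, and a full-key stash for signature collisions. Your reading of $N\log K$ as $M\log K$ is the one the paper actually relies on; for example, each $\mathcal{R}_k$ is charged $O(N_k\log\log N)$ bits. Taken literally, the statement would already be met by a plain array of $N$ entries of $\lceil\log K\rceil$ bits.

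One step fails as written. Suppose an insertion of $x$ finds $h(x)$ already present in the main structure. You cannot then move the resident element $x'$ into the stash, because the main structure stores only the quotient of $h(x')$, not the key $x'$. The fix is to stash only the newcomer and always consult the stash first. A stashed key is matched exactly, and $x'$, which is not in the stash, still finds its own entry in the main structure.

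Your ``move the survivor back'' rule then needs the stash to be indexed by signature as well. That way, deleting a signature from the main structure can promote a stashed element carrying that signature. This keeps the stash a subset of the \emph{current} collision set, which is what lets your $M^2/R$ estimate apply at every time step of an oblivious update sequence.

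Two smaller remarks. First, $M$ is a fixed bound in the statement, so the global rebuilding for changing $M$ is unnecessary. Second, the sufficiently independent $O(1)$-time hash functions you need (yours and those inside the backyard-cuckoo layer, e.g.\ Siegel's at $N^{\delta}$ bits) fit the space budget only when $M\ge N^{\Omega(1)}$. That is exactly the normalization the paper adopts right after the proposition.
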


Note that, when applying Proposition \ref{prop:retrieval}, we will typically assume implicitly that $M \ge N^{\Omega(1)}$ (that way the probabilistic guarantee is with high probability in $N$). If this is not true, then it can be made so by artificially increasing $M$ to $N^{\Omega(1)}$ (and in our applications this will have negligible impact on overall space).

\section{Proving Theorem \ref{thm:main}}\label{sec:main}

To prove Theorem \ref{thm:main}, we present a sequence of constructions that each achieve stronger and stronger guarantees. This lets us isolate out each of the algorithmic ingredients that are needed for the full construction.

\subsection{Achieving Query Time $O(T)$ and Space $O(N \log N / T)$ Bits}

We begin in this section by presenting a simple function-inversion data structure that, in the regime of $T = O(\log N / \log \log N)$, achieves the optimal query-time/space tradeoff, supporting query time $O(T)$ and space $O(N \log N / T)$ bits. As we will discuss, this initial data structure is still far from achieving Theorem \ref{thm:main}: Its construction time is $O(N \polylog N)$, it does not dynamize, it offers only a single inverse for each invertible element, and it assumes free randomness. Nonetheless, it will serve as a good starting point for the development of the full construction.

\begin{theorem}
Let $f: [N] \to [N]$ be a function which we have $O(1)$-time oracle access to, and assume free randomness. For $T = O(\log N / \log \log N)$, we can construct in $O(N \polylog N)$ time a data structure that supports $O(T)$-time $f$-inverse queries, and that uses space $O(N \log N / T)$ bits, where the construction succeeds with high probability in $N$.
\label{thm:startingpoint}
\end{theorem}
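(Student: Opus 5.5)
The approach is a Hellman-style chaining scheme in which each element $y$ in the image $I=f([N])$ stores only $k=\Theta(\log N/T)$ bits of ``advice'' in the retrieval structure of Proposition~\ref{prop:retrieval}. Taking $K=2^k$ and $M=|I|$, that structure costs $O(N\log\log N+Nk)$ bits. This is $O(N\log N/T)$ exactly because $T=O(\log N/\log\log N)$, which is where the restriction on $T$ enters. The advice at $y$ will determine a domain element $x'$ whose image $f(x')$ is the next element of a cycle through $y$. Walking the cycle with $O(T)$ oracle calls therefore visits a preimage of every element on it, in particular one of $y$.

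\textbf{The construction.} Use free randomness to pick a uniformly random permutation $\pi$ of $[N]$ and a random hash $h:[N]\to[N/D]$, where $D=2^k=N^{\Theta(1/T)}\ge(\log N)^{\Omega(1)}$. Define bucket $j$ to be the set $B_j=\{\pi^{-1}(jD+r): r\in[D]\}$, which is a random set of size $D$ enumerable in $O(1)$ time per element. The advice $r_y\in[D]$ stored at $y$ names the element $\mathrm{nxt}(y)=\pi^{-1}(h(y)D+r_y)\in B_{h(y)}$. This defines a random digraph $H$ on $I$ with an edge $y\to f(x)$ for each $x\in B_{h(y)}$. The goal is to choose one out-edge per vertex so that almost all of $I$ is partitioned into directed cycles $y_1\to y_2\to\cdots\to y_\ell\to y_1$ with $T\le\ell\le 2T$. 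In such a cycle, $r_{y_i}$ selects some $x_{i+1}\in B_{h(y_i)}$ with $f(x_{i+1})=y_{i+1}$. The vertices left uncovered, which we aim to number at most $N/T$, are stored explicitly in a dictionary mapping $y\mapsto x$ using $O(\log N)$ bits each, so $O(N\log N/T)$ bits in total.

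\textbf{The query.} On input $y$, first check the explicit dictionary. Otherwise iterate $x\leftarrow\mathrm{nxt}(y)$, $y\leftarrow f(x)$ for at most $2T$ steps, and return the current $x$ as soon as $f(x)$ equals the original $y$. Because the retrieval structure returns garbage for $y\notin I$, the walk may wander, but every candidate is verified against the oracle. If no verification succeeds within $2T$ steps, we return $\bot$, which is correct because every $y\in I$ not in the dictionary lies on a cycle of length at most $2T$. The query time is $O(T)$.

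\textbf{The main obstacle} is the combinatorial step: showing that w.h.p.\ $H$ admits such a near-cycle-cover with short cycles, and finding it in $O(N\polylog N)$ time, for \emph{arbitrary} $f$. Skewed preimage sizes make $H$ non-uniform: every $y'\in I$ has in-degree at least about $|f^{-1}(y')|$ in expectation, but popular images soak up many edges.

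My first attempt is a greedy random-walk chaining process. Repeatedly start at an uncovered $y$ and extend the path by choosing an out-edge to a fresh uncovered vertex; after $T$ steps, close the cycle by finding an edge back to the start, allowing up to $T$ additional steps. Since $D\ge(\log N)^{c}$, each step has many candidates, so the extension fails only when a $1-1/T$ fraction of vertices is already covered. The key case of the analysis is closing the cycle, which requires hitting a specific vertex. For this I would reserve a random set of about $N/T$ ``hub'' vertices and route cycles through them, or alternatively close via a bipartite matching between path ends and path starts, using the expansion of $H$ proved by a Chernoff-plus-union-bound argument. Any residue beyond $N/T$ vertices is charged to the explicit dictionary. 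A matching computation on $N$ vertices with polylogarithmic degree gives the $O(N\polylog N)$ construction time.
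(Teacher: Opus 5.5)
\textbf{There is a genuine gap.} Your encoding and query are sound \emph{provided} the near-cycle-cover exists. On your route, though, that cover is the entire difficulty. You leave it unproven, and none of the three strategies you sketch works as stated.

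With only a $k$-bit out-pointer per vertex and nothing else stored, every covered $y$ must lie on a closed cycle of length at most $2T$. So each cycle has to return to one \emph{specific} vertex $y_1$:
\begin{itemize}
\item \emph{Greedy walk.} A walk through uncovered vertices hits $y_1$ with probability only about $D\,|f^{-1}(y_1)|/N$ per step, so it essentially never closes within $T$ extra steps.
\item \emph{Targeted search.} A short path from $y_T$ back to $y_1$ through uncovered vertices plausibly exists, since $D^{T}=N^{\Theta(1)}$. But the natural bidirectional search explores on the order of $\sqrt{N}$ vertices per cycle (birthday bound), about $N^{3/2}/T$ in total, far beyond $O(N\polylog N)$. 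You would also still have to prove that such disjoint closing paths keep existing until only $N/T$ vertices remain.
\item \emph{Hubs.} These do not help: a hub is itself a vertex with a single $k$-bit pointer, so it faces the same problem of returning to a specific start.
\item \emph{Matching path ends to path starts.} This does close everything, but it produces a permutation on the paths whose cycle lengths nothing controls. For a random-like matching, the longest cycle typically contains a constant fraction of all paths, which breaks your $2T$-step query bound.
\end{itemize}

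\textbf{The missing idea: pay for the closing edge explicitly instead of realizing it inside $H$.}
\begin{itemize}
\item \emph{Structure.} Build vertex-disjoint \emph{chains} of length $T$, with no cycles. Store a table $\mathcal{C}$ mapping $f(e_i)$, the image of each chain's last member, to its first member $s_i$. This costs $O(\log N)$ bits per chain, hence $O(N\log N/T)$ bits over the at most $N/T$ chains.
\item \emph{Query.} Walk forward from $y$ for at most $T$ steps until you hit a key of $\mathcal{C}$. Then replay that chain from $s_i$ and return the member $w$ with $f(w)=y$. Because each element is covered at most once, the first key you hit belongs to $y$'s chain.
\item \emph{Advice.} Once the back edge is paid for, the per-element advice only needs to select one of $B=(\log N)^{\alpha_1}$ random candidates, which is $O(\log\log N)$ bits.
\item \emph{Why greedy extension succeeds w.h.p.} While at least $N/\log N$ invertible elements are uncovered, each candidate $g_k(f(z))$ is a uniform domain element. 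Its image is uncovered with probability at least $1/(2\log N)$, so all $B$ candidates fail with probability at most $(1-1/(2\log N))^{B}\le 1/\poly(N)$.
\item \emph{Leftovers.} Fewer than $N/\log N$ elements remain uncovered; they go into an explicit table of $O(N)$ bits.
\end{itemize}
This is the paper's proof. With the table added, your scheme works, and the $\Theta(\log N/T)$-bit advice and the cycle structure become unnecessary.
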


The construction (both in this section and in later section) will make use of a set of positive constants $\alpha_1, \alpha_2, \alpha_3$, where $\alpha_1$ is a sufficiently large positive constant, $\alpha_2$ is a sufficiently large constant even compared to $\alpha_1$, and $\alpha_3$ is a sufficiently large constant even compared to $\alpha_1, \alpha_2$.

Our data structure will utilize $B = (\log N)^{\alpha_1}$ uniformly random hash functions $g_1, g_2, \ldots, g_B: [N] \to [N]$. We also store an array $\mathcal{D}$, which contains for each $y \in [N]$, a choice $j \in [B]$ of hash function to apply to $y$. This requires $O(N \log B) = O(N \log\log N)$ bits. This choice of $j$ will be determined during the construction process, and the resulting $g_j(y)$ will be denoted simply as $g(y)$. We will further define a function $h = (g \circ f)$ so that, for a preimage $x$, we have $h(x) = g(f(x)) = g_{\mathcal{D}[f(x)]}(f(x))$.

\paragraph{The basic data structure.}
The construction will construct a set of disjoint \defn{chains} of length $T$, where the members of a chain are preimages. Each chain $i$ starts at some preimage $s_i \in [N]$, ends at some preimage $e_i \in [N]$, and consists of preimages $s_i, h(s_i), h^2(s_i), \ldots, h^{T - 1}(s_i) = e_i$. We store a lookup table $\mathcal{C}$ mapping, for each chain $i$, the value $f(e_i)$ to the corresponding starting preimage $s_i$.

Say that an element $y$ is \defn{covered} by a chain if it has at least one preimage that is a member of the chain. Our construction will ensure that every covered element is covered just once, by a single one of its preimages in a single chain. We will design the chains in such a way that all but $O(N / \log N)$ invertible elements are covered. The uncovered invertible elements will have an inverse stored directly in a look-up table $\mathcal{H}$ of size $O(N \log N / \log N)$ bits.

Assuming a data structure of this form, inverse queries can be answered as follows, in $O(T)$ time (see, also, Algorithm \ref{alg:query}). To find an inverse for an element $y \in [N]$ (or determine that $y$ is not invertible), we proceed as follows. First we check if the inverse is stored in $\mathcal{H}$. Barring this, we proceed as follows. We compute $y, f(g(y)), f(g(f(g(y)))), \ldots$, for $T$ total values in the sequence (with the help of $\mathcal{D}$ to determine which $g_i$ to use in each step), and if any of these appear as a key in $\mathcal{C}$, then we take the first such key and use $\mathcal{C}$ to recover the corresponding starting preimage $s_i$. We then replay the chain $s_i, h(s_i), h^2(s_i), \ldots, h^{T - 1}(s_i)$, checking whether $f$ of any chain member is equal to $y$. If $y$ is covered by a chain, then (since no element is covered by more than one chain member), this replay will encounter the unique chain member $w$ satisfying $f(w)=y$, and we return $w$ as an inverse for $y$. Finally, if we did not find any key in $\mathcal{C}$ or $y$ was not covered by the resulting chain, then we can conclude that $y$ is not covered by any chain. Since $y$'s inverse was also not stored in $\mathcal{H}$, this means that $y$ is not invertible.

\begin{algorithm}[t]
    \caption{\(\mathsf{Query}(y)\)}
    \label{alg:query}
    \begin{algorithmic}[1]
    \Function{Replay}{$y,s$}
        \Comment{\(s\) is the first preimage in the chain}
        \State \(w \gets s\)
        \State \(z \gets f(w)\)
    
        \For{\(r=0,\ldots,T-1\)}
            \If{\(z=y\)}
                \State \Return \(w\)
            \EndIf
            \State \(k \gets \mathcal{D}[z]\)
            \State \(w \gets g_k(z)\)
            \Comment{Now \(w\) is the next chain member}
            \State \(z \gets f(w)\)
        \EndFor
    
        \State \Return \(\bot\)
    \EndFunction
    
    \Statex
    
    \If{\(\mathcal{H}\) contains key \(y\)}
        \State \Return \(\mathcal{H}[y]\)
    \EndIf
    
    \State \(z \gets y\)
    
    \For{\(r=0,\ldots,T-1\)}
        \If{\(\mathcal{C}\) contains key \(z\)}
            \State \(s \gets \mathcal{C}[z]\)
            \Comment{\(s\) is the first preimage in the chain}
            \State \Return \Call{Replay}{$y,s$}
        \EndIf
        \State \(k \gets \mathcal{D}[z]\)
        \State \(w \gets g_k(z)\)
        \State \(z \gets f(w)\)
    \EndFor
    
    \State \Return \(\bot\)
    \end{algorithmic}
    \end{algorithm}

\paragraph{Constructing the chains.} We can construct the chains as follows. We construct the chains one by one, until there are fewer than $N/\log N$ invertible elements remaining that are not covered by chains. As we proceed, we keep track for each element if it is covered by a chain. To construct a new chain, we first find an arbitrary preimage $x$ such that $f(x)$ is not yet covered by any chain (we do this by repeatedly randomly sampling $x$s until we find one for which $f(x)$ is not yet covered). 

We then build the chain $s_i = y$, $h(s_i), h^2(s_i), \ldots, h^{T - 1}(s_i) = e_i$, as follows, in such a way that the elements covered by the chain are disjoint from the elements covered by all prior chains. Given a preimage $z$ in the chain, we determine the next preimage $h(z) = g(f(z))$ by determining which $g_k$ to apply to $f(z)$ (and storing $k$ in $\mathcal{D}[f(z)]$). To select which $g_k$ to use, we try each of the $g_k$s, $k = 1, 2, \ldots, B$, in order, until we find one such that $f(g_k(f(z)))$ is not yet covered by any chain (if no $g_k$ exists, we declare failure for the entire construction). Note that, since $f(z)$ has never previously been covered by any prior chain, there is no risk that some prior chain has already decided which $g_k$ to use for $f(z)$.

Finally, having constructed the chain, we update \(\mathcal{C}\) to map
\(f(e_i)\) to the starting preimage \(s_i\). We continue this process until there are fewer than $N/\log N$ invertible elements remaining that are not covered by chains. (We may keep track of the number of such elements by calculating at the start the number of invertible elements $R$, and subtracting the number of such elements covered by chains.) After constructing the chains, we spend $O(N)$ time building a lookup table $\mathcal{H}$ to answer the inverse queries for all invertible elements not yet covered by chains. By construction, this table uses $O(N \log N / \log N)$ bits.

Since this construction will serve as the starting point for all of our constructions in later sections, we include pseudocode for it in Algorithm \ref{alg:construction}.

\begin{algorithm}[H]
    \caption{\(\mathsf{Construct}(f,g_1,\ldots,g_B,T)\)}
    \label{alg:construction}
    \begin{algorithmic}[1]
    \State Initialize \(\mathcal{D}[y]\gets1\) for all \(y\in[N]\) \Comment{This is the default value used when $\mathcal{D}[y]$ is not yet set}
    \State Initialize empty lookup tables \(\mathcal{C}\) and \(\mathcal{H}\)
    \State Initialize \(\mathsf{covered}[y]\gets\mathsf{false}\) for all \(y\in[N]\)
    \State Compute \(R \gets |\{y\in[N] : f^{-1}(y) \neq \emptyset\}|\)
    \Comment{Number of invertible elements}
    
    \Statex
    
    \Function{FindStart}{}
        \Repeat
            \State Sample \(x\in[N]\) uniformly at random
            \State \(y\gets f(x)\)
        \Until{\(\mathsf{covered}[y]=\mathsf{false}\)}
        \State \Return \(x\)
        \Comment{\(f(x)\) is not yet covered}
    \EndFunction
    
    \Statex
    
    \Function{ChooseExtension}{$z$}
        \For{\(k=1,\ldots,B\)}
            \State \(x\gets g_k(f(z))\)
            \State \(y'\gets f(x)\)
    
            \If{\(\mathsf{covered}[y']=\mathsf{false}\)}
                \State \Return \((k,x,y')\)
                \Comment{\(x\in f^{-1}(y')\)}
            \EndIf
        \EndFor
    
        \State \Return \((\bot,\bot,\bot)\)
    \EndFunction
    
    \Statex
    
    \While{\(R\ge N/\log N\)}
        \State \(s\gets \Call{FindStart}{}\)
        \State \(\mathsf{covered}[f(s)] \gets \mathsf{true}\)
        \State $R \gets R - 1$
        \State \(z\gets s\)
    
        \For{\(r=1,\ldots,T-1\)}
            \State \((k,w,y')\gets \Call{ChooseExtension}{z}\)
    
            \If{\(k=\bot\)}
                \State \textbf{abort}
            \EndIf
    
            \State \(\mathcal{D}[f(z)]\gets k\)
            \State \(z\gets w\)
            \State $\mathsf{covered}[y'] \gets \mathsf{true}$
            \State $R \gets R - 1$
        \EndFor
    
        \State \(e\gets z\)
        \State \(\mathcal{C}[f(e)]\gets s\)
    \EndWhile
    
    \For{\(x=1,\ldots,N\)}
        \If{\(\mathsf{covered}[f(x)]=\mathsf{false}\)}
            \State \(\mathcal{H}[f(x)]\gets x\)
            \State $\mathsf{covered}[f(x)] \gets \mathsf{true}$
        \EndIf
    \EndFor
    
    \State \Return \((\mathcal{D},\mathcal{C},\mathcal{H})\)
    \end{algorithmic}
    \end{algorithm}

\paragraph{Analysis. } By construction, the data structure $\mathcal{D}$ uses $O(N \log B) = O(N \log\log N)$ bits. The lookup table $\mathcal{C}$ uses $O(\log N)$ bits per chain. Since the chains are disjoint and length $T$, the lookup table $\mathcal{C}$ uses total space $O(N (\log N)/T)$ bits. The lookup table $\mathcal{H}$ stores at most $O(N / \log N)$ inverses, and therefore uses space $O(N)$ bits. Therefore, the total space of the data structure is $O(N \log N / T)$ bits.

The query time is $O(T)$ in the worst case. So it remains to analyze both the success probability and the running time of the construction.
\begin{lemma}
With high probability in $N$, the construction succeeds.
\label{lem:construction-success}
\end{lemma}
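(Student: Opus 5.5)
The only way the construction can fail is an \textbf{abort} inside \textsc{ChooseExtension}: for some chain member $z$, all $B$ candidates $g_1(f(z)),\dots,g_B(f(z))$ map under $f$ to already-covered elements. (\textsc{FindStart} always terminates, since $R \ge N/\log N > 0$ guarantees an uncovered invertible element exists.) So the plan is to bound the probability that a single call to \textsc{ChooseExtension} fails, and then take a union bound over the at most $N$ calls, since each call covers a fresh element.

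The key point is that the hash values used in a call have never been looked at before. When \textsc{ChooseExtension} is called on $z$, the element $y=f(z)$ has only just become covered. Before this moment, no step of the algorithm ever evaluated any $g_k(y)$: the algorithm only evaluates $g_k(y')$ when $y'=f(z')$ for a chain member $z'$, and coverage of such a $y'$ is recorded at the point it joins a chain. Likewise, \textsc{FindStart} uses only fresh randomness that is independent of the $g$'s. Hence, conditioned on the entire history so far, the values $g_1(y),\dots,g_B(y)$ are independent and uniform on $[N]$.

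Next I bound the fraction of covered preimages, meaning those $x$ with $f(x)$ covered. Let $I$ be the set of invertible elements, so $|I| = R_0$ initially. While the loop runs, at least $N/\log N - T$ elements of $I$ remain uncovered; the $-T$ accounts for at most one partially built chain. Each uncovered $y\in I$ has at least one preimage, so at least $N/\log N - T \ge N/(2\log N)$ preimages $x$ satisfy $\mathsf{covered}[f(x)]=\mathsf{false}$. Each $g_k(y)$ therefore lands on an uncovered preimage with probability at least $1/(2\log N)$. By independence, the probability that all $B=(\log N)^{\alpha_1}$ trials fail is at most
\[
\left(1-\frac{1}{2\log N}\right)^{(\log N)^{\alpha_1}} \le \exp\!\left(-\tfrac12(\log N)^{\alpha_1-1}\right),
\]
which is $N^{-\omega(1)}$ for $\alpha_1\ge 3$. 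A union bound over the at most $N$ calls gives failure probability $N^{-\omega(1)}$, which is with high probability in $N$.

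The main obstacle is making the independence claim rigorous, since the set of covered elements at the time of a call depends on earlier hash values. I would handle this with the principle of deferred decisions: reveal $g_k(y)$ only when it is first queried, and argue inductively that each $y$ has its hash values queried during only one \textsc{ChooseExtension} call. This holds because $\mathcal{D}[y]$ is set only when $y$ is uncovered-then-covered, so $y$ is never the image of a chain member twice. One subtlety is that within a single call, the trials $g_1(y),\dots,g_B(y)$ are checked against a fixed covered set, which is exactly the setting of the bound above.
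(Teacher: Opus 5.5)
Your proposal is correct and follows essentially the same argument as the paper: at every extension step at least $N/(2\log N)$ uncovered invertible elements remain, so each of the $B$ independent candidates $g_k(f(z))$ succeeds with probability at least $1/(2\log N)$, and the failure probability is at most $(1-1/(2\log N))^B \le 1/\poly(N)$. Your deferred-decisions argument for independence (each $y$ has its hash values queried in at most one call) and your explicit union bound over at most $N$ calls make rigorous two points that the paper leaves implicit.
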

\begin{proof}
The only way for the construction to fail is if, when trying to determine which $g_k$ to use for a preimage $z$, we exhaust all $B$ hash functions without finding one such that $f(g_k(f(z)))$ is uncovered. 

By construction, when we started building the chain, there were at least $N / \log N$ uncovered invertible elements. The chain consumes at most $T = O(\log N / \log \log N)$ of these, so there remain at least, say, $N / (2 \log N)$ uncovered invertible elements. Therefore, each $g_k$ independently has probability at least $(N / (2 \log N)) / N = 1 / (2 \log N)$ of satisfying the condition that $f(g_k(f(z)))$ is uncovered. The probability of no $g_k$ satisfying this property is therefore at most $(1 - 1 / (2 \log N))^B \le 1 / \poly(N)$.
\end{proof}

Having established that the construction succeeds, we also bound its running time as follows:
\begin{lemma}
With high probability in $N$, the construction takes time $O(N \polylog N)$.
\label{lem:NpolylogN}
\end{lemma}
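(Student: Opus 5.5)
We bound the running time of Algorithm~\ref{alg:construction} by splitting it into its parts: the initialization, the calls to \textsc{FindStart}, the calls to \textsc{ChooseExtension}, and the final pass that builds $\mathcal{H}$. The deterministic parts are immediate. Initializing $\mathcal{D}$, $\mathsf{covered}$, and computing $R$ (by marking $f(x)$ for each $x$ in a bit array and counting) take $O(N)$ time. The final loop over $x \in [N]$ takes $O(N)$ time, assuming $\mathcal{C}$ and $\mathcal{H}$ are hash tables with $O(1)$ operations w.h.p.\ (or, since this is a static setting, plain arrays indexed by $[N]$ built once at the end). Each chain has $T$ members, and every member covers a fresh invertible element. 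So the number of chains is at most $N/T$, and the number of \textsc{ChooseExtension} calls is at most $N$. Each such call tries at most $B = (\log N)^{\alpha_1}$ hash functions at $O(1)$ cost each, for a total of $O(NB) = O(N \polylog N)$. This holds deterministically, conditioned on no abort, which Lemma~\ref{lem:construction-success} handles.

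The remaining, and main, step is \textsc{FindStart}, which uses rejection sampling. Each sample is a uniformly random $x \in [N]$. While the loop runs, we have $R \ge N/\log N$ uncovered invertible elements, each with at least one preimage. Hence each sample succeeds, meaning $f(x)$ is uncovered, with probability at least $(N/\log N)/N = 1/\log N$, independently of the past given the current state. The number of samples across all at most $N/T \le N$ calls is therefore stochastically dominated by the number of trials needed to obtain $N$ successes in independent Bernoulli$(1/\log N)$ trials.

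By a Chernoff bound, $cN\log N$ trials yield at least $N$ successes except with probability $e^{-\Omega(N)}$, for a suitable constant $c$. Alternatively, one can argue per call: a single call exceeds $\alpha_2 \log^2 N$ samples with probability at most $(1-1/\log N)^{\alpha_2\log^2 N} \le N^{-\Omega(\alpha_2)}$, and a union bound over at most $N$ calls finishes the argument. Either way, \textsc{FindStart} costs $O(N\polylog N)$ in total w.h.p.

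The obstacle is making the domination argument rigorous, since the coverage state depends on earlier randomness, including the $g_k$. To handle this, we condition on the full history before each sample. The fresh sample $x$ is independent of that history, and the success probability is at least $1/\log N$ whenever the loop is still running. This gives a martingale/stochastic-domination argument, or makes the per-call union bound directly valid. Combining with Lemma~\ref{lem:construction-success} via a union bound yields total time $O(N\polylog N)$ with high probability in $N$.
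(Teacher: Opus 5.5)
Your proposal is correct and follows essentially the same route as the paper: $O(B)$ time per \textsc{ChooseExtension} call summed over at most $N$ chain members, rejection sampling in \textsc{FindStart} that succeeds with probability at least $1/\log N$ per sample (so each call takes $\polylog N$ samples with high probability, summed over at most $N/T$ chains), and $O(N)$ time for everything else. Your explicit conditioning on the history before each sample, followed by the union bound over calls, spells out what the paper's one-line ``with high probability'' claim leaves implicit.
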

\begin{proof}
The time to determine which $g_k$ to use for a preimage $z$ is $O(B) = O((\log N)^{\alpha_1})$. Therefore, the total time spent choosing $g_k$s for all chain members is $O(N \polylog N)$. Additionally, each chain must \emph{find} a starting preimage $x$ by sampling random $x$s until we find one for which $f(x)$ is not yet covered. Since there are at least $N / \log N$ such $x$s remaining, this sampling step takes $\polylog N$ time with high probability in $N$. Summing over the chains, this contributes another $O(N \polylog N)$ time. All other steps in the construction are $O(N)$ time, so the total construction time is $O(N \polylog N)$.
\end{proof}

Combining these guarantees gives Theorem \ref{thm:startingpoint}.

\subsection{Achieving Linear Construction Time}\label{sec:linear}

In this section, we show how to modify the data structure above to achieve $O(N)$ time for preprocessing. 

\begin{theorem}
Let $f: [N] \to [N]$ be a function which we have $O(1)$-time oracle access to, and assume free randomness. For $T = O(\log N / \log \log N)$, we can construct in linear time a data structure that supports $O(T)$ $f$-inverse queries, and that uses space $O(N \log N / T)$ bits, where the construction succeeds with high probability in $N$.
\label{thm:linear}
\end{theorem}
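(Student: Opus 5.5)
The plan is to keep the data structure of Theorem~\ref{thm:startingpoint} unchanged: the arrays $\mathcal{D}$, $\mathcal{C}$, $\mathcal{H}$, the chains, and the query algorithm all stay as they are. Only the construction algorithm changes, so the space bound and the $O(T)$ query time carry over verbatim. The superlinear cost in Lemma~\ref{lem:NpolylogN} has two sources. \textbf{FindStart} resamples until it hits an uncovered image. \textbf{ChooseExtension} scans up to $B=(\log N)^{\alpha_1}$ hash functions. I will remove each so that the expected (and w.h.p.) work per chain member is $O(1)$.

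\emph{Step 1 (stopping earlier).} I will stop building chains once the number $R$ of uncovered invertible elements drops below $N/c$ for a suitable constant $c$, rather than $N/\log N$. This makes $\mathcal{H}$ store $\Theta(N)$ entries, i.e.\ $\Theta(N\log N)$ bits, which is too many. To fix this, I will recurse: restrict $f$ to the uncovered images, i.e.\ invert the function $f$ on the set $X'$ of preimages whose images remain uncovered. I will run the same chain-building on the residual instance, with domain a geometrically shrinking set. After $O(\log\log N)$ rounds the residual is below $N/\log N$ and is stored directly in $\mathcal{H}$. Summing the geometric series gives linear total work.

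A simpler alternative, which I would try first, is to keep a single phase. Throughout the phase in which at least $N/c$ elements are uncovered, each $g_k$ succeeds with constant probability, so \textbf{ChooseExtension} takes $O(1)$ expected time and $O(\log N)$ w.h.p. Likewise, \textbf{FindStart} succeeds in $O(1)$ expected samples. The total work is a sum of $N$ geometric variables with constant mean. A Chernoff-type bound for such sums, with the $g_k$ fresh random choices for distinct $f(z)$, gives $O(N)$ w.h.p.

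\emph{Step 2 (the tail phase).} While the uncovered fraction is between $1/\log N$ and $1/c$, the per-step cost is $\Theta(N/R)$. Summed over this phase it gives $\sum_R N/R \cdot(\text{elements covered at that }R) \approx N\ln(\log N)$. That is $O(N\log\log N)$, which is still not linear. To handle it, I will use the recursion above. Once $R<N/c$, I will explicitly collect the list $L$ of uncovered invertible elements $y$ together with a preimage of each, via one $O(N)$ pass over $[N]$. I then build chains using hash functions into $L$ (hashing $y$ to an index in $L$, i.e.\ $g_k:[N]\to[|L|]$ followed by the stored preimage) instead of into $[N]$. Success probability is then again constant per trial, until the uncovered fraction of $L$ drops by a constant factor, at which point I recompact.

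The problem is that the compacted list must be accessible at query time. Storing a preimage per element of $L$ costs $|L|\log N$ bits. That is fine only if $|L|=O(N/T)$. So I need the first phase to reach $R=O(N/T)$, which costs $N\log T$ work.

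\emph{Main obstacle.} The expected hardest step is exactly this: making the late stage linear without storing large explicit tables. What I would try first is to make the recursion levels' lists themselves chain-encoded. At level $i$ with $|L_i| = N/2^i$, the chains built cost $\log N$ bits per $T$ members, and the list $L_{i+1}$ is stored implicitly as the starting points of level-$(i+1)$ chains. Only the final level (size $N/\log N$) goes into $\mathcal{H}$. Queries follow at most one chain per level, but w.h.p.\ a $y$ can be resolved by checking levels in order, each in $O(T)$. To keep total query time $O(T)$, I would shorten chain length at deep levels or use $\mathcal{D}$ to also record the level of $y$ ($O(\log\log\log N)$ extra bits). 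I would then verify that the work per level is $O(|L_i|)$ w.h.p. by the geometric-sum Chernoff argument, with the union bound over $O(\log\log N)$ levels.
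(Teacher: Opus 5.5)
\textbf{There is a genuine gap.} Your diagnosis of the bottleneck is correct. With independent $g_k$'s, each probe succeeds with probability about $R/N$, so the stretch in which the uncovered fraction falls from $1/c$ to $1/\log N$ costs $\Theta(N\log\log N)$ probes. That already shows the structure of Theorem~\ref{thm:startingpoint} cannot be kept unchanged, contrary to your opening sentence. The remedy you then sketch is not yet a proof:
\begin{itemize}
\item Hashing into a compacted list $L_i$ needs constant-time random access from an index to a preimage, at query time, for \emph{every} chain member. Storing $L_{i+1}$ ``implicitly as the starting points of level-$(i+1)$ chains'' gives no such access.
\item Checking $\Theta(\log\log N)$ levels in order makes a query cost $O(T\log\log N)$, not $O(T)$.
\item Recording levels in $\mathcal{D}$ is only mentioned. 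You give no accounting of per-level space, construction time, or success probability.
\end{itemize}
Your cost estimate for $L_i$ is also too pessimistic. $L_i$ is a subset of $[N]$, so a constant-time select structure (e.g.\ Elias--Fano) stores it in $O(|L_i|\log(N/|L_i|))$ bits, which is $O(N\log\log N)$ summed over levels. With that, plus a level tag on each covered image so that a query touches one level, your recursion could plausibly be completed. That would be a different argument from the one you wrote.

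\textbf{The missing idea} is the one the paper uses to avoid recursion altogether: correlate the candidates.
\begin{itemize}
\item A random permutation $\pi$ cuts $[N]$ into $N/B$ bins of size $B$. Define $g_1(u),\ldots,g_B(u)$ to be the elements of bin $r(u)$, so all images hashed to the same bin share a single candidate list.
\item Before building chains, fix a unique target preimage $\hat{f}^{-1}(y)$ for each invertible $y$, and only ever use free targets as chain members.
\item An element that is not a free target never becomes one again. So each bin can keep a counter that only moves forward, and the total probing is at most $N$ deterministically. A chain start is found by picking a random bin and advancing its counter.
\end{itemize}
Randomness is then needed only for success. Lemma~\ref{lem:success} compares the roughly $RB/N$ targets that $\pi$ places in a bin with the at most $RB/N-\Omega(B/\log^3 N)$ targets consumed from it (chains stop when $N/\log^3 N$ free targets remain). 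Chernoff bounds show that no bin runs dry w.h.p. The query algorithm and the space bound are essentially those of Theorem~\ref{thm:startingpoint}, with $\mathcal{D}[y]$ now an index into a bin.
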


The main challenge is to select, for a given preimage $x$ in a chain, which $g_i$, $i \in [B]$, to use in $h(x) := g_i(f(x))$. This choice must be made in $O(1)$ (amortized) time per element, and must be made in a way that avoids any collisions between different chains. 

\paragraph{Step 1: Imposing structure on the $g_i$s. } The main idea in this section is to redefine the $g_i$s as follows. Let $\pi$ be a uniformly random permutation of $[N]$, let $r: [N] \to [N/B]$ be a uniformly random hash function, and assume we can access both $\pi$ and $r$ in $O(1)$ time. (We will show in Section \ref{sec:random} how to simulate $\pi$ and $r$ using $o(N)$ bits, but for now we will assume they are fully random.) Using $\pi$, we define \defn{bins} of size $B$, where bin $i$ consists of $\pi((i-1) B + 1), \pi((i-1) B + 2), \ldots, \pi(i B)$ for all $i = 1, 2, \ldots, N/B$. As a shorthand, we refer to element $\pi((i-1) B + j)$ as the $j$-th element of bin $i$. Then, for a given input $u$, we define $g_i(u)$ to be the $i$-th element of bin $r(u)$, given by $g_i(u) = \pi((r(u) - 1) B + i)$.

\paragraph{Step 2: Targeting specific inverses. } The second change to the data structure is a bit more subtle, as its role only becomes clear in the analysis. Before building the chains, we select for each invertible element $y$ a \emph{unique} \defn{target inverse} $\hat{f}^{-1}(y) \in f^{-1}(y)$. We do this in a way that is independent of both $r$ and $g$, and that uses an auxiliary data structure of size $O(N)$ bits: we loop through the elements $x \in [N]$, check for each $x$ if $y = f(x)$ has already found an inverse, and if it has not, we set $\hat{f}^{-1}(y) = x$. This can be done using two additional arrays of size $N$ to keep track of both which elements $y$ have found target inverses, and which elements $x$ have been selected as target inverses for some $y$.

\paragraph{Step 3: Building the chains. } Finally, we can now describe the updated procedure for building a chain. Each bin $i$ stores an $O(\log \log N)$ bit counter $c_i \in [B]$, initialized to $1$. We also store a bit map indicating which elements $i \in [N]$ are target inverses that have not yet been included in a chain (call these \defn{free targets}). Given a preimage $x$ in the chain, to determine which $g_i$ to use, we proceed as follows: check whether the $c_{r(f(x))}$-th element of the bin is a free target; if it is not, we increment $c_{r(f(x))}$, and try again; and so on, until we either find a free target or exhaust the bin (meaning that $c_{r(f(x))}$ is equal to the number of elements in the bin). If we exhaust the bin, we declare failure (we will prove that this happens with very low probability). Otherwise, we set $g_i(f(x))$ to use $i = c_{r(f(x))}$, meaning that $g_i(f(x))$ is a free target $x'$, and we increment $c_{r(f(x))}$ one more time (again declaring failure if it becomes an invalid bin index); finally, we also update our bit map of free target inverses to mark $x'$ as no longer free, and we store the choice $i$ in $\mathcal{D}[f(x)]$. Having computed $x'$, we use $x'$ as the next member of the chain, and this new member covers $f(x')$.

Finally, there is the issue of identifying the \emph{first element} $s_i$ of a chain. In order to start a new chain, we select a random bin $b \in [N/B]$, and use the same process as above to find an entry $x$ in the bin that is a free target (as before, we increment the counter $c_b$ as we go, and declare failure if we exhaust the bin). We then set $s_i = x$ as the starting preimage of the chain, and we consider $x$ to no longer be a free target.

We stop building chains once there are $N / \log^3 N$ free targets remaining. Note that we could reasonably stop when there are $N / \log N$ left, as in the previous section, but going until we get to $N / \log^3 N$ will be helpful when we modify the data structure in the next section. For each such free target $x$, we store the fact that $x$ is an inverse of $f(x)$ explicitly in the look-up table $\mathcal{H}$ of size $O(N \log N / \log^3 N) = O(N / \log^2 N)$ bits.

\paragraph{Analysis. }By construction, the data structure uses $O(N \log \log N)$ bits for various data structures, along with $O(N \log N / T)$ bits to store the values $f(e_i)$ and the starts of chains. Since $T = O(\log N / \log \log N)$, this gives a total space of $O(N \log N / T)$ bits.

To reason about construction time, observe that the total time spent selecting $g_i$s is equal, up to constant factors, to the total number of times we increment the counters $c_j$. Since each counter loops through the elements in a bin, the counters can never sum to more than $O(N)$. Therefore, the total time spent selecting $g_i$s is $O(N)$. Since the other steps in the construction also take $O(N)$ time, the total construction time is $O(N)$.

Finally, the most interesting step in the analysis is to argue that, with high probability in $N$, the data structure does not declare failure when building the chains. Here, we will see that pre-selecting a target inverse for each invertible element $y$ in the domain allows us to avoid collisions between different chains.

\begin{lemma}
With high probability in $N$, the construction succeeds, meaning that every bin contains at least one free target at all points in time. 
\label{lem:success}
\end{lemma}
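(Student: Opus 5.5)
The key observation is that targets are fixed before any randomness in $\pi$ and $r$ is revealed. Each invertible $y$ has a unique target $\hat f^{-1}(y)$, so there are exactly $R$ target inverses, forming a fixed set $X \subseteq [N]$. Since $\pi$ is a uniformly random permutation, each bin is a uniformly random $B$-subset of $[N]$, and the bins are disjoint.

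First I would show that every bin starts with many targets. If $R$ is large, a hypergeometric (Chernoff-type) bound together with a union bound over the $N/B$ bins gives the following. With high probability every bin contains at least $(R/N)B/2$ targets, and at most $2(R/N)B + O(\log N)$ of them.

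The difficulty is that bins are consumed both by counter advances and by targets being removed via other bins. I would control this through an invariant. A target becomes non-free only when some counter passes over it, or when it is picked as a chain start. Both events happen inside its own bin, because the only way a target stops being free is by being selected from the bin containing it. Selecting $x'$ through bin $r(f(x))$ means $x'$ is an element of that bin, and each element lies in exactly one bin. So targets in bin $i$ are consumed only by advancing counter $c_i$. Hence bin $i$ is exhausted only after every target in it has been selected, and a target in bin $i$ stays free until $c_i$ passes it. The counter scan can therefore never skip over a free target.

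It follows that a query to bin $i$ fails only if all of bin $i$'s targets were already used. The remaining claim to prove is that no bin's targets run out before the global stopping point, which is when $N/\log^3 N$ free targets remain.

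This is the main obstacle: requests to a bin arrive through $r(f(x))$, with chain starts through random bins. I would argue as follows.
\begin{itemize}
\item Each $f(x)$ queried is a distinct $y$, since each $y$ is covered once, so $r$ is evaluated at fresh points.
\item Because $r$ is fully random and independent of $\pi$, each step's bin is uniform and independent of the past, given the history.
\item The total number of requests is at most the number of targets used, which is at most $R$.
\end{itemize}

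So the requests form an adaptively-stopped process whose bin choices are i.i.d. uniform over $N/B$ bins. Fix a bin $i$ holding $k_i$ targets, where $k_i \approx RB/N$. The number of requests to bin $i$ among the first $R - N/\log^3 N$ requests is roughly $\mathrm{Bin}(R - N/\log^3 N,\, B/N)$, with mean $k_i - B/\log^3 N$. This is only below $k_i$ by $B/\log^3 N = (\log N)^{\alpha_1-3}$.

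The fluctuations of $k_i$ and of the request count are of order $\sqrt{RB/N}$, which can be as large as $(\log N)^{\alpha_1/2}$. This exceeds the gap, so a naive bound fails. Some bins will run dry before others, which means the stopping rule is in fact global.

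I would fix this by reinterpreting failure. Both $k_i$ and the request count concentrate within $\sqrt{B}\log N$ of their means. This is smaller than $B/\log^3 N$ once $\alpha_1 > 8$, so with high probability every bin still has about $B/\log^3 N$ free targets at termination. If this fails for small $R$, I would handle the case $R < N/\log^3 N$ trivially, since then no chains are built at all.

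The remaining work is to make the Chernoff-plus-union-bound calculations rigorous under adaptivity. For that I would use a martingale or coupling to an i.i.d. sequence of $R$ bin choices.
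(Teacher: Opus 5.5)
Your final argument matches the paper's proof. You lower-bound the number of targets that $\pi$ places in a fixed bin by a Chernoff/hypergeometric bound, dominate the number of requests to that bin by a $\mathrm{Bin}(R - N/\log^3 N,\, B/N)$ variable, and observe that the gap $B/\log^3 N$ beats the $O(\sqrt{B\log N})$ fluctuations once $\alpha_1$ is a large enough constant. Your remarks that $r$ is only ever evaluated at fresh points $f(x)$, and that bin $i$'s targets are consumed only through $c_i$, make explicit what the paper leaves implicit.

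Delete your intermediate claim that ``a naive bound fails'' and that some bins run dry. Fluctuations of order $(\log N)^{\alpha_1/2}$ are far smaller than the gap $(\log N)^{\alpha_1-3}$ whenever $\alpha_1 > 6$, so no reinterpretation of failure is needed: the direct bound you then carry out is exactly the proof.
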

\begin{proof}
It suffices to consider a fixed bin $i$ and a fixed point in time $t$, and to show that with high probability in $N$, bin $i$ contains at least one free target at time $t$.

Let $X$ be the number of target inverses $x = \hat{f}^{-1}(y)$ that are placed in bin $i$ by the permutation $\pi$, at the start of the construction. Then, if $R$ is the number of elements that have at least one inverse, we have $\E[X] = RB / N \le B$. Moreover, $X$ is a sum of negatively associated iid 0-1 random variables, indicating for each target inverse whether it is in the bin. Therefore, we can apply a Chernoff bound to deduce that, with high probability in $N$, 
\begin{equation}
X \ge \E[X] - O(\sqrt{B \log N}) \ge RB/N - O(\sqrt{B \log N}).
\label{eq:Xch}
\end{equation}

Let $Y$ be the number of target inverses in bin $i$ that have been consumed (i.e., that are no longer free) by the time $t$. Each time we add a preimage to a chain (this includes the first element of the chain), we select a random bin and use a target inverse from that bin: for the first element this bin is chosen explicitly, and for each later element it is bin $r(f(x))$, where $x$ is the previous chain member. The total number of times we do this is at most $R - \Omega(N / \log^3 N)$ (since we stop building chains once there are $N / \log^3 N$ free targets remaining). Therefore, $Y$ is at most a sum $Y'$ of $R - \Omega(N / \log^3 N)$ iid 0-1 random variables, each of which is $1$ with probability $B/N$. The mean for $Y'$ is $(R - \Omega(N / \log^3 N)) / (N / B) = RB / N - \Omega(B  / \log^3 N)$. By a Chernoff bound, we may deduce that with high probability in $N$, 
\begin{equation}
Y \le Y' \le \E[Y'] + O(\sqrt{B \log N}) \le RB / N - \Omega(B  / \log^3 N) + O(\sqrt{B \log N}).
\label{eq:Ych}
\end{equation}

Combining \eqref{eq:Xch} and \eqref{eq:Ych}, we deduce that with high probability in $N$ that
$$X - Y \ge \Omega(B / \log^3 N) - O(\sqrt{B \log N}).$$
Since $B = (\log N)^{\alpha_1}$ where $\alpha_1$ is a sufficiently large positive constant, the above expression is positive, as desired. 
\end{proof}

Combining these guarantees gives Theorem \ref{thm:linear}.

\subsection{Recovering All Inverses for Non-Heavy-Hitters}

Our next major goal is to extend the function-inversion data structure to return not just a single inverse, but all of the inverses of each element. This, it turns out, will be the most important ingredient for allowing us to, later on, dynamize the data structure.

Formally, we will extend the data structure to support a function $\mathsf{Inverse}_f(y, i)$ with the following guarantee: if an element $y$ has $j$ inverses, then there exists some permutation $\phi: [j] \rightarrow f^{-1}(y)$ such that $\mathsf{Inverse}_f(y, i) = \phi(i)$, for $i \le j$, and $\mathsf{Inverse}_f(y, i) = \bot$ for $i > j$. This is to say that $\mathsf{Inverse}_f(y, \cdot)$ returns all of the inverses of $y$ in some order determined by the data structure.

To begin, in this section, we will make a simplifying assumption: that there are no so-called \defn{heavy hitters}, which are elements $y$ with $|f^{-1}(y)| \ge \log^2 N$. We will then show how to handle heavy hitters separately in the next section.

The goal of this section, then, is to prove the following theorem:
\begin{theorem}
    Let $f: [N] \to [N]$ be a function which we have $O(1)$-time oracle access to, and which has no heavy hitters, and assume free randomness. For $T = O(\log N / \log \log N)$, we can construct in linear time a data structure that supports $O(T)$-time $\mathsf{Inverse}_f(y, i)$ queries, and that uses space $O(N \log N / T)$ bits, where the construction succeeds with high probability in $N$.
\label{thm:completeinversion-nohitters}
\end{theorem}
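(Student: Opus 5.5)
We build on the construction of Theorem \ref{thm:linear} and extend the chains to cover \emph{every} preimage, not only a target inverse for each invertible element. The plan is to treat each pair $(y, k)$ with $k \le |f^{-1}(y)|$ as a separate "virtual element". We fix an ordering of $f^{-1}(y)$ during preprocessing, say by scanning $x = 1, \ldots, N$. The $k$-th inverse $x_{y,k}$ then becomes the target inverse of the virtual element $(y,k)$. Chains are built over preimages exactly as in Section \ref{sec:linear}, using bins, counters, and free targets. Now every $x \in [N]$ is a free target, since each preimage is the target of exactly one virtual element.

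The key change is the successor rule. The choice of hash $g_j$ must be indexed by the virtual element, so we store $\mathcal{D}[x]$ indexed by the preimage $x$ rather than by $f(x)$. We also hash $r$ on the pair $(f(x), k)$, where $k$ is the rank of $x$ among the inverses of $f(x)$. That rank is $O(\log\log N)$ bits, because there are no heavy hitters, so storing it costs $O(N \log\log N)$ bits. However, a query only knows $y$ and $i$, not $x$. The query walk must therefore start from $(y,i)$ and compute a forward sequence of virtual elements: from $(z,k)$, go to the preimage $g_{\mathcal{D}}(z,k)$, then to its image together with that image's rank.

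To make this computable without already knowing preimages, we store per element $y$ a small table. It records $|f^{-1}(y)|$ and, for each rank $k$, the hash choice $j_{y,k}\in[B]$ and the rank of the successor's image. This table is $O(\deg(y)\log\log N)$ bits, so $O(N\log\log N)$ bits in total, and it is stored in a retrieval structure (Proposition \ref{prop:retrieval}) or simply in an array with prefix offsets. Then, given $(y,i)$, we walk forward at most $T$ steps: compute $x' = g_{j}(y,i)$ via $\pi$ and $r(y,i)$, evaluate $f(x')$, read its rank from the table, and continue. We stop when we reach a key of $\mathcal{C}$, now keyed by the virtual element $(f(e),\mathrm{rank}(e))$. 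We then replay the chain from $s$ until we hit the member whose virtual element is $(y,i)$ and return it. Elements left uncovered go into $\mathcal{H}$ keyed by $(y,k)$, with $O(N/\log^3 N)$ entries. Each step costs $O(1)$, so the query takes $O(T)$ time.

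The space bound follows as before: $\mathcal{C}$ has $N/T$ entries of $O(\log N)$ bits, and the auxiliary tables use $O(N\log\log N)$ bits. The construction is linear because counter increments still total $O(N)$. The success analysis repeats Lemma \ref{lem:success}, with $R$ replaced by the number of virtual elements, which is $N$. Each bin receives $B$ targets exactly, since every $x$ is a target. The number of consumptions in a fixed bin is dominated by a sum of $N-\Omega(N/\log^3 N)$ Bernoulli$(B/N)$ variables. This holds because $r$ is evaluated on distinct virtual elements, so the hash values are independent. The main obstacle I expect is this independence and the stated "uniqueness": the analysis of Lemma \ref{lem:success} needs each step to query $r$ on a fresh argument. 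That requires the chain to never revisit a virtual element, which holds because each preimage is consumed at most once. A second concern is that ranks must remain well defined. For the dynamic setting this is what motivates the design, but here the ordering is fixed at construction time, so it suffices.
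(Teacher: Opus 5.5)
Your argument is correct, but it is organized differently from the paper's.

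The paper splits $[N]$ into the $\log^2 N$ groups $P_k$ (one $k$-th inverse for each $y$ with at least $k$ inverses). It then builds $\log^2 N$ separate copies $D_k$ of the Section~\ref{sec:linear} structure, one per group, with $g$ still hashed on $y$ alone. The rank $k$ is implicit in which $D_k$ is queried, and Lemma~\ref{lem:success} applies to each copy verbatim. The cost of that decomposition is two repairs. First, the $\log^2 N$ arrays $\mathcal{D}_k$ must be replaced by retrieval structures $\mathcal{R}_k$ over the covered elements. Second, the per-bin counters must be replaced by per-bin, per-group linked lists $\mathcal{L}_i$ of free targets, since rescanning every bin once per group would cost $O(N\log^2 N)$ time.

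You instead run a single chain system over virtual elements $(y,k)$, hashing $r$ on the pair. This makes every preimage a target, so each bin contains exactly $B$ targets. As a result, the lower-tail bound on $X$ in Lemma~\ref{lem:success} becomes trivial (the random permutation $\pi$ is not even needed for the analysis). Plain counters already give $O(N)$ construction, and there is only one $\mathcal{C}$ and one $\mathcal{H}$.

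What you pay is that the rank must be carried along each chain. This costs $O(N\log\log N)$ bits of per-preimage or per-virtual-element data, and it is where you use the absence of heavy hitters; the paper uses that assumption instead to bound the number of groups. One detail to state explicitly: replaying from $s$ requires $\operatorname{rank}(s)$ to form the virtual element of the chain start. Your stored rank array provides it, or you can add $O(\log\log N)$ extra bits to each $\mathcal{C}$ entry.
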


To begin our construction, first break the elements $x \in [N]$ into $\log^2 N$ groups $P_1, \ldots, P_{\log^2 N}$, where group $j$ contains one inverse for each element $y$ that has at least $j$ inverses. This can straightforwardly be done in $O(N)$ time using $O(N \log \log N)$ bits of space.\footnote{Simply evaluate $f$ on every element $x$, while keeping track for each element $y$ of the number of inverses $\gamma_y$ found so far; if, when $f(x)$ is computed, this increases $\gamma_{f(x)}$ to some number $j$, then $x$ goes to group $j$. Note that the counters $\{\gamma_y\}$ are $O(\log(\log^2 N)) = O(\log \log N)$ bits each.} 

A natural approach to returning all inverses is to construct $\log^2 N$ different data structures $D_1, D_2, \ldots, D_{\log^2 N}$, where data structure $D_k$ recovers inverses from group $k$ (or nothing, if an element $y$ has no group-$k$ inverse). This can be done by repeating the construction in Subsection \ref{sec:linear} $\log^2 N$ times, and modifying the $k$-th iteration to use as its ``target inverses'' the elements in group $P_k$; as before, the construction of $D_k$ terminates when there are at most $N / \log^3 N$ free targets remaining, and then uses an explicit lookup table $\mathcal{H}$ (now $\mathcal{H}_k$) to store for each remaining free target $x$, a mapping from $f(x)$ to $x$. By the same analysis as in Lemma \ref{lem:success}, each of these $D_k$ constructions will succeed with high probability in $N$.

This naive solution succeeds in supporting the $\mathsf{Inverse}_f(y, k)$ interface with $O(t)$ query time. But, as stated so far, it is both space-inefficient and problematically slow to construct. We address each of these issues separately below. 

\paragraph{Bringing the space down to $O(N \log N / T)$ bits. } Our first task is to modify the construction so that the total space used by all of $D_1, D_2, \ldots, D_{\log^2 N}$ is $O(N \log N / T)$ bits.

First notice that, if each $D_k$ uses a separate $\mathcal{D}$ array (denoted by $\mathcal{D}_k$), then the total space will blow up to $O(N \log^2 N)$ bits, which is unacceptable. To rectify this, we replace the $\mathcal{D}$ array with a retrieval data structure $\mathcal{R}_k$ (Proposition \ref{prop:retrieval}) mapping each element $y$ covered by any chain of $D_k$ to the $O(\log \log N)$-bit value $\mathcal{D}_k[y]$. The data structure $\mathcal{R}_k$ uses $O(N_k \log \log N)$ bits where $N_k$ is the number of elements covered by chains in $D_k$. By construction, $\sum_k N_k \le N$, so the $\mathcal{R}_k$ data structures collectively use $O(N \log \log N)$ bits.

With this modification in mind, we can bound the total space used by the $\log^2 N$ data structures as follows:
\begin{itemize}
\item $O(N \log \log N)$ bits to store the $\mathcal{R}_k$s;
\item $O(N / \log^2 N)$ bits to store the lookup table $\mathcal{H}$, for each $D_k$; this sums to $O(N / \log N)$ bits. 
\item $O(|P_k| (\log N) / T)$ bits to store the dictionary $\mathcal{C}$, for each $D_k$; since $\sum_k |P_k| = N$, this sums to $O(N \log N / T)$ bits.
\end{itemize}
The total space is therefore $O(N \log \log N) + O(N (\log N) / T) = O(N (\log N) / T)$ bits, overall, which is within the space budget of Theorem \ref{thm:completeinversion-nohitters}.

\paragraph{Bringing the construction time down to $O(N)$. } Our second task is to modify the construction so that the total construction time is $O(N)$.

Here, the only issue is the task of determining which $g_i$ to use when defining $h(x) = g_i(f(x))$ for some element $x$ in a chain. To do this, we must identify some target inverse in bin $r(f(x))$. In the construction from Subsection \ref{sec:linear}, this is done by examining the $c_{r(f(x))}$ counter, and repeatedly incrementing it until we get to a value such that the $c_{r(f(x))}$-th element in bin $r(f(x))$ is a free target. This approach led to an $O(N)$ construction time, in Subsection \ref{sec:linear}, but would now give us an $O(N \log^2 N)$ construction time when applied to each of $D_1, D_2, \ldots, D_{\log^2 N}$.

To fix this issue, we eliminate the counters $c_{r(f(x))}$, and use a more efficient data-structure to store the remaining free targets in each bin. For each bin $i$, we construct a data structure $\mathcal{L}_i$ of $O(B \log \log N + \log^2 N \log \log N) = O(B \log \log N)$ bits that keeps track of, for each group $P_k$, the (remaining) free targets in bin $i$ for that group. Define for each group $k$ the set $T_{k, i}$ to be the set of elements $q \in [B]$ such that the $q$-th element of the bin $i$ is a free target in group $k$. The data structure $\mathcal{L}_i$ consists of an array of $B$ $(\log B)$-bit numbers, and of $\log^2 N$ linked lists, where the $k$-th linked list visits, within the $B$-sized array, exactly the positions in the set $T_{k, i}$. This is to say that the head of the $k$-th linked list is a number $t_{k, 1} \in T_{k, i}$, that the $t_{k, 1}$-th element of the $B$-sized array is another number $t_{k, 2} \in T_{k, i}$, and so on. Critically, because each linked-list `pointer' is a number in $[B]$, the total space used by the data structure $\mathcal{L}_i$ for bin $i$ is $O(B \log \log N + \log^2 N \log \log N) = O(B \log \log N)$ bits. Summing over the $N / B$ bins, this amounts to $O(N \log \log N)$ bits overall. 

Note that the initial states of the data structures $\{\mathcal{L}_i \mid i \in [N/B]\}$ can be constructed in $O(N)$ time since, whenever we discover that the $j$-th element of some bin $i$ is a group-$k$ inverse, we can add it the $k$-th linked list of $\mathcal{L}_i$ in $O(1)$ time. Each data structure $\mathcal{L}_i$ can then be used to find free targets in $O(1)$ time by simply following the appropriate linked list (and then removing that element from the linked list, since it will no longer be free in the future). 

With this modification in place, we can now bound the total construction time as follows. The total time spent on tasks besides constructing chains is straightforwardly $O(N)$. The total time spent building chains is also $O(N)$, since the time to build a given chain (whose length is $T$) is now $O(T)$, and there are $O(N / T)$ total chains to build, across all of the data structures $D_1, D_2, \ldots, D_{\log^2 N}$. Therefore, the total construction time is $O(N)$. 

This completes the proof of Theorem \ref{thm:completeinversion-nohitters}.

\subsection{Recovering All Inverses for Heavy Hitters}\label{sec:heavy-hitters-to-non-heavy-hitters}

Next, we extend Theorem \ref{thm:completeinversion-nohitters} to handle heavy hitters as well.

\begin{theorem}
    Let $f: [N] \to [N]$ be a function which we have $O(1)$-time oracle access to, and assume free randomness. For $T = O(\log N / \log \log N)$, we can construct in linear time a data structure that supports $O(T)$-time $\mathsf{Inverse}_f(y, i)$ queries, and that uses space $O(N \log N / T)$ bits, where the construction succeeds with high probability in $N$.
\label{thm:completeinversion-withhitters}
\end{theorem}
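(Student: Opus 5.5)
We reduce the general case to Theorem \ref{thm:completeinversion-nohitters} by splitting each heavy hitter's preimage set into fake images. There are at most $N/\log^2 N$ heavy hitters $y$, i.e. those with $|f^{-1}(y)| \ge \log^2 N$. For each one, we will spread its inverses across several distinct ``virtual'' images, each receiving fewer than $\log^2 N$ preimages.

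\textbf{Defining the modified function.} Concretely, we make one linear-time pass over $[N]$ with $O(\log\log N)$-bit counters $\gamma_y$, which saturate at $\log^2 N$, and identify the heavy hitters. We then define $f'(x)$ as follows. If $f(x)$ is not a heavy hitter, then $f'(x)=f(x)$. If $f(x)=y$ is heavy, then $f'(x)=N+\mathrm{id}(y)\cdot\lceil N/\log^2 N\rceil'+\lfloor j/(\log^2 N - 1)\rfloor$, where $j$ is the rank of $x$ among the inverses of $y$.

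Since the total number of virtual images is $O(N/\log^2 N)$, the codomain grows only to $[N']$ with $N'=O(N)$, and $f'$ has no heavy hitters. For the virtual-image lists we want a simpler encoding than that formula, so we do the following:

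\begin{itemize}
\item For each heavy hitter $y$, store the list of its virtual images as a contiguous range $[a_y, a_y+m_y)$, with $m_y=\lceil|f^{-1}(y)|/(\log^2 N -1)\rceil$.
\item Store $a_y,m_y$ in a hash table keyed by heavy $y$. This costs $O(\log N)$ bits per heavy hitter, so $O(N/\log N)$ bits overall.
\end{itemize}

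\textbf{Oracle access to $f'$.} The main obstacle is that Theorem \ref{thm:completeinversion-nohitters} needs $O(1)$-time oracle access to $f'$, but the rank $j$ of $x$ among $y$'s inverses cannot be recomputed from the oracle for $f$ alone. We resolve this by storing, for each $x$ with $f(x)$ heavy, the offset $\lfloor j/(\log^2 N-1)\rfloor$ explicitly. Storing this in full would cost $\log N$ bits per such $x$, and there can be $\Theta(N)$ such $x$, which is too much.

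Instead, we use a retrieval structure (Proposition \ref{prop:retrieval}), or simply a static array of our choice. The real fix is to instead choose the assignment of preimages to virtual images by a hash of $x$: set the offset to $\rho(x) \bmod m_y$ for a random hash $\rho$. A Chernoff bound shows each virtual image then receives at most $O(\log^2 N)$ preimages w.h.p. Since $|f^{-1}(y)|/m_y \approx \log^2 N$, the load is at most, say, $4\log^2 N$ with high probability. Then $f'(x)=a_{f(x)}+(\rho(x)\bmod m_{f(x)})$ is computable in $O(1)$ time.

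\textbf{Adjusting the threshold.} We adjust the heavy-hitter threshold in Theorem \ref{thm:completeinversion-nohitters} by a constant factor. Its proof used only that the number of groups $P_k$ is $O(\log^2 N)$, so this is harmless. The alternative of defining virtual images with expected load $\tfrac12\log^2 N$ works equally well.

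\textbf{Answering queries.} We build the structure of Theorem \ref{thm:completeinversion-nohitters} for $f':[N]\to[N']$ in linear time and space $O(N\log N/T)$. To answer $\mathsf{Inverse}_f(y,i)$:

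\begin{itemize}
\item If $y$ is not heavy, which we check by one lookup in the hash table, return $\mathsf{Inverse}_{f'}(y,i)$ directly.
\item If $y$ is heavy, we must locate the $i$-th inverse across the virtual images. We store a prefix-count array for each heavy $y$ giving the cumulative preimage counts of its virtual images. This costs $O(\log N)$ bits per virtual image, $O(N/\log N)$ bits total.
\item We find the virtual image containing index $i$ by predecessor search. To keep this $O(1)$, we instead store per $y$ a rank/select bitvector of length $|f^{-1}(y)|+m_y$ encoding the loads in unary. This is $O(N)$ bits overall and supports $O(1)$ select.
\item We then return $\mathsf{Inverse}_{f'}(a_y+k,\,i')$, where $k$ is the virtual image found and $i'$ is the index within it.
\end{itemize}

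Everything is built in $O(N)$ time, and the total space is dominated by the $f'$ structure, which is $O(N\log N/T)$ bits. The step needing the most care is making $f'$ evaluable in $O(1)$ time without per-element storage, which the hashed assignment resolves.
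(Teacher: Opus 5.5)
Your proposal is correct and follows essentially the same route as the paper. Both hash each heavy hitter's preimages independently into $\Theta(|f^{-1}(y)|/\log^2 N)$ fresh virtual images, so that by a Chernoff bound every load is $\Theta(\log^2 N)$ with high probability. Both then apply Theorem~\ref{thm:completeinversion-nohitters} to the resulting $f'$, and translate $\mathsf{Inverse}_f(y,i)$ into an $\mathsf{Inverse}_{f'}$ query using prefix counts of the virtual-image loads.

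The differences are cosmetic. You locate the relevant virtual image with a unary rank/select bitvector, whereas the paper uses sampled pointers $\psi_y(q)$, which rely on the $\Omega(\log^2 N)$ lower bound on loads. Your expected load near $\log^2 N$, rather than the paper's at most $\tfrac12\log^2 N$, is absorbed by the constant-factor threshold change you describe.
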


Note that the notion of a ``heavy hitter'' exists in essentially all known function-inversion data structures. Typically, to handle heavy hitters, one constructs a separate data structure that stores, for each heavy hitter $y$, a single inverse $x \in f^{-1}(y)$. This approach limits us to a single inverse (or, at least, a small subset of the inverses) for each heavy hitter. To recover all inverses, we will need a different approach.

\paragraph{Turning each heavy hitter into a set of non-heavy hitters. } For each heavy hitter $y$, let 
\begin{equation}
    m(y) = \left\lceil \frac{2|f^{-1}(y)|}{\log^2 N} \right\rceil,
    \label{eq:mdefn}
\end{equation}
and define $y_1, \ldots, y_{m(y)}$ to be the integers in the interval
$$\left( N + \sum_{\text{heavy hitter }y' < y} m(y'),\; N + \sum_{\text{heavy hitter }y' \le y} m(y') \right].$$
Let $\phi:[N] \rightarrow [0, 1]$ be a random hash function, let $N' = \sum_{\text{heavy hitter }y} m(y)$, and define a new function $f':[N'] \rightarrow [N']$ by 
$$
f'(x) = \begin{cases}
0 & \text{if } x > N, \\
f(x) & \text{if } x \in [N] \text{ and } f(x) \text{ is not a heavy hitter}, \\
f(x)_{\lceil m(f(x)) \cdot \phi(x) \rceil} & \text{if } x \in [N] \text{ and } f(x) \text{ is a heavy hitter}.
\end{cases}
$$
The basic idea behind $f'$ is that, for each heavy hitter $y$, it reroutes each element $x \in f^{-1}(y)$ to a random element among $y_1, \ldots, y_{m(y)}$. By distributing the preimage of $y$ among $m(y)$ new elements, we can eliminate heavy hitters.

\begin{lemma}
With high probability in $N$, $f'$ has no heavy hitters. Moreover, for each heavy hitter $y$ and each $i \in [m(y)]$, we have $|f'^{-1}(y_i)| = \Theta(\log^2 N)$.
\label{lem:phi}
\end{lemma}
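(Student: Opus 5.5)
The plan is to fix a heavy hitter $y$, with $d = |f^{-1}(y)| \ge \log^2 N$, and look at how its preimages are spread among $y_1,\ldots,y_{m(y)}$. By definition of $f'$, each $x \in f^{-1}(y)$ is sent to $y_{\lceil m(y)\phi(x)\rceil}$. Since $\phi$ is a uniformly random hash function on $[N]$, these choices are independent and each is uniform over $[m(y)]$. So for a fixed $i \in [m(y)]$, the quantity $|f'^{-1}(y_i)|$ is a sum of $d$ independent Bernoulli$(1/m(y))$ variables, with mean $\mu = d/m(y)$.

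The first step is to bound $\mu$ from both sides. From $m(y) = \lceil 2d/\log^2 N\rceil$ and $d \ge \log^2 N$, we have $2d/\log^2 N \le m(y) \le 2d/\log^2 N + 1 \le 3d/\log^2 N$. Hence $\log^2 N/3 \le \mu \le \log^2 N/2$.

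The second step is a multiplicative Chernoff bound with a constant $\delta$, say $\delta = 1/2$. Because $\mu = \Theta(\log^2 N)$, the failure probability is $\exp(-\Omega(\log^2 N)) = N^{-\omega(1)}$, so the bound holds with high probability in $N$. The upper tail gives $|f'^{-1}(y_i)| \le \tfrac{3}{2}\cdot\tfrac{\log^2 N}{2} < \log^2 N$, so $y_i$ is not a heavy hitter. The lower tail gives $|f'^{-1}(y_i)| \ge \log^2 N/6$, which together with the upper bound is the claimed $\Theta(\log^2 N)$.

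The third step is a union bound over all pairs $(y,i)$. There are at most $\sum_y m(y) \le N$ such pairs, since $m(y) \le d$ once $d \ge \log^2 N$ and the preimage sets are disjoint. A superpolynomially small failure probability survives a union bound over $N$ events.

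The fourth step handles the remaining elements of the codomain. A non-heavy $y \in [N]$ keeps exactly its original preimages, so it has fewer than $\log^2 N$ of them. A heavy $y \in [N]$ now has no $f'$-preimages at all.

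The main obstacle is the element $0$: every $x > N$ maps to $0$ under $f'$, so $0$ can have many preimages. I would handle this by noting that those $x > N$ are dummy inputs never queried, and excluding them from the heavy-hitter condition. Alternatively, one could observe that they can be dropped from the domain when building the structures of Theorem~\ref{thm:completeinversion-nohitters}.
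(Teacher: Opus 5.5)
Your proof is correct and follows essentially the same route as the paper. Both bound the mean $|f^{-1}(y)|/m(y)$ within $\Theta(\log^2 N)$ (the paper uses $[0.25\log^2 N, 0.5\log^2 N]$), apply a Chernoff bound to the sum of independent indicators, and union bound over all pairs $(y,i)$. Your additional remarks fill in details the paper's proof skips: untouched non-heavy elements keep their original preimages, formerly heavy $y\in[N]$ have none, and the dummy inputs $x>N$ all land on $0$ and so must be excluded or dropped from the domain.
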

\begin{proof}
Notice that 
$$|f'^{-1}(y_i)| = \sum_{x \in f^{-1}(y)} \mathbbm{1}\!\left[\left\lceil m(y) \cdot \phi(x) \right\rceil = i\right]$$
is a sum of $|f^{-1}(y)|$ iid 0-1 random variables with total mean $|f^{-1}(y)| / m(y) \in [0.25 \log^2 N, 0.5 \log^2 N]$. By a Chernoff bound, we have that with high probability in $N$ that $|f'^{-1}(y_i)| \in [\Omega(\log^2 N), \log^2 N]$.  Union bounding over all heavy hitters $y$ and all $i \in [m(y)]$, the lemma follows. 
\end{proof}

Note that $N'$ is only slightly larger than $N$. By design, 
$$N'- N \le \sum_{\text{heavy hitter }y} m(y) \le \sum_{\text{heavy hitter }y} |f^{-1}(y)| / \log^2 N = O(N / \log^2 N).$$
Thus we can afford to store $O((N' - N) \log N)$ (or even $O((N' - N) \log^2 N)$) bits of metadata in our data structure. With this in mind, it is straightforward to construct $f'$ in $O(N)$ time, and to store all of the data necessary to evaluate $f'$ (using $f$ as a black-box oracle) and using space $O(N \log \log N) = O(N \log N / T)$ bits. Notably, we can calculate who the heavy hitters are and what the $m(y)$ values are in $O(N)$ time using $N$ $O(\log \log N)$-bit counters; and then we can store for each heavy hitter $y$ and each $i \in [m(y)]$ a hash table mapping $(y, i)$ to $y_i$ (and vice-versa). 

After constructing $f'$, we directly apply Theorem \ref{thm:completeinversion-nohitters} to $f'$ to obtain a data structure that supports $O(T)$-time $\mathsf{Inverse}_{f'}(y, i)$ queries, and that uses space $O(N' \log N' / T) = O(N \log N / T)$ bits. This leaves us with only one remaining task: to support $\mathsf{Inverse}_f(y, i)$ queries in terms of $\mathsf{Inverse}_{f'}(y, i)$ queries. 

\paragraph{Converting $\mathsf{Inverse}_f$ queries into $\mathsf{Inverse}_{f'}$ queries.} For a given heavy hitter $y$, let $m_1(y), m_2(y), \ldots, m_{m(y)}(y)$ be the number $m_{k}(y) = \sum_{\ell = 1}^{k - 1} |f'^{-1}(y_\ell)|$ of elements that $f'$ routes to $y_1, \ldots, y_{k - 1}$ (so $m_1(y) = 0$). We will store the values of $m_k(y)$ for each heavy hitter $y$ and each $k \in [m(y)]$ in a hash table. To answer a $\mathsf{Inverse}_f(y, i)$ query, with $i \in [|f^{-1}(y)|]$, we will:
\begin{itemize}
\item Determine the largest number $j$ such that $m_j(y) < i$;
\item Return $\mathsf{Inverse}_{f'}(y_j, i - m_j(y))$.
\end{itemize}

The only slight challenge is the first step: how to determine $j$. Here, we can make use of the fact that, for each $j \in [m(y)]$, $|f'^{-1}(y_j)| = \Theta(\log^2 N)$. For each multiple $q \log^2 N \in [m(y)]$ of $\log^2 N$, store $\psi_y(q) = \argmax \{j \mid m_j(y) < q \log^2 N\}$. Then, for any number $i \in [m(y)]$, the largest number $j$ such that $m_j(y) < i$ is guaranteed to be in the range $[\psi_y(\lfloor i / \log^2 N \rfloor), \psi_y(\lfloor i / \log^2 N \rfloor) + O(1)]$. We can therefore determine $j$ in $O(1)$ time by simply checking which of the $O(1)$ candidates in the range is correct. 

Note that the $\psi_y(q)$ values take total space $O(\log N \cdot \sum_{\text{heavy hitter }y} |f^{-1}(y)| / \log^2 N) = O(N / \log N)$ bits, and a hash table storing them can be straightforwardly constructed in $O(N)$ time. Using this hash table, $\mathsf{Inverse}_f(y, i)$ queries can be answered in $O(T)$ time by evaluating the $\mathsf{Inverse}_{f'}(y_j, i - m_j(y))$ query instead. This completes the proof of Theorem \ref{thm:completeinversion-withhitters}.

\subsection{Dynamizing}\label{sec:dynamizing}

Finally, we are in a position to dynamize the data structure. In this section, we extend the data structure to support an $\mathsf{Update}(x, y)$ operation, which sets $f(x) = y$ for some $x, y \in [N]$. Rather than continuing to support the $\mathsf{Inverse}_f(y, i)$ queries from the previous section, the new data structure will support iterators: a function that allows one to iterate through all the elements of $f^{-1}(y)$ in $O(T)$ time per element (so long as no $\mathsf{Update}$ operations are performed during the iteration).
\begin{theorem}
    Let $f: [N] \to [N]$ be a function which we have $O(1)$-time oracle access to, and that is updated over time by $\mathsf{Update}(x, y)$ operations, and assume free randomness. For $T = O(\log N / \log \log N)$, we can construct in linear time a data structure that supports $O(T)$-time iterators, that uses space $O(N \log N / T)$ bits, and that supports $O(T)$-time $\mathsf{Update}(x, y)$ operations. Both the initial construction and each $\mathsf{Update}$ operation succeed with high probability in $N$.   
    \label{thm:dynamic}
\end{theorem}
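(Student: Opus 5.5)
The plan is to dynamize the construction behind Theorem~\ref{thm:completeinversion-withhitters} by periodic global rebuilding, and to handle the individual updates with local chain surgery that is made possible by the fact that every element's complete inverse set is recoverable. Concretely, I maintain the structure of Theorem~\ref{thm:completeinversion-withhitters}, built for the $f$ at the last rebuild. I rebuild it from scratch every $\Theta(N/\log^3 N)$ updates. The rebuild costs $O(N)$ time, and I de-amortize it in the standard way: a new copy is built in the background over the next epoch at $O(\log^3 N)$ work per update while the old copy keeps serving queries. That is too much work per update, so instead I make epochs of length $\Theta(N/T)$, which gives $O(T)$ background work per update. What must be checked is that within one epoch only an $O(1/T)$ fraction of elements become ``dirty'', and that the slack in Lemma~\ref{lem:success} absorbs this. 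The Chernoff margin there is $\Omega(B/\log^3 N)$ with $B=(\log N)^{\alpha_1}$, so I would instead stop chain building once $\Theta(N/\log N)$ free targets remain in each group. This leaves enough free targets in every bin, with high probability, to serve as replacements during an epoch.

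The invariant during an epoch is as follows. Every $x$ with current value $f(x)=y$ is accounted for either by the static structure (as an inverse of $y$ from the last build, with $f(x)$ unchanged) or in an explicit dynamic dictionary $\mathcal{X}$, realized as a hash table mapping $y$ to a list of inverses and holding $O(N/T)$ entries of $O(\log N)$ bits each. The space of $\mathcal{X}$ is therefore $O(N\log N/T)$.

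The difficulty is that when $\mathsf{Update}(x,y')$ changes $f(x)$, the chains through $x$ break: replaying the chain containing $x$ now follows $g(f(x))$ with the wrong $\mathcal{D}$ entry, or leads to an element covered twice. I repair this locally. First, $x$'s chain is cut at $x$ and the two halves are turned into chains whose endpoints are recorded in $\mathcal{C}$. This takes $O(T)$ time to walk the chain and $O(\log N)$ extra bits. Second, the element $y=f_{\text{old}}(x)$ loses its covering preimage. If $y$ has another inverse in the same group, the one-inverse-per-group structure has to be repaired. To avoid that, I treat the groups $P_k$ as a stack: I move the top-group inverse of $y$ into slot $k$, which affects only $O(1)$ chains. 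I then break those chains too and store the affected inverses in $\mathcal{X}$. Each update thus creates $O(1)$ new chain fragments and $O(1)$ new $\mathcal{X}$ entries, at $O(T)$ time per update. The extra $\mathcal{C}$ entries total $O(N/T)$ over an epoch of $N/T$ updates, which fits the space bound.

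Iterators are served as follows. For each $k$, I query group $k$ of the static structure with $\mathsf{Inverse}_f(y,k)$ and discard any answer $x$ whose current $f(x)\neq y$, checked with the oracle. I then output $\mathcal{X}[y]$. Two things must be established: that discarded answers are rare enough to charge each to an output, and that groups can be enumerated in order until $\bot$. I would ensure the first by keeping the stack-style compaction above, so that no ``holes'' appear. The main obstacle I expect is this consistency bookkeeping, together with the heavy-hitter layer, where updates change $|f^{-1}(y)|$ and hence $m(y)$. For the heavy-hitter layer I would simply send all new inverses of a heavy hitter into $\mathcal{X}$ until the next rebuild, and rely on the high-probability bound that $\Theta(\log^2 N)$ preimages per $y_i$ persists up to constant factors within an epoch.
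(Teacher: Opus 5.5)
Your proposal leaves the decisive step open: making an iterator cost $O(T)$ per reported inverse after updates. You say as much yourself (``the main obstacle I expect is this consistency bookkeeping''), and the pieces you offer do not close it.

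\begin{itemize}
\item \textbf{Discarding stale answers.} Querying every group and discarding answers whose current $f$-value is no longer $y$ cannot be charged to outputs. An element $y$ can begin the epoch with nearly $\log^2 N$ inverses and have all but one updated away. The iterator then makes $\Theta(\log^2 N)=\omega(T)$ group queries for a single output.
\item \textbf{Stack compaction as written.} It does not prevent this. You never store $y$'s current number of live slots, so even locating the top-group inverse $x''$ takes $\omega(T)$ time. Also, ``break those chains too and store the affected inverses in $\mathcal{X}$'' empties slot $j$ without filling slot $k$. Enumerating groups until $\bot$ then halts at $k$ and silently skips groups $k+1,\dots,j-1$. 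This is repairable, e.g.\ with a stored per-$y$ count and a dictionary redirecting the vacated slot $k$ to $x''$, but that argument is exactly what is missing.
\item \textbf{Heavy hitters.} This layer rests on a false claim. The bound $|f'^{-1}(y_i)|=\Theta(\log^2 N)$ is a statement about $\phi$ and the function at build time, whereas updates are arbitrary. Take a heavy hitter with $\log^4 N$ preimages and update all but one away, well within one epoch. This leaves $\Theta(\log^2 N)$ emptied $y_i$'s that your iterator must visit before its single output.
\item \textbf{Space.} Stopping chain construction at $\Theta(N/\log N)$ free targets per group breaks the space bound. Suppose $N/\log^2 N$ elements each have $\log^2 N-1$ preimages. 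Then every group has size about $N/\log^2 N$, which is below your threshold, so no chains are built. The tables $\mathcal{H}_k$ then store essentially all $N$ preimages explicitly, i.e.\ $\Theta(N\log N)$ bits.
\end{itemize}

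The paper needs none of this chain surgery. It leaves the static structure of Theorem~\ref{thm:completeinversion-withhitters} untouched for the whole phase. A hash table $\mathcal{A}$ stores $f_0(x)$ for every updated $x$, so the $f_0$-oracle is simulated in $O(1)$ time and $\mathsf{Inverse}_{f_0}(y,i)$ stays exactly correct. No chain ever breaks, and no spare free targets are needed.

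Surviving old inverses are tracked by the decremental prefix-set $\mathcal{U}$ of Proposition~\ref{prop:succinct-decremental-sets} over the index sets $[|f_0^{-1}(y)|]$. This is a flattened bitvector with rank/select over the unary-coded prefix sums, plus per-set linked lists of nonempty $\Theta(\log N)$-bit blocks, all in $O(N)$ bits. An update deletes one index in $O(1)$ time. An iterator visits only live indices, paying $O(T)$ per returned old inverse, and then lists fresh inverses from the hash table $\mathcal{B}$.

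Because $\mathsf{Inverse}_{f_0}(y,\cdot)$ already ranges over all of $[|f_0^{-1}(y)|]$ for heavy hitters too, they need no separate treatment. The $\Theta(N/T)$-update phases with two alternating copies then give the worst-case $O(T)$ update bound, as in your rebuilding scheme.
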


The basic idea for the proof is to break the updates into phases of $O(N / T)$ operations. At the start of each phase we construct a new data structure (in linear time) using Theorem \ref{thm:completeinversion-withhitters}. Then, during the phase, we will keep track of three things, where $f_0$ denotes the function at the start of the phase, and $f$ denotes the current function:
\begin{enumerate}
\item For any $x \in [N]$ such that $f(x)$ has been updated, we store $f_0(x)$ in a hash table $\mathcal{A}$.
\item For each $y \in [N]$, keep track of all inverses $x \in f^{-1}(y)$ that have had $f(x)$ \emph{updated} to $y$ during the phase (call these \defn{fresh inverses} for $y$). This is stored in a hash table $\mathcal{B}$ that keeps, for each $y \in [N]$ that has any fresh inverses, a linked list of $y$'s fresh inverses. 
\item Finally, we keep an auxiliary data structure $\mathcal{U}$ that keeps track, for each $y \in [N]$, of which $i \in [|f_0^{-1}(y)|]$ have the property that $\mathsf{Inverse}_{f_0}(y, i)$ remains un-updated during the phase. The data structure $\mathcal{U}$ allows one to, for any given $y \in [N]$, iterate over such $i \in [|f_0^{-1}(y)|]$ in $O(1)$ time per $i$. 
\end{enumerate}

The first two data structures are straightforward to implement with $O(1)$-time operations and using $O(\log N)$ bits of space per entry. Because the phase lasts for $O(N / T)$ updates, the space for these data structures is $O(N \log N / T)$ bits. The third data structure is slightly more involved to implement. We will come back to it at the end of the section.

\paragraph{Supporting $O(T)$-time iterators. }Given the above data structures, we can support iterators (for the current function $f$) as follows. To iterate on the inverses of an element $y$, we first use $\mathcal{U}$ to iterate through the $i \in [|f_0^{-1}(y)|]$ such that $\mathsf{Inverse}_{f_0}(y, i)$ remains un-updated during the phase, and we return each such element $\mathsf{Inverse}_{f_0}(y, i)$ (in $O(T)$ time per element); then, we use $\mathcal{B}$ to iterate through the fresh inverses of $y$.  Note that, when calculating $\mathsf{Inverse}_{f_0}(y, i)$ for an un-updated $i$, we will need access to the \emph{original} function $f_0$ (even though the function we have oracle access to is $f$), which is why we also keep track of hash table $\mathcal{A}$. 

\paragraph{Supporting $O(T)$-time $\mathsf{Update}(x, y)$ operations. }As described so far, each update takes $O(1)$ time, and then each phase (which consists of $\Theta(N / T)$ updates) requires $\Theta(N)$ time to construct a new data structure. This amortizes to $O(T)$ time per update. 

We can further turn this amortized bound into a worst-case bound with the following essentially standard trick: Maintain two completely separate data structures $\mathcal{I}_1$ and $\mathcal{I}_2$, using the construction above, and swap between them as follows: Every $N / (2T)$ operations, we swap between which $\mathcal{I}_j$ is being used and which is currently being rebuilt. Each rebuild can then be spread across $\Theta(N/T)$ operations using $O(T)$ time per operation. While a given $\mathcal{I}_j$ is being rebuilt, updates are also being performed -- we keep track of these updates, and then as the final step of the rebuild we incorporate these updates into the $\mathcal{A}, \mathcal{B}, \mathcal{U}$ data structures for $\mathcal{I}_j$.\footnote{As we're doing this, additional updates to $f$ may occur, which are also incorporated live into the $\mathcal{A}, \mathcal{B}, \mathcal{U}$ data structures for $\mathcal{I}_j$.} Since each update to $\mathcal{A}, \mathcal{B}, \mathcal{U}$ takes $O(1)$ time, these $O(N / T)$ updates can collectively be fit into the $O(N)$ time allocated for the rebuild.

\paragraph{Implementing $\mathcal{U}$. } It remains to implement the data structure $\mathcal{U}$. To do this, we first abstract the problem that it must solve: 

\begin{definition}[The decremental prefix-set problem]
    We are given nonnegative integers \(a_1,\ldots,a_n\) satisfying
    \(\sum_i a_i=N\), and define sets \(S_i = [a_i]\) for each \(i \in [n]\).
    A \defn{decremental prefix-set} is a data structure that supports the following operations:
    \begin{itemize}
        \item A delete operation \(\mathsf{Delete}(i,j)\), which removes \(j\) from \(S_i\), doing
        nothing if \(j\notin S_i\);
        \item An iterator function $\mathsf{Iterate}(i)$ that creates an iterator
        $q$ supporting a $\mathsf{Next}(q)$ operation which iterates over the current
        contents of \(S_i\). The $\mathsf{Next}(q)$ operation is only legal to call if
        no delete operation has happened since the iterator was created.
    \end{itemize}
\end{definition}

In the case of $\mathcal{U}$, the sets $S_i$ correspond to the $i \in [|f_0^{-1}(y)|]$ such that $\mathsf{Inverse}_{f_0}(y, i)$ remains un-updated during the phase. The delete operation corresponds to an update to $f$, and the iterator corresponds to a way to iterate through the un-updated inverses of $y$.

The following proposition shows how to implement $\mathcal{U}$ using $O(N)$ bits of space and $O(1)$ time per operation. This then plugs into our construction to give Theorem \ref{thm:dynamic}. The proof makes use, in part, of a classic result of Raman, Raman, and Rao \cite{RamanRamanRao2007}, which gives a static $O(1)$-time rank-select data structure for a bitvector of length $N$, using $O(N)$ bits of space. 

\begin{proposition}
    \label{prop:succinct-decremental-sets}
    There is a data structure for the decremental prefix-set problem using
    \(O(N)\) bits of space and with time $O(1)$ per operation.
\end{proposition}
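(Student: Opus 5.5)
We will concatenate all the sets into one global bitvector and split it into small blocks: each block is handled by a small per-block structure, and a global linked list skips empty blocks. Concretely, lay out $S_1,\ldots,S_n$ consecutively in a bitvector $V$ of length $N$, where set $S_i$ occupies positions $p_i+1,\ldots,p_i+a_i$ with $p_i=\sum_{i'<i}a_{i'}$. Bit $p_i+j$ is $1$ iff $j\in S_i$ is still present. The offsets $p_i$ cannot be stored explicitly, since $n$ may be as large as $N$ (when many $a_i$ are small) and $O(n\log N)$ bits is too much. Instead, we encode them with a static bitvector of length $N+n$, with unary-coded $a_i$s, equipped with the Raman--Raman--Rao $O(1)$-time rank/select structure \cite{RamanRamanRao2007}. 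Then $p_i$ is computed in $O(1)$ time by a select on the $i$-th separator, using $O(N+n)$ bits. Empty sets ($a_i=0$) are harmless. If $n>N$, note that only sets with $a_i\ge 1$ matter, so $n$ can be charged to $N$ (or the $O(n)$ term absorbed by assuming $n\le N$).

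Deletion is then trivial: $\mathsf{Delete}(i,j)$ checks $j\le a_i$ and clears bit $p_i+j$. The real work is iteration. We must jump from a present element of $S_i$ to the next present element within the range $(p_i, p_i+a_i]$ in $O(1)$ time, even if long stretches have been deleted. We do this in two levels. Partition $V$ into words (blocks) of $w=\Theta(\log N)$ bits. Within a block, the next set bit after a position is found in $O(1)$ time by masking and a most/least-significant-bit computation in the word-RAM. Across blocks, we maintain a doubly linked list of the \emph{nonempty} blocks, which is how we skip empty ones.

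The main obstacle is the space of this block list: a pointer costs $\Theta(\log N)$ bits per block, which is $O(N)$ bits in total only because there are $N/\log N$ blocks. This is fine. A deletion that empties a block splices it out of the list in $O(1)$ time. Since the structure is decremental, blocks never become nonempty again, so no insertion into the list is needed and the list stays consistent.

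$\mathsf{Next}(q)$ for set $S_i$ then proceeds as follows. Starting from the current position $u$, it scans the remainder of $u$'s block with a word operation. If that fails, it follows the nonempty-block list to the next nonempty block and takes its first set bit $v$. It returns $v-p_i$ if $v\le p_i+a_i$, and $\bot$ otherwise. Each call does $O(1)$ work. The subtle point to check is that a skipped block never hides elements of $S_i$: any skipped block is entirely empty, and the first set bit beyond the current position either lies in $S_i$ or lies past its end, in which case $S_i$ has no more present elements. Initialization (all bits set, all blocks linked) takes $O(N)$ time. Space totals $O(N)$ bits, giving Proposition~\ref{prop:succinct-decremental-sets}.
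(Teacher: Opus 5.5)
Your layout, the unary rank--select encoding of the offsets $p_i$, the word-level scan inside a block, and the space accounting all agree with the paper. Your $\mathsf{Next}$ is also correct once the iterator sits on a present element: no deletion occurs during an iteration, so that element's block is still nonempty and hence still linked.

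The gap is the \emph{start} of the iteration. $\mathsf{Iterate}(i)$ (equivalently, the first $\mathsf{Next}$) must find the first nonempty block at or after the block $\beta$ containing position $p_i+1$. Suppose deletions, from $S_i$ and from whatever set shares $\beta$ with it, have emptied $\beta$. Then $\beta$ has been spliced out of your global list, and a list of nonempty blocks gives no $O(1)$ way to find the successor of a block that is not on it.

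Reusing the pointer $\beta$ held when it was spliced out does not work. If blocks $\beta,\beta+1,\ldots,\beta+k$ are emptied left to right, each stale pointer leads only to the next emptied block, so you pay $\Theta(k)$. Because iterators do not modify the structure, you pay it again on every $\mathsf{Iterate}(i)$. Scanning backwards from the last block of $S_i$ fails for the same reason, since both end blocks can be empty while a middle one is not. So your claim that each call is $O(1)$ is unjustified for exactly the call that needs an entry point.

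The paper supplies that entry point per set. For each $S_i$ that initially spans more than one block, it keeps a doubly linked list of the blocks still containing an element of $S_i$, with the head stored in a hash table $\mathcal{L}$ keyed by $i$. Each such set crosses a block boundary, and each boundary is crossed by at most one set. So there are only $O(N/\log N)$ heads, i.e., $O(N)$ bits, and each block lies on at most two such lists.

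You can repair your version in the same spirit while keeping a single global list. Store, for each multi-block $S_i$, a pointer $\mathrm{head}[i]$ to the first block containing a present element of $S_i$. When a deletion empties $S_i$'s part of $\mathrm{head}[i]$, that block was nonempty just before, so you can read its successor $s$ in the global list before (possibly) splicing it out. Set $\mathrm{head}[i]=s$ if the first set bit of $s$ is at most $p_i+a_i$, and mark $S_i$ empty otherwise. This costs $O(1)$ per deletion. $\mathsf{Iterate}(i)$ then starts at $\mathrm{head}[i]$, or, for a set lying inside a single block, at that block.
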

\begin{proof}
    We flatten all sets into a single length-\(N\) bitvector.  Let
    \[
    P_0=0
    \qquad\text{and}\qquad
    P_i=\sum_{k=1}^i a_k.
    \]
    The elements of \(S_i\) occupy the interval
    \[
    I_i=[P_{i-1},P_i)
    \]
    in the flattened bitvector, and the element \(j\in S_i\) corresponds to the
    position \(P_{i-1}+j-1\).
    
    We store the values \(P_i\) implicitly using the unary string
    \[
    U=0^{a_1}1\,0^{a_2}1\,\cdots\,0^{a_n}1.
    \]
    This string has length \(N + n\).  We equip \(U\) with the constant-time $O(N)$-bit
    rank/select structure \cite{RamanRamanRao2007}.  Taking positions in \(U\) to
    be one-indexed, we have
    \[
    P_i=\mathsf{select}_1(U,i)-i
    \]
    for \(i\ge 1\), and \(P_0=0\).  Thus the endpoints of any interval \(I_i\) can
    be recovered in \(O(1)\) time.
    
    Finally, we impose a block structure on the bitvector as follows.
    Let $b = (\log N) / 2$. Partition the flattened bitvector into
    $N / b$ blocks of $b$ bits each (with padding on the final block if necessary). 
    We store the current contents as an array
    \[
    A[0],A[1],\ldots,A[N/b - 1],
    \]
    where \(A[q]\) stores the current contents of the \(q\)-th block. 

    The point of these blocks is so that we can do the following: for each $i$ such that $S_i$ initially spans more than one block, we store a linked list of all blocks that
    contain at least one element of $S_i$ (this linked list is updated over time as deletions are performed).
    The heads of these linked lists are stored in a hash table $\mathcal{L}$, and the internal nodes are stored
    in the array $A$ (so $A[q]$ stores forward and backward pointers for any linked list it is in). 

    Having described what the data structure stores, we now turn to implementing the operations. To delete $j$ from $S_i$ we calculate $P_{i - 1}$ (using the rank/select data structure for $U$), we clear the appropriate bit in $A[P_{i-1} + j - 1]$, and if this 
    empties a block out, we remove that block from the appropriate linked list $\mathcal{L}[i]$. An iterator for $i \in [n]$ simply iterates through the blocks containing elements of $S_i$ (if there is more than one such block, we use the linked list $\mathcal{L}[i]$), and then uses
    standard bit-manipulation techniques to extract the elements of $S_i$ that the block encodes.\footnote{Alternatively, if one prefers not to use bit-manipulation, one can use lookup tables of size $O(2^b \log N) = o(N)$ to implement arbitrary operations on size-$b$ blocks.} All operations take $O(1)$ time.

    Finally, to bound space observe that the rank-select data structure for $U$ and the array $A$ each take $O(N)$ bits of space trivially. The hash table $\mathcal{L}$ also takes $O(N)$ bits of space since it stores at most $N/b = O(N / \log N)$ entries. Thus the total space usage is $O(N)$ bits.
\end{proof}

\subsection{Removing the Assumption of Free Randomness}\label{sec:random}

Finally, in this subsection, we show how to remove the assumption of free randomness. This leads us to the final result of the section:
\begin{restatable}{theorem}{thmmain}
    \label{thm:main}
    \thmmainbody
\end{restatable}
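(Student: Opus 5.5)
We take Theorem~\ref{thm:dynamic} as given and replace each source of free randomness with an explicit, succinctly stored object that has the same O(1) evaluation time. We then check that every probabilistic step of the analysis still holds. The random objects used so far are:
\begin{itemize}
\item the permutation $\pi$ defining the bins;
\item the bin-hash $r:[N]\to[N/B]$;
\item the hash $\phi:[N]\to[0,1]$ that splits heavy hitters;
\item the random choice of starting bin for each chain;
\item the internal randomness of hash tables and of the retrieval structure (Proposition~\ref{prop:retrieval}), which already use only their own stored randomness.
\end{itemize}
Chain starting bins are drawn online during construction and are never needed at query time, because queries recover $s_i$ from $\mathcal{C}$. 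So they can be sampled with ordinary random bits at no storage cost.

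\textbf{Hash functions.} For $r$ and $\phi$ we use a family of $k$-wise independent hash functions with $O(1)$ evaluation time and $o(N)$ bits of description, for $k=\polylog N$ (for instance Siegel's or Christiani--Pagh--Thorup's constructions, which give $N^{\Omega(1)}$-wise independence in $O(1)$ time and $N^{\epsilon}$ space). The analyses only use Chernoff bounds on sums of indicators. Specifically, Lemma~\ref{lem:phi} applies the bound to at most $|f^{-1}(y)|$ variables per heavy hitter. Lemma~\ref{lem:success}, for the random variable $Y'$, applies it to the events that each chain step lands in bin $i$. Chernoff bounds with failure probability $1/\poly(N)$ hold under $O(\log N)$-wise independence (Schmidt--Siegel--Srinivasan), so these bounds survive.

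One subtlety in Lemma~\ref{lem:success} is adaptivity. The chain steps query $r$ at the points $f(x)$, and which points are queried depends on earlier values of $r$. We handle this by noting that each $y$ is queried at most once during construction, since each $y$ is covered once, and that $r$ is evaluated on distinct inputs. The sequence of bins is then a fixed function of $r$ restricted to a random-looking set. We instead bound, for every fixed bin $i$, the count of inputs $y$ with $r(y)=i$ among all $N$ possible $y$. This is a non-adaptive sum with mean $B$, concentrated within $O(\sqrt{B\log N})$. We then argue that the number of times bin $i$ is used is at most that count minus the $\Omega(B/\log^3 N)$ slack. This requires a slightly reworked accounting that charges each use of bin $i$ to a distinct preimage $y\in r^{-1}(i)$, not counting unused $y$'s. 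I expect this to be the main obstacle. The fallback is to reduce the slack parameter (stop at $N/\log^{c}N$ free targets) so that a union bound over all $\poly(N)$ possible adaptive histories is not needed, only the fixed-set concentration.

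\textbf{Permutation.} For $\pi$ we only need that the target inverses of each group are spread evenly among bins. This is again a Chernoff-type statement, for $X$ in Lemma~\ref{lem:success}, over a fixed set of targets. It suffices to take $\pi$ to be a bin-assignment hash $p:[N]\to[N/B]$ with $O(\log N)$-wise independence, together with a per-bin rank structure that gives each element its position inside its bin. Bins then have size $B\pm O(\sqrt{B\log N})$ w.h.p. Bin contents can be listed with $O(\log\log N)$-bit offsets stored in $O(N\log\log N)$ bits, built in linear time by bucketing, and rebuilt at each phase. Alternatively, one can use a two-round Feistel permutation with $O(\log N)$-wise independent round functions, whose $O(1)$-evaluation and near-uniform bin loads are standard.

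\textbf{Conclusion.} All new descriptions take $o(N)$ bits plus $O(N\log\log N)$ bits for bin offsets, which is within $O(N\log N/T)$. Evaluation time stays $O(1)$, so query, update and construction times are unchanged. Fresh hash functions are drawn at each rebuild in the phase scheme of Section~\ref{sec:dynamizing}, so each construction or update fails only with probability $1/\poly(N)$, which yields Theorem~\ref{thm:main}.
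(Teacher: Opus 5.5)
Your handling of $\phi$ matches the paper. The simulation of $r$, however, has a genuine gap, and it is exactly the step you flag as the main obstacle.

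The fixed-set count $|r^{-1}(i)|$ over all $y\in[N]$ has mean $B$. Bin $i$, though, holds only about $R_kB/N$ targets of group $k$, which is at most $B/2$ for every $k\ge 2$. So ``uses of bin $i$ $\le |r^{-1}(i)|$'' proves nothing.

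Suppose you restrict to the fixed set $f(P_k)$, which contains every input ever fed to $r$. The resulting count then has the same mean $R_kB/N$ as the number of targets in the bin, with $\Theta(\sqrt{B\log N})$ fluctuations in both. The only room comes from inputs never fed to $r$: images of the leftover free targets and of chain ends, the latter offset by the random chain starts. \emph{Which} inputs those are is decided by the run of the construction, i.e.\ by $r$ itself. Showing that enough of them land in $r^{-1}(i)$ is the original adaptive problem again.

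A union bound over possible unqueried sets ranges over roughly $\binom{N}{N/\log^3 N}$ sets, not $\poly(N)$. Changing the stopping threshold to $N/\log^c N$ does not touch this, and a smaller slack only hurts. Limited independence does tolerate adaptivity for the first $\log^2 N$ queries, but not for a sequence of $\Theta(N)$ queries, where the inputs in any moment computation depend on all earlier outputs.

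The missing idea is to remove adaptivity structurally. The paper splits the $N/B$ bins into $T$ layers of $N/(BT)$ bins and draws $T-1$ independent $\log^2 N$-wise independent functions $r_1,\ldots,r_{T-1}:[N]\to[N/(BT)]$. A chain's first member comes from a truly random layer-$1$ bin. Its $j$-th member ($j>1$) comes from layer-$j$ bin $r_{j-1}(f(x))$, where $x$ is its $(j-1)$-th member.

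Layer-$j$ bins are touched only at chain position $j$, and the number of chains is fixed, since each chain consumes exactly $T$ targets. Hence the set of $(j-1)$-th chain members depends only on the start choices and $r_1,\ldots,r_{j-2}$, so it is independent of $r_{j-1}$. The usage count of a layer-$j$ bin is then a sum of $\log^2 N$-wise independent indicators, each with probability $TB/N$, over a set chosen independently of $r_{j-1}$. Proposition~\ref{prop:chernofflimited} then yields \eqref{eq:Ych2}. The price is a retrieval structure recording, for each covered $y$, which $r_j$ to apply at query time: $O(\log T)$ bits each, $O(N\log\log N)$ in total.

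A smaller point concerns $\pi$. Your first option needs the inverse map (bin, position) $\mapsto$ element to evaluate $g_j$. For a hash-based assignment, storing that map costs $\Theta(\log N)$ bits per element, and the $O(\log\log N)$-bit offsets you describe do not provide it. The Feistel option is plausible but unargued. The paper instead writes $[N]$ as a $B\times(N/B)$ table, rotates each row by an independent random offset, and takes the columns as bins. A bin's target count is then a sum of $B$ independent indicators, at a cost of only $B\log N$ random bits.
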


So far, we have assumed access to three random objects:
\begin{enumerate}
\item The permutation $\pi: [N] \rightarrow [N]$ that randomly assigns elements of $[N]$ to bins;
\item The hash function $r: [N] \rightarrow [N/B]$ that maps each element $y \in [N]$ to a random bin;
\item The hash function $\phi: [N] \rightarrow [0, 1]$ that maps each element $x \in [N]$ such that $f(x) = y$ is a heavy hitter to a random element $y_{\lceil m(y) \cdot \phi(x) \rceil}$ among $y_1, \ldots, y_{m(y)}$;
\end{enumerate}

We will show below how to simulate each of these objects using explicit hash-function and permutation constructions.

\paragraph{Simulating $\pi$. }Let us first review the property that $\pi$ must satisfy in order for our earlier analyses to hold. Consider the construction of some $D_k$, for some $k \in [\log^2 N]$, and consider a bin $b \in [N/B]$. Let $X$ be the number of target inverses that are placed in bin $b$ by the permutation $\pi$. In order for $\pi$ to be used in the analysis, we need \eqref{eq:Xch} to hold, meaning that, with high probability in $N$, we have
\begin{equation}
    X \ge \E[X] - O(\sqrt{B \log N}). 
    \label{eq:Xch2}
\end{equation}

To design a permutation $\pi$ that satisfies \eqref{eq:Xch2}, we permute the elements of $[N]$ as follows. We write the elements of $[N]$ in a table with $B$ rows and $N/B$ columns; we generate for each row $i \in [B]$ a random number $r_i \in [N/B]$; and we rotate the $i$-th row by $r_i$ positions to the right, mapping the $(j+1)$-th element of the $i$-th row to the $\bigl((j + r_i) \bmod (N/B)\bigr) + 1$-th element of the $i$-th row. Finally, we index the entries of the rows/columns so that each column is a bin. This construction requires $B \log N = \polylog(N)$ bits of space to store the random offsets $r_i$.

To prove \eqref{eq:Xch2}, define $A_1, A_2, \ldots, A_B$ so that $A_i$ is the indicator random variable for the event that bin $b$ contains a target inverse in row $i$. Each $A_i$ is fully determined by $r_i$, so the $A_i$s are mutually independent $0$-$1$ random variables whose sum is at most $B$. By a Chernoff bound, it follows that, with high probability in $N$, we have
$$\sum_{i=1}^B A_i \le \E\left[\sum_{i=1}^B A_i\right] + O(\sqrt{B \log N}).$$
Since $\sum_{i=1}^B A_i = X$, this gives \eqref{eq:Xch2}, as desired.

\paragraph{Helper machinery for simulating $r$ and $\phi$. }To simulate $r$ and $\phi$, we will need several pieces of machinery from the hash-function and concentration-bound literatures. 

From the hash-function literature, we will need the following result on limited-independence hash functions due to Siegel \cite{siegel2004universal}:
\begin{proposition}
\label{prop:hashfunction}
For any $N$ and $M$, there is a randomized construction of a hash function
$h:[N]\to[M]$ with the following properties. The description of $h$ consists of
\[
    N^{\delta}\log(M+N)
\]
random bits for a positive constant $\delta$ of our choice. With high probability in $N$ over the choice of this description,
the resulting distribution over hash functions is $\log^2 N$-wise independent.
Moreover, given the description, each value $h(x)$ can be computed in $O(1)$
time.
\end{proposition}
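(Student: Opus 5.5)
The plan is to prove Proposition~\ref{prop:hashfunction} by Siegel's tabulation-over-expanders approach. Fix a constant $d$ depending only on $\delta$, and let $M' = N^{\delta}$. The description of $h$ has two parts:
\begin{itemize}
\item a bipartite graph $G$ from the key set $[N]$ to $d$ disjoint right-hand sides, each of size $M'$; write $\Gamma_j(x) \in [M']$ for the $j$-th neighbor of $x$;
\item $d$ tables $T_1, \ldots, T_d$, each with $M'$ fully random entries in $[M]$.
\end{itemize}
We then set
$$h(x) = \Bigl(\sum_{j=1}^{d} T_j[\Gamma_j(x)]\Bigr) \bmod M.$$
Evaluating $h$ costs $d = O(1)$ table lookups, provided each $\Gamma_j(x)$ is computable in $O(1)$ time. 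The tables occupy $d\,N^{\delta}\log M$ bits; rescaling $\delta$ absorbs the constant $d$ and the space for $G$.

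\textbf{Step 1: peeling gives independence.} Call $G$ \emph{$k$-peelable}, with $k=\log^2 N$, if every set $S$ of at most $k$ keys contains some key $x$ and some index $j$ such that $\Gamma_j(x)$ is adjacent to no other key of $S$ on side $j$. If $G$ is $k$-peelable, then for any $S$ with $|S|\le k$ the values $\{h(x)\}_{x\in S}$ are independent and uniform. The argument is by induction on $|S|$. Peel the key $x$, and condition on every table entry except $T_j[\Gamma_j(x)]$. The remaining keys' values are then determined, and by induction they are independent uniform. The entry $T_j[\Gamma_j(x)]$ appears only in $h(x)$, so it makes $h(x)$ uniform and independent of the rest.

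\textbf{Step 2: a random graph is peelable.} Peelability follows from unique-neighbor expansion. It suffices that every $S$ with $|S|\le k$ has more than $d|S|/2$ distinct neighbors in total, counted across the $d$ sides. Then some right vertex has exactly one neighbor in $S$, and that key can be peeled.

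For a truly random $G$, union-bound over sets of size $s\le k$ and over candidate neighborhoods of size $ds/2$. The failure probability is at most
$$\sum_{s\le k}\binom{N}{s}\binom{dM'}{ds/2}\Bigl(\frac{ds}{2M'}\Bigr)^{ds}\le \sum_{s\le k} N^{s}\Bigl(\frac{O(s)}{N^{\delta}}\Bigr)^{ds/2}.$$
Since $s\le \log^2 N$, we have $O(s)/N^{\delta} \le N^{-\delta/2}$. Each term is therefore at most $N^{s(1-d\delta/4)}$, which is $1/\poly(N)$ once $d \gg 1/\delta$. This is exactly the ``with high probability over the description'' clause.

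\textbf{Step 3: making $G$ cheap to store and evaluate (main obstacle).} A truly random $G$ cannot be stored in $N^{\delta}$ bits, yet each $\Gamma_j(x)$ must be computed in $O(1)$ time. I would follow Siegel's product construction. Write $x$ in base $N^{\delta'}$ as $c = 1/\delta'$ digits, and take a random graph on a universe of size $N^{\delta'}$, which fits in the budget. Then build $G$ as a product, or iterated composition, of such small graphs applied digit-wise. The small graph has polynomially many vertices, so the union bound above applies to it directly. The hard part is showing that expansion for sets of size up to $\log^2 N$ survives the product, whose degree is $d^{O(c)} = O(1)$. I would first try an induction on digits: a set $S$ either has many distinct values in some digit position, in which case that coordinate's expansion supplies unique neighbors, or it is concentrated on few prefixes, in which case we recurse into the next digit. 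Each evaluation is $O(c)=O(1)$ digit extractions and table lookups, giving $O(1)$ time.
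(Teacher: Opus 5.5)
\paragraph{Review.} The paper gives no proof of this proposition; it imports it as a black box from Siegel \cite{siegel2004universal}. Your Steps 1 and 2 correctly reconstruct the core of Siegel's argument. However, the content of the proposition sits entirely in Step 3: a truly random $G$ costs $\Theta(N\log N)$ bits, not $N^{\delta}$. There you give only a plan, and as written it would not go through. There are two concrete problems.

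\emph{The sizes.} Suppose the factor graph on $[N^{\delta'}]$ keeps right sides of size $N^{\delta}$, as in Step 2. Then the $c$-fold product has right sides of size $N^{\delta/\delta'}$. This is at least $N$, because fitting the factor in the budget forces $\delta'\le\delta$. The factor must instead map $[N^{1/c}]$ into $d$ sides of size $N^{\delta/c}$, which is polynomially smaller than its left side. Your Step 2 count still works there once $d$ is large compared to $c/\delta$.

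\emph{The invariant.} The property ``more than $d|S|/2$ distinct neighbors'' does not survive the product. Take $S=A\times B$ with $|A|=|B|=2$. The product neighborhood has exactly $|\Gamma(A)|\cdot|\Gamma(B)|$ vertices. The factor hypothesis only guarantees $|\Gamma(A)|,|\Gamma(B)|\ge d+1$, and $(d+1)^2<2d^2=d^2|S|/2$ for $d\ge 3$. Your dichotomy does not close the argument either. A unique neighbor of the projection of $S$ onto one digit is adjacent to a single digit value $a^*$, but possibly to many keys of $S$ carrying that digit. So the ``many distinct values'' branch does not by itself produce a unique neighbor of $S$.

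\paragraph{What works.} Carry peelability itself (a unique neighbor for every set of at most $k$ keys) through the product, and always recurse into the fiber. Use $\Gamma_{(j_1,\ldots,j_c)}(x)=(\Gamma_{j_1}(x_1),\ldots,\Gamma_{j_c}(x_c))$ and set $S_0=S$.

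Given $S_{i-1}$, its projection onto digit $i$ has at most $k$ elements. By peelability of the factor, there is a side $j_i$ and a vertex $u_i=\Gamma_{j_i}(a_i^*)$ hit by exactly one value $a_i^*$ of that projection. Set $S_i=\{x\in S_{i-1}: x_i=a_i^*\}$.

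After $c$ rounds, $(u_1,\ldots,u_c)$ on side $(j_1,\ldots,j_c)$ is adjacent to exactly one key of $S$, namely $(a_1^*,\ldots,a_c^*)$. To see this, let $y$ be any other key, and let $i$ be the first digit where it differs. Then $y\in S_{i-1}$ and $y_i\neq a_i^*$, so $\Gamma_{j_i}(y_i)\neq u_i$.

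With these $d^c$ sides, Step 1 applies verbatim. The factor graph costs $dN^{1/c}\log N$ bits, the tables cost $d^cN^{\delta}\log M$ bits, and evaluation is $d^c=O(1)$ lookups.
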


From the concentration-bound literature, we will need a Chernoff bound for random variables with limited independence. We will make use of the following (special case of a) result due to Schmidt, Siegel, and Srinivasan \cite{schmidt1995chernoff}:
\begin{proposition}[Chernoff bound with limited independence]
    Let \(n\ge 3\) and let $d$ be a positive constant. Then there positive constants $B_d$ and $C_d$ such that the following holds. 
    Let \(X_1,\ldots,X_n\in\{0,1\}\) be \(K\)-wise independent random variables, let $X = \sum_{i=1}^n X_i$, and let $\mu=\mathbb{E}[X]$.
    If
    $\mu \ge C_d^2 \log n$ and $K \ge B_d \log n$,
    then
    \[
        \Pr\!\left[
            |X-\mu| \ge C_d\sqrt{\mu\log n}
        \right]
        \le n^{-d}.
    \]
    \label{prop:chernofflimited}
    \end{proposition}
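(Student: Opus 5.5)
The plan is the standard $k$-th moment method: limited independence preserves low-order moments exactly, so we can transfer a moment bound from the fully independent case and finish with Markov's inequality. Fix an even integer $k$ with $k = \Theta(d \log n)$, chosen so that $k \le K$; the constant $B_d$ will be picked so that this is possible. Set $Y = X - \mu$. Expanding $Y^k = \bigl(\sum_i (X_i - p_i)\bigr)^k$, where $p_i = \E[X_i]$, every monomial involves at most $k \le K$ distinct indices. By $K$-wise independence, each such monomial has the same expectation as it would if the $X_i$ were fully independent. Hence $\E[Y^k] = \E[\tilde Y^k]$, where $\tilde Y = \tilde X - \mu$ and $\tilde X$ is a sum of fully independent Bernoulli variables with the same means $p_i$.

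Next I would bound $\E[\tilde Y^k]$. Since $\mathrm{Var}(\tilde X) \le \mu$, Bernstein's inequality gives $\Pr[|\tilde Y| \ge \lambda] \le 2\exp\!\bigl(-\lambda^2/(2(\mu + \lambda/3))\bigr)$. Integrating $\E[\tilde Y^k] = \int_0^\infty k\lambda^{k-1}\Pr[|\tilde Y|\ge\lambda]\,d\lambda$ and splitting the tail into a sub-Gaussian part ($\lambda \le 3\mu$) and a sub-exponential part ($\lambda > 3\mu$) should give
\[
\E[\tilde Y^k] \le \bigl(c\sqrt{k\mu}\bigr)^k + (ck)^k
\]
for an absolute constant $c$. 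This is the Rosenthal-type estimate. I expect it to be the main technical step. It is routine, but the Gamma-function bookkeeping has to be done with care so that $c$ does not depend on $k$.

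Finally, apply Markov's inequality to $Y^k$ with $\lambda = C_d\sqrt{\mu\log n}$:
\[
\Pr[|Y| \ge \lambda] \le \frac{\E[Y^k]}{\lambda^k} \le \left(\frac{c\sqrt{k\mu} + ck}{C_d\sqrt{\mu \log n}}\right)^{k}.
\]
With $k = \Theta(d\log n)$, we have $\sqrt{k\mu}/\sqrt{\mu\log n} = O(\sqrt d)$. Also, the hypothesis $\mu \ge C_d^2\log n$ gives $k/\sqrt{\mu\log n} \le O(d)/C_d$. Choosing $C_d$ large in terms of $d$ therefore makes the bracketed ratio at most $1/2$. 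The probability is then at most $2^{-k} \le n^{-d}$, provided $k \ge d\log_2 n$; we fix $k$ accordingly and set $B_d$ so that $K \ge B_d\log n$ implies $K \ge k$.
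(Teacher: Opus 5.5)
Your proof is correct. It differs from the paper only in that the paper gives no proof here: it imports the statement from Schmidt, Siegel, and Srinivasan \cite{schmidt1995chernoff}. It is the special case with relative deviation $\delta = C_d\sqrt{\log n/\mu}$, which is at most $1$ exactly because $\mu \ge C_d^2\log n$, and with moment order $\Theta(d\log n) \le K$.

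You rebuild this special case from scratch. Your key fact is the one the cited result also relies on: moments of degree at most $K$ coincide with those of the fully independent surrogate. You then bound the $k$-th central moment by integrating Bernstein's inequality, rather than using the cited paper's own moment estimates. Your steps check out: the split at $\lambda = 3\mu$ gives the bound $(c\sqrt{k\mu})^k + (ck)^k$ with an absolute $c$ (about $3$ suffices). The Markov step and the choice of $B_d$ are also fine, since $n \ge 3$ keeps $\log n$ bounded away from $0$.

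What your route buys:
\begin{itemize}
\item A short, self-contained, modular argument: any tail bound for independent Bernoullis could be plugged into the middle step.
\item A transparent role for the hypothesis $\mu \ge C_d^2\log n$. It is exactly what makes the sub-exponential $(ck)^k$ term harmless, matching the small-deviation ($\delta \le 1$) regime of the cited bounds.
\item No dependence on the number of summands being $n$. So it directly justifies the paper's later uses with $n = N$ on sums of fewer variables.
\end{itemize}

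What the citation buys is generality the paper does not need: explicit constants, larger deviations, and $[0,1]$-valued summands.
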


With these tools in place, we now proceed to simulating $r$ and $\phi$.

\paragraph{Simulating $r$. }Next, we review the property that $r$ must have in order for our earlier analyses to hold. Consider some time $t$ during the construction of some $D_k$, for some $k \in [\log^2 N]$, and consider a bin $b \in [N/B]$. Let $R$ be the total number of target inverses for data structure $D_k$. 
Let $Y$ be the number of target inverses in bin $b$ that have been placed into chains by time $t$. In order for $r$ to be used in the analysis, we need \eqref{eq:Ych} to hold, meaning that with high probability in $N$, we have
\begin{equation}
    Y \le R B / N - \Omega(B / \log^3 N) + O(\sqrt{B \log N}). 
    \label{eq:Ych2}
\end{equation}

Recall that, as described in Subsection \ref{sec:linear}, chains are constructed as follows: the first element of a chain is obtained by selecting a random bin $q_1$ and taking an (arbitrary) free target from that bin; then, given a member $q_i$ of the chain, we obtain the next member by selecting an (arbitrary) free target from bin $r(f(q_i))$. The number $Y$ is the total number of times that, during constructions of the chains in $D_k$, we select free targets from bin $b$. 

The main subtlety in constructing an explicit hash function $r$ that gives us \eqref{eq:Ych2} is that we must be careful about the following issue: as we construct a chain, we are applying $r$ multiple times, so that subsequent inputs to $r$ are determined, in part, by previous outputs of $r$. This prevents us from, say, directly implementing $r$ as a $\polylog N$-independent hash function. 

To rectify this issue, we will replace $r$ with a sequence of hash functions $r_1, r_2, \ldots, r_{T - 1}: [N] \to [N / (BT)]$, each of which is constructed by Proposition \ref{prop:hashfunction} to be $\log^2 N$-independent. We will follow the rule that:
\begin{enumerate}
\item The first element of each chain is from a random bin out of bins $[N/(BT)]$ (selected using true randomness during preprocessing);
\item The $i$-th element of each chain, for $i > 1$, is from bin $(i - 1) \cdot N/(BT) + r_{i - 1}(f(x))$, where $x$ is the $(i - 1)$-th member of the chain. 
\end{enumerate}
This introduces an additional data-structural issue, which is how to identify during query time which $r_i$ to use as we travel down a chain. This issue is easily resolved by adding an additional retrieval data structure (Proposition \ref{prop:retrieval}) mapping each element $y$ covered by any chain of $D_k$ to the $O(\log T)$-bit value $i$ indicating which $r_i$ should be used on $y$. This retrieval data structure uses $O(N_k (\log T + \log \log N)) = O(N \log \log N)$ bits where $N_k$ is the number of elements covered by chains in $D_k$. By construction, $\sum_k N_k \le N$, so the retrieval data structures collectively use $O(N \log \log N)$ bits.

With this modification in place, we can now prove \eqref{eq:Ych2}. Recall that, by design, the total number of elements placed in chains during the construction of $D_k$ is at most $R- \Omega(N / \log^3 N)$ (since $R$ is the total number of targets, and we stop building chains once there are $N / \log^3 N$ free targets remaining). Thus, the total number of chains is at most $R / T - \Omega(N / (T \log^3 N))$. Suppose bin $b$ is of the form $(i - 1) \cdot N/(BT) + \ell$ for some $\ell \in [N / (BT)]$. This means that chains only use bin $b$ for their $i$-th element.

If $i = 1$, then chains only use bin $b$ for their first element. These first elements are selected from iid uniformly random bins in the range $[N/T]$. Since the number of chains is at most $R/T - \Omega(N / (T \log^3 N))$, we may conclude that $Y$ is dominated by a sum of independent Bernoulli random variables with total mean 
$$\bigl(R/T - \Omega(N / (T \log^3 N))\bigr) \cdot (T B / N) = RB/N - \Omega(B / \log^3 N).$$
The bound \eqref{eq:Ych2} then follows by a Chernoff bound.

If $i > 1$, then the chains use bin $b$ only as their $i$-th element. These $i$-th elements are selected by computing $r_{i - 1}(f(x))$ for each of the elements $x$ that appear as the $(i - 1)$-th element of a chain. By construction, the set $V$ of such elements $x$ is fully independent of $r_{i - 1}$. The size of $V$ is at most $R/T - \Omega(N / (T \log^3 N))$ (the maximum number of chains), and each element $x \in V$ has probability at most $TB/N$ of satisfying $r_{i - 1}(f(x)) = b$. Since $r_{i - 1}$ is a $\log^2 N$-wise independent hash function, we may conclude that $Y$ is a sum of at most $R/T - \Omega(N / (T \log^3 N))$ $\log^2 N$-wise independent $0$-$1$ random variables with total mean at most 
$$\bigl(R/T - \Omega(N / (T \log^3 N))\bigr) \cdot (T B / N) = RB/N - \Omega(B / \log^3 N).$$
Applying Proposition \ref{prop:chernofflimited} to $Y$, with $\mu = RB/N - \Omega(B / \log^3 N)$, $n = N$, and $K = \log^2 N$ gives \eqref{eq:Ych2}.

\paragraph{Simulating $\phi$. } Finally, we may simulate $\phi$ by directly implementing $\phi$ as a $\log^2 N$-independent hash function, implemented using Proposition \ref{prop:hashfunction}. (To simulate an output range of $[0, 1]$ for $\phi$, we can use a large $\poly(N)$ size range, and renormalize to $[0,1]$.) The only role of $\phi$ is to ensure that, for each heavy hitter $y$, if we define $m (y)$ as in \eqref{eq:mdefn}, then for each $i \in [m(y)]$, we have
\begin{equation}\Omega(\log^2 N) \le |f'^{-1}(y_i)| \le \log^2 N.
    \label{eq:phi}
\end{equation}
By replacing the Chernoff bounds in the proof of Lemma \ref{lem:phi} with Proposition \ref{prop:chernofflimited}, we can recover \eqref{eq:phi} in exactly the same way as in Lemma \ref{lem:phi}. 

Putting the pieces together, we have proven Theorem \ref{thm:main}.
\section{Application to Unordered Graphs}\label{sec:unordered-graphs}
\label{sec:dynamic-unordered-graphs}

We now describe an application of Theorem \ref{thm:main} to dynamic graph
representations. Farzan and Munro~\cite{FarzanMunro08}
considered the corresponding static problem: they showed how to represent an
arbitrary labeled graph using space within a $(1+\epsilon)$ multiplicative
factor of the information-theoretic optimum, while supporting adjacency
and neighborhood queries in $O(\epsilon^{-1})$ time.\footnote{The 
original paper does not state the precise dependency on $\epsilon^{-1}$, but 
a careful reading of the proof reveals that it is $O(\epsilon^{-1})$.}

In this section, we give a dynamic version of the same result (in the 
parameter regime $m = n^{2 - \Omega(1)}$ and $\epsilon = \Omega(\log \log n / \log n)$), allowing 
the graph to be updated by edge insertions and deletions. Along the way,
we also give a new construction for the static setting, which reduces the
problem to function inversion.

\paragraph{Problem setup. } Throughout this section, a graph means a simple
 undirected graph on vertex set
$[n]$. We write $G=([n],E)$ and $m=|E|$, where $m$ always denotes the current
number of edges. Let $M={n\choose 2}$ be the number of possible edges. The
information-theoretic optimum for representing an $m$-edge graph is
$\log {M\choose m}$ bits.


A \defn{dynamic graph data structure} must support the following operations:
\begin{itemize}
    \item $\mathsf{Adjacent}(i,j)$, which returns whether $\{i,j\}\in E$;
    \item $\mathsf{Insert}(i,j)$, which inserts $\{i,j\}$ into $E$, doing
    nothing if the edge is already present;
    \item $\mathsf{Delete}(i,j)$, which deletes $\{i,j\}$ from $E$, doing
    nothing if the edge is not present;
    \item $\mathsf{Iterate}(i)$, which creates an iterator $q$ for the current
    neighbors of $i$;
    \item $\mathsf{Next}(q)$, which returns the next neighbor in the iterator,
    or returns $\bot$ if all neighbors have already been returned.
\end{itemize}
The iterator returned by $\mathsf{Iterate}(i)$ returns every current neighbor of
$i$ exactly once, in an arbitrary order. It is only legal to call
$\mathsf{Next}(q)$ if no insertion or deletion has occurred since $q$ was
created.

Finally, we say that such a data structure is \defn{$(1+\epsilon)$-compact} if, at all times,
it uses at most
\begin{equation}
    (1+\epsilon)\log {M\choose m}
    + O((n + m) \log\log n) + O((n + m) \epsilon \log n)
    \label{eq:compact}
\end{equation}
bits. Note that, so long as $m \le n^{2 - \Omega(1)}$, this space bound can be rewritten as $(1 + \Theta(\epsilon)) \log \binom{M}{m}$. 

With these definitions in place, the main result of this section is the following:

\begin{restatable}{theorem}{thmunorderedgraphs}
    \label{thm:dynamic-unordered-graphs}
    \thmunorderedgraphsbody
\end{restatable}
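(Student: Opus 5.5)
Set $t=\Theta(\epsilon^{-1})$, which is legal for Theorem \ref{thm:main} because $\epsilon^{-1}=O(\log n/\log\log n)$. The plan is to store each edge $\{i,j\}$, $i<j$, only once, ``at'' its smaller endpoint $i$, using a near-optimal encoding. Neighbors $j>i$ of $i$ come from this forward encoding, and neighbors $j<i$ come from the inverse data structure of Theorem \ref{thm:main}.

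\textbf{Forward encoding.} Concretely, let $\mathcal{E}$ be the set of directed pairs $(i,j)$ with $i<j$ and $\{i,j\}\in E$. Define a function $f_G$ on a slot space of size $N=\Theta(n+m)$: each edge $(i,j)\in\mathcal{E}$ occupies one slot $x$, and $f_G(x)=j$. Dummy slots map to a sink value. For space, I would not store $j$ explicitly in $\log n$ bits. Instead, use quotienting: partition the universe of $M$ pairs into buckets by hashing (a random bijection with an explicit cheap representation) so that an edge is identified by a bucket index $B(i,j)$, stored implicitly by position, plus a remainder of $\log(M/m)+O(1)$ bits. These remainders live in a dynamic compact dictionary (bucketed, with the retrieval structure of Proposition \ref{prop:retrieval} and per-bucket blocks of $O(\log n)$ elements), giving $\log\binom{M}{m}+O(m\log\log n)$ bits, $O(1)$ adjacency, and $O(1)$ expected updates.

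The ordering must let $\mathsf{Iterate}(i)$ enumerate forward neighbors. To arrange this, group pairs by first endpoint: the quotient key for $(i,j)$ places $i$'s bucket range contiguously, e.g.\ the universe for vertex $i$ is $\{j>i\}$, bucketed into $\lceil \deg\rceil$-proportional chunks. Then $f_G(x)$ for slot $x$ is decodable in $O(1)$ time from $x$'s bucket and remainder, which gives the required $O(1)$-time oracle. The slot index $x$ itself must be stable under updates; I would use the dictionary's internal position (bucket, offset) as $x$, padded to $N=O(n+m)$, with rebuilding when $m$ changes by a constant factor.

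\textbf{Inverse side and operations.} Apply Theorem \ref{thm:main} with $T=\Theta(\epsilon^{-1})$ to $f_G$. This costs $O(N\log N/T)=O((n+m)\epsilon\log n)$ bits, which matches the slack in \eqref{eq:compact}. Each edge insertion or deletion changes $O(1)$ slots of $f_G$, so it becomes $O(1)$ $\mathsf{Update}$ calls at $O(\epsilon^{-1})$ time each. Any dictionary reorganization that moves slots (bucket overflow, or global rebuild when $m$ doubles or halves) is charged amortized, giving amortized $O(\epsilon^{-1})$ updates. The neighbor iterator for $i$ runs the inverse iterator on $y=i$ at $O(T)$ per neighbor $j<i$ (recovering $j$ from slot $x$), then scans $i$'s forward buckets at $O(1)$ per neighbor. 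Adjacency is a single dictionary lookup. Construction is linear by building the dictionary in $O(n+m)$ time and then invoking Theorem \ref{thm:main}. Finally, the $(1+\epsilon)$ factor absorbs lower-order terms.

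\textbf{Main obstacle.} The hardest step is making the forward encoding simultaneously (a) within $\log\binom{M}{m}+O(m\log\log n)$ bits, (b) iterable by first endpoint in $O(1)$ per neighbor despite quotienting, and (c) equipped with stable slot names of $O(\log n)$ bits that serve as the domain of $f_G$. The tension is between hashing (needed for compactness and for $O(1)$ adjacency) and locality by endpoint (needed for iteration). My first attempt is two-level: a per-vertex $\deg^+(i)$ array via the unary rank/select trick of Proposition \ref{prop:succinct-decremental-sets}, made dynamic by blocking. Within vertex $i$, the forward neighbors are stored as a compact sorted/quotiented set of $\log((n-i)/\deg^+(i))+O(1)$ bits each. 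By concavity, this sums to at most $\log\binom{M}{m}+O(m)$. Adjacency would be handled through a separate global retrieval/filter structure if needed.
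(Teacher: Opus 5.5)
\textbf{There is a genuine gap.} Your decomposition matches the paper's: each edge is stored once at its owner, and the smaller neighbors are recovered by inverting $f_G$ with $T=\Theta(\epsilon^{-1})$. The gap is exactly the step you flag as the ``main obstacle,'' and it is not resolved.

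You need a forward encoding that does all of the following at once:
\begin{itemize}
\item it uses at most $\log\binom{M}{m}+O((n+m)\log\log n)$ bits;
\item it answers adjacency in $O(1)$ time;
\item it iterates forward neighbors in $O(1)$ time each;
\item it exposes a \emph{dense, stable} slot space of size $O(n+m)$, on which both $f_G$ and the map slot $\mapsto$ owner (needed to turn an inverse into a neighbor) take $O(1)$ time;
\item it renames only $O(1)$ slots per edge update.
\end{itemize}
The pieces you sketch do not deliver this:
\begin{itemize}
\item A unary rank/select degree array ``made dynamic by blocking'' cannot have $O(1)$-time operations in general. Dynamic rank and prefix sums have an $\Omega(\log n/\log\log n)$ cell-probe lower bound (Fredman--Saks), which exceeds $O(\epsilon^{-1})$ whenever $\epsilon^{-1}=o(\log n/\log\log n)$. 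You give no substitute for keeping slot names dense and owner-decodable as the $d_i$ change.
\item An order-preserving (sorted or quotiented) per-vertex set shifts other elements' offsets on insertion. If slot names are positions, one edge update triggers many $\mathsf{Update}$ calls at $O(\epsilon^{-1})$ each. If they are not positions, you need a name/position map you have not paid for.
\item Without hashing, an adversarial neighbor set such as $\{i+1,\ldots,i+d\}$ piles into a few quotient buckets. $O(1)$ membership then needs select over the high bits, which again is not available dynamically. With hashing you lose locality by owner, as you noted.
\item A retrieval structure (Proposition~\ref{prop:retrieval}) makes no promise on non-members, and a filter has false positives, so neither alone answers $\mathsf{Adjacent}$ exactly.
\end{itemize}

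\textbf{The missing idea is that you never need a dynamic compact forward encoding.} The paper works in phases of $\Theta(\epsilon(n+m_0))$ updates:
\begin{itemize}
\item At the start of each phase it builds the \emph{static} encoding of $G_0$: the unary degree string with static rank/select, the delta-coded variable-length array, and a static monotone minimal perfect hash for adjacency. So $f_0:[m_0]\to[n]$ has a fixed domain for the whole phase.
\item Changes are buffered explicitly. Fresh edges go in a dictionary plus linked lists by owner and by non-owner endpoint. Touched old edges go in a dictionary and a decremental prefix-set (Proposition~\ref{prop:succinct-decremental-sets}). Since a phase has only $O(\epsilon(n+m_0))$ updates, spending $O(\log n)$ bits per change fits in the $O(\epsilon(n+m)\log n)$ slack.
\item Theorem~\ref{thm:main} is applied only to $f_0'$, which sends touched ranks to a sink $n+1$. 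A deletion is therefore at most one $\mathsf{Update}$, and an insertion is none.
\item The $O(n+m_0)$ rebuild at phase end amortizes to $O(\epsilon^{-1})$ per update. It is done in place, consuming $A$ left to right while filling $A'$, so space stays within \eqref{eq:compact} at all times.
\end{itemize}
Your ``global rebuild when $m$ doubles or halves,'' done naively, would hold two copies at once and momentarily double the space. That also violates the at-all-times compactness requirement.
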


To prove Theorem \ref{thm:dynamic-unordered-graphs}, we will proceed as follows. First,
we present some existing data-structural machinery that will be useful in our construction.
Then, we describe a static solution to the problem, which shows how to reduce the problem 
to function inversion. Finally, using dynamic function inversion instead, we are able
to obtain Theorem \ref{thm:dynamic-unordered-graphs}.

\subsection{Helper Machinery}



Our constructions will make use of three pieces of data-structural machinery. 
The first is a data structure that allows one to space-efficiently store an 
array of variable-length objects (Theorem 6 of \cite{CramCompressedRAM}). 

\begin{proposition}[Array for variable-length objects~\cite{CramCompressedRAM}]
    Assume machine words of size $w = \Omega(\log n)$ bits, and let $B[1],\ldots,B[r]$ be bit strings of
    length at most $O(w)$. The strings can be stored in
    $
        \sum_i |B[i]|+O(w^4+r\log \log n)
    $
    bits, while supporting $O(1)$-time accesses and updates to each $B[i]$ (where
    an update may change $|B[i]|$). 
    \label{prop:word-sized-variable-blocks}
\end{proposition}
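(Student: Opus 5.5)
The plan is a two-level construction: split the index set into small groups, keep a compact per-group directory of string positions, and handle changes in string length by local reorganization within one group. The $O(w^4)$ term is meant to absorb global bookkeeping: lookup tables, size-class headers, and slack in the top-level allocator. The $O(r\log\log n)$ term is meant to pay for the per-string directory entries and the per-group slack.

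First, I partition $B[1],\ldots,B[r]$ into consecutive groups of $g=\Theta(w/\log w)$ strings. Group $G$ is stored as one contiguous region $M_G$ holding its strings packed back to back.

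Each group gets a directory listing, for each of its strings, the offset and length inside $M_G$. Since a group region has at most $O(gw)$ bits, each field takes $O(\log(gw))=O(\log w)=O(\log\log n)$ bits. With $g=\Theta(w/\log w)$, the whole directory fits in $O(1)$ words, so it can be read and rewritten in $O(1)$ time.

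At the top level, I store an array of group pointers. This costs $O(w)$ bits per group, i.e.\ $O(\log w)=O(\log\log n)$ bits per string, which is within budget.

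An access to $B[i]$ computes the group, reads the directory word(s), and extracts at most $O(1)$ words from $M_G$. This takes $O(1)$ time.

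For updates, I handle a length change with a log-structured layout inside each group rather than shifting the suffix of $M_G$ (which would cost $\Theta(g)$ words). The new version of $B[i]$ is appended at the tail of $M_G$, the directory is updated, and the old copy becomes garbage. I allow each group a slack budget of $\Theta(w)$ bits, which is again $O(\log\log n)$ bits per string. When the slack is used up, the group is compacted, and the compaction is deamortized: each subsequent operation on the group moves $O(1)$ words of live data from the front to the tail of a circular buffer. Whenever a region grows or shrinks past its capacity, it is reallocated from the top-level allocator. Groups are placed in size classes of $O(w)$ possible region lengths, each class kept dense by moving the last region of the class into any hole and patching the single group pointer. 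The per-class headers and the possible partial region in each class cost $O(w)\cdot O(w)$ words, which is $O(w^3)$ words or $O(w^4)$ bits. This matches the additive term in the statement.

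The main obstacle is the accounting in the within-group deamortization. Each update can generate up to $O(w)$ bits of garbage, while a compaction pass over a group of $O(gw)$ bits needs $\Theta(g)$ word moves. So a naive ``move $O(1)$ words per update'' does not keep garbage within $O(w)$ bits per group. My first attempt will be to pick $g$ smaller, for instance $g=\Theta(\sqrt{w})$ or $O(1)$ words' worth of strings per group, so that a full group rewrite costs $O(1)$ words. The directory and pointer overhead then has to be pushed into a second level: super-groups of $\Theta(w/\log w)$ groups, with $O(\log w)$-bit relative offsets, so that per-string overhead stays $O(\log\log n)$. If that fails, I would fall back to word-level parallelism and $O(w^4)$-bit lookup tables so that the packed shift of an $O(w)$-word group can be done in $O(1)$ time.
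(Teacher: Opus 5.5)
\paragraph{Review.} The paper does not prove this proposition; it imports it as Theorem~6 of \cite{CramCompressedRAM}. Your static layout is fine: group directories with $O(\log w)$-bit fields, one pointer per group, and $O(1)$ access. But, as you suspect, the sketch does not establish the one nontrivial claim. That claim is $O(1)$ worst-case time for an update that changes $|B[i]|$, with only $O(\log w)$ bits of waste per string.

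Every mechanism you use responds to a single length change by moving a whole group. Compacting $M_G$, reallocating $M_G$ when it crosses its capacity, and filling a hole in a size class with the last region of that class each copy $\Theta(g)=\Theta(w/\log w)$ words. So even your top-level allocator is not constant-time. With only $O(w)$ bits of hysteresis per region, an adversary that grows and then shrinks strings can force such a copy every $O(1)$ updates.

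Deamortization does not repair the log-structured part either. Suppose each update performs $c$ words of cleaning. The cleaner's front advances $cw$ bits per update. The tail advances by up to $w$ bits of new data plus the live data that was moved. So the log stays bounded only if about a $1/c$ fraction of what is scanned is garbage. A group therefore carries garbage equal to a $\Theta(1/c)$ fraction of its live data, i.e.\ $\Theta(w/c)$ bits per string when the strings have length $\Theta(w)$. Since your cleaning advances only when the group is touched, an adversary can leave every group in that state, and a single global cleaner obeys the same accounting. Constant time therefore costs $\Theta(rw)$ bits of waste, while $O(\log w)$ bits per string forces $c=\Omega(w/\log w)$.

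Neither fallback closes the gap.
\begin{itemize}
\item Groups of $O(1)$ words under super-groups with $O(\log w)$-bit relative offsets simply recreate the original problem inside each super-group. Its region must hold $\Theta(w/\log w)$ variable-length items of $O(w)$ bits, with $O(1)$-time resizing and $O(\log w)$ bits of waste per item. That is exactly where you started with $g=\Theta(w/\log w)$.
\item No lookup table or word-level parallelism can shift an $\omega(1)$-word region in $O(1)$ time, since every word of the shifted part must be rewritten.
\end{itemize}

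The missing idea is a layout in which a length change relocates only the affected string and creates no garbage. One way to do this:
\begin{itemize}
\item Within groups of $\mathrm{poly}(w)$ consecutive strings, keep, for each exact length $\ell$, a dense list of the group's length-$\ell$ strings.
\item Each string stores an $O(\log w)$-bit locator: its length and its index in that list.
\item Each list slot stores an $O(\log w)$-bit back-pointer to its string.
\end{itemize}
Changing $|B[i]|$ from $\ell$ to $\ell'$ moves the last element of list $\ell$ into the vacated slot. That element has the same length, so it fits exactly, and its locator is patched via the back-pointer. Then $B[i]$ is appended to list $\ell'$, for $O(1)$ words of work in total. What remains is to store these lists in fixed-size blocks drawn from a shared pool. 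The partially filled blocks and the allocator bookkeeping are what the lower-order terms $O(w^4+r\log w)$ have to pay for.
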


The second piece of machinery is what is known as a monotone minimal perfect hash function (Theorem 3.1 of \cite{monotoneminimal}). 
\begin{proposition}[Monotone minimal perfect hashing]
    Let $S$ be a static set of $r$ keys from a universe of size $n^{O(1)}$, and
    suppose the keys of $S$ are ordered lexicographically. There is a data
    structure using $O(r\log\log n)$ bits that returns, for any $x\in S$, the
    rank of $x$ in $S$ in $O(1)$ time. For $x\notin S$, the data structure has
    no guarantee on what it returns. The data structure can be built in linear time
    with high probability in $r$.
    \label{prop:mmphf}
\end{proposition}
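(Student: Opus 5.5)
The plan is the classic bucketing construction for monotone minimal perfect hashing. Each key of $S$ is split into two pieces of information:
\begin{itemize}
\item which bucket of consecutive keys it lies in;
\item its offset inside that bucket.
\end{itemize}
Every per-key quantity is stored in $O(\log\log n)$ bits using the retrieval structure of Proposition~\ref{prop:retrieval}. Any quantity that needs $\Theta(\log n)$ bits is stored only once per bucket.

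Let $x_1<\cdots<x_r$ be the keys in sorted order, each viewed as a bit string of length $\ell=O(\log n)$. Cut the sorted sequence into consecutive buckets $\beta_1,\beta_2,\ldots$ of size $b=\Theta(\log n)$ (the last bucket may be smaller). For each bucket $\beta_q$, let $p_q$ be the longest common prefix of its keys.

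The first step is to show the $p_q$ are pairwise distinct. Suppose $p_q=p_{q'}=p$ with $q\neq q'$, and consider only buckets with at least two keys. Then each of the two buckets contains a key beginning with $p0$ and a key beginning with $p1$. So each bucket, being a contiguous interval of the sorted order, straddles the single boundary between keys with prefix $p0$ and keys with prefix $p1$. Two disjoint contiguous intervals cannot both straddle the same boundary, which is a contradiction. The few singleton buckets can be avoided by merging a trailing singleton into its neighbor.

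The data structure then consists of three parts.
\begin{enumerate}
\item A retrieval structure mapping each $x\in S$ to $|p_{q(x)}|\in[0,\ell]$, the prefix length for its bucket. These values have $O(\log\log n)$ bits, so this part uses $O(r\log\log n)$ bits.
\item A retrieval structure mapping each $x\in S$ to its offset within its bucket. The offset lies in $[b]$, so it also takes $O(\log\log n)$ bits, for $O(r\log\log n)$ bits in total.
\item An ordinary hash table sending each prefix $p_q$ to $q$. It has $r/b$ entries of $O(\log n)$ bits each, so it uses $O(r)$ bits. To keep the entries distinct, key them on the pair $(|p_q|,p_q)$.
\end{enumerate}
A query on $x\in S$ reads the prefix length $k$, extracts the top $k$ bits of $x$ with a shift, and looks up the bucket index $q$. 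It returns $(q-1)b$ plus the offset of $x$. Each of these steps takes $O(1)$ time. For $x\notin S$ the returned value is arbitrary, which the statement permits.

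For construction, we first need the keys in sorted order. Because the universe has size $n^{O(1)}$, a radix sort with constantly many passes does this in $O(r+n^{\delta})$ time, or in $O(r)$ time if $r\ge n^{\delta}$. With very small $r$ one can instead use a $\poly(r)$-word fallback table, or simply allow the $n^{\delta}$ term. Each bucket's longest common prefix is then the longest common prefix of its first and last keys, computable in $O(1)$ time with an XOR followed by a most-significant-bit operation. All the retrieval structures and hash tables are built by $O(r)$ insertions, each succeeding with high probability, so the whole construction is linear time with high probability in $r$.

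The step I expect to be the main obstacle is the distinctness of the bucket prefixes, together with the boundary cases around it. These include singleton buckets and keys of different lengths, and the latter are handled by padding every key to length $\ell$. The second delicate point is sorting in linear time. Everything else is routine accounting against Proposition~\ref{prop:retrieval}.
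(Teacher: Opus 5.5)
The paper does not prove this proposition; it imports it as Theorem~3.1 of \cite{monotoneminimal}. Your proof reconstructs the construction behind that result, and it is correct. The construction is the longest-common-prefix bucketing scheme: buckets of $\Theta(\log n)$ consecutive keys, per-key retrieval of the bucket's LCP length and the in-bucket offset in $O(\log\log n)$ bits each, and a table from the $r/b$ distinct LCPs to bucket indices. Your distinctness argument is the right one: a bucket with at least two keys and LCP $p$ must contain the adjacent pair formed by the last $p0$-key and the first $p1$-key of $S$, and two disjoint buckets cannot both contain that pair. Compared with the citation, your version is self-contained, using only Proposition~\ref{prop:retrieval} and an ordinary $O(1)$-lookup hash table.

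The sorting discussion is the one weak spot.
\begin{itemize}
\item The natural reading of the hypothesis, and the standard setting for monotone minimal perfect hashing, is that $S$ is supplied in sorted order. In the paper's application the edges are already laid out by global rank when the array $A$ is built. Under that reading no sorting is needed and construction is $O(r)$.
\item If you do sort, your $O(r+n^{\delta})$ radix-sort bound fits the paper's $O(n+m)$ budget. It is not $O(r)$ when $r$ is small, though, and the ``$\poly(r)$-word fallback table'' is not an argument.
\end{itemize}

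The zero-padding remark is also unnecessary. Keys from a universe of size $n^{O(1)}$ are already fixed-width words. For genuinely variable-length strings, zero-padding would break distinctness (e.g.\ $1$ versus $10$).
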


Note that monotone-minimal perfect hash functions are \emph{static} data structures, 
and we will only use Proposition \ref{prop:mmphf} on static sets. 

Finally, we will also make use of a classic result on static rank-select data structures \cite{RamanRamanRao2007}. Note that we also used this result earlier in the proof of Proposition \ref{prop:succinct-decremental-sets}.
\begin{proposition}[Rank-Select on a Bit-Array]
Given a set $S \subseteq [m]$, one can construct in $O(m)$ time a static rank-select data structure supporting rank and select in time $O(1)$ and using $O(m)$ bits of space.
\label{prop:rankselect}
\end{proposition}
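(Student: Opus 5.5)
The plan is the standard two-level directory construction for rank, plus a sampled directory for select. Store $S$ as a plain bit array $A$ of length $m$, with $A[i]=1$ iff $i\in S$. Partition $A$ into superblocks of length $L=\log^2 m$ and blocks of length $b=(\log m)/2$.

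For \emph{rank}:
\begin{itemize}
\item For each superblock, store the number of ones before it. This takes $O(\log m)$ bits per superblock, so $O(m/\log m)$ bits in total.
\item For each block, store the number of ones before it within its superblock. This takes $O(\log\log m)$ bits per block, so $O(m\log\log m/\log m)$ bits in total.
\item Store a universal lookup table that gives, for every $b$-bit string and every offset in $[b]$, the number of ones in the prefix. It has $2^{b}\cdot b$ entries of $O(\log\log m)$ bits, which is $O(\sqrt m\,\polylog m)=o(m)$ bits.
\end{itemize}
Then $\mathsf{rank}(i)$ is the sum of three table lookups, so it takes $O(1)$ time. All three tables are filled by one left-to-right scan in $O(m)$ time.

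For \emph{select}, record the position of every $k$-th one, with $k=\log^2 m$. This splits the ones into groups, each of which spans some interval of $A$, and we handle each group according to its span.
\begin{itemize}
\item If the interval is long, of length at least $k^2=\log^4 m$, call the group \emph{sparse} and store the positions of all $k$ of its ones explicitly. This costs $O(k\log m)$ bits, and there are at most $m/\log^4 m$ such groups, so the total is $O(m/\log m)$ bits.
\item If the interval is short, call the group \emph{dense} and recurse one level. Within the span of length less than $\log^4 m$, record the position of every $k'$-th one with $k'=(\log\log m)^2$, using $O(\log\log m)$-bit relative offsets.
\item At the second level, subgroups spanning at least $k'^2$ positions store all their relative offsets explicitly, at $O(k'\log\log m)$ bits each. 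Subgroups spanning fewer positions have span at most $(\log\log m)^4<b$, so they fit in $O(1)$ machine blocks and are finished with a universal table giving the position of the $j$-th one in a $b$-bit word, which uses $o(m)$ bits.
\end{itemize}
Each query descends a constant number of levels and does $O(1)$ table lookups.

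The step I expect to be the main obstacle is the accounting for select: showing that every component is $O(m)$ bits (indeed $o(m)$ beyond $A$), and in particular that the explicit storage for sparse groups and subgroups is paid for by the lengths of the intervals they span. The key is that sparse groups at each level occupy disjoint intervals of $A$. For the second level, a sparse subgroup spans at least $k'^2$ positions and uses $O(k'\log\log m)$ bits, so it costs $O(\log\log m/k')=O(1/\log\log m)$ bits per position it spans. Summing over disjoint intervals gives $o(m)$.

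The construction is a single scan that determines the group boundaries, classifies each group as sparse or dense, and fills the tables, so it runs in $O(m)$ time. The universal tables have $o(m)$ entries each and are built within the same time bound.
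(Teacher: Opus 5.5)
Your construction is correct, but the paper takes a different route: it gives no proof and simply cites Raman, Raman, and Rao \cite{RamanRamanRao2007} as a black box. You instead rebuild the result from scratch, using the classical Jacobson-style two-level rank directory and a Clark-style sparse/dense select directory. The citation is shorter and gives more than is needed: a compressed structure with $\log\binom{m}{|S|}+o(m)$ bits that supports select on both zeros and ones. Your argument is self-contained and yields $m+o(m)$ bits, which suffices for every use in the paper, since each application absorbs an $O(m)$ term.

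There are two details you should make explicit.

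\emph{Locating the select data.} Your space accounting covers the sample arrays and the explicit offset lists, but not the bookkeeping a query needs to reach them. You need a sparse/dense bit per group and per subgroup, and a way to find a sparse subgroup's offset list. A full $\Theta(\log m)$-bit pointer for each of the up to $\Theta(m/(\log\log m)^2)$ subgroups would cost $\Theta(m\log m/(\log\log m)^2)$ bits, which breaks the bound. Either of the following keeps everything in $o(m)$:
\begin{itemize}
\item one $O(\log m)$-bit base pointer per group, plus $O(\log\log m)$-bit pointers relative to that group's storage area (which has only $O(\log^2 m\,\log\log m)$ bits); or
\item fixed-size records indexed by a rank query on the sparse-indicator bitvector.
\end{itemize}

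\emph{Select on zeros.} You only implement select on the ones. Converting a global rank $p$ to the $\overline{f}$-input $(i,k)$ in Section~\ref{sec:encoding-f_G} also needs select on the zeros of the unary degree string. You get this by building the same directory over the complement of $S$.
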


With these tools in place, we can now continue to the construction.

\subsection{Encoding the Graph $G$ as a Function} \label{sec:encoding-f_G}

Our high-level approach to proving Theorem \ref{thm:dynamic-unordered-graphs}
will be to first encode the graph $G$ as a function, and to then apply
dynamic function inversion to that function. 

\paragraph{Interpreting $G$ as a function. } We begin by defining the function $f_G$ (in the static setting).
Define the \defn{owner} of an edge $\{i, j\}$ as the smaller endpoint
$\min(i, j)$. For each node $i$, let $J_i$ be the set of edges with owner $i$, 
let $d_i = |J_i|$ and let $j_{i,1} < j_{i,2} < \cdots < j_{i,d_i}$ be the other 
endpoints (in increasing order) of the edges in $J_i$.
Define a function $\overline{f}_G$ which maps each pair $(i, k)$, $k \in [d_i]$, 
to the endpoint $j_{i, k}$. 

If we could store both $\overline{f}_G$ and its inverse $\overline{f}_G^{-1}$, then we 
could recover all the neighbors of a node $i$ by iterating through both $\{\overline{f}_G((i, k)) : k \in [d_i]\}$ and $\overline{f}_G^{-1}(i)$.

There are two high-level challenges in doing this. First, we need to efficiently store the function $\overline{f}_G$ 
by itself (and in a way that also supports adjacency queries), while using space very close to $\log {M\choose m}$ bits. Second, we must translate $\overline{f}_G$ into a function $f_G$ that maps $[m] \rightarrow [n]$
(instead of mapping $\{(i, k) : i \in [n], k \in [d_i]\}$ to $[n]$). If we can do this, then the rest of the construction becomes relatively straightforward.

In this section, we show how to address both of the challenges above, storing $\overline{f}_G$ efficiently, and translating it into an $f_G: [m] \rightarrow [n]$.

\paragraph{Encoding $\overline{f}_G$ efficiently. }
We now describe how to store $\overline{f}_G$ efficiently.

We will refer to pairs $(i,k)$, with $1\le k\le d_i$, which are the valid
inputs to $\overline{f}_G$, as \defn{$\overline{f}$-inputs}, and we will 
think of these inputs as being sorted lexicographically, first by $i$ and then by $k$. 
Each $\overline{f}$-input $(i,k)$ corresponds to the edge
 $(i,j_{i,k})$. Define the \defn{global rank} of a $\overline{f}$-input $(i,k)$ (or, equivalently, the edge $(i,j_{i,k})$) to be
\[
    \operatorname{rank}(i,k)=k+\sum_{h<i} d_h.
\]
This means that, when sorted in increasing order, the $\overline{f}$-inputs occupy ranks $1,\ldots,m$.

We store the degree sequence $d_1,\ldots,d_n$ using the unary string
$0^{d_1}1\,0^{d_2}1\,\cdots\,0^{d_n}1$, equipped with a constant-time
$O(m)$-bit rank-select data structure \cite{RamanRamanRao2007} (Proposition \ref{prop:rankselect}).
 This lets us convert in $O(1)$ time between a
global rank $p\in[m]$ and the $\overline{f}$-input $(i,k)$ of rank $p$.

Next we describe how to map each $\overline{f}$-input $(i,k)$ to $j_{i, k}$.  
For each list $J_i$, we divide the elements $j_{i,1},\ldots,j_{i,d_i}$ into 
blocks of size $\log n$ (where the last block may be smaller). We then construct
an array of variable-length objects $A[1],\ldots,A[m]$ using Proposition \ref{prop:word-sized-variable-blocks}, 
where the entries are determined as follows:
\begin{itemize}
    \item For each block $B$, the first entry $j_{i, k}$ in that block is stored
    directly in $A[\operatorname{rank}(i, k)]$.
    \item For each block $B$ with first-entry $j_{i, k^*}$, and for each other
    entry $j_{i, k}$ in the block, we set $A[\operatorname{rank}(i, k)] = j_{i, k} - j_{i, k^*}$.
\end{itemize}

We can therefore recover $\overline{f}_G((i, k))$ in $O(1)$ time by 
first computing $\operatorname{rank}(i, k)$ and $\operatorname{rank}(i, k^*)$,
then returning $A[\operatorname{rank}(i, k^*)]$ if $k = k^*$, and 
$A[\operatorname{rank}(i, k)] + A[\operatorname{rank}(i, k^*)]$ otherwise.

Moreover, since we can translate between global ranks and $\overline{f}$-inputs in $O(1)$ time, 
we can define $f_G: [m] \rightarrow [n]$ to map global ranks $p \in [m]$ to $\overline{f}_G((i, k))$ 
where $(i, k)$ is the $\overline{f}$-input of rank $p$. Thus, we have not only shown how to 
store $\overline{f}_G$, but we have also translated it into a function $f_G: [m] \rightarrow [n]$.

Finally, so that our data structure supports adjacency queries, we also store a monotone minimal perfect hash function (constructed using Proposition \ref{prop:mmphf}) 
that maps each edge $(i, j)$, $i < j$, to its global rank ($\operatorname{rank}(i, k)$, where $j = j_{i, k}$).
This allows us to answer adjacency queries in $O(1)$ time: to check if $\{i, j\} \in E$, where $i < j$, we simply query the monotone minimal perfect hash function on $(i, j)$, to get a rank $p$, and check if $f_G(p) = j$.

\paragraph{Bounding the space used to store $\overline{f}_G$ and $f_G$. }
We now argue that the data structures above use total space
\begin{equation}
    \log\binom{M}{m} + O((n + m)\log\log n).
    \label{eq:space-used-to-store-f_G}
\end{equation}
bits. Since the monotone minimal perfect hash function uses $O(m\log\log n)$ bits,
the rank-select data structure uses $O(m)$ bits,
and the array of variable-length objects uses $\sum_t |A[t]| + O((n + m) \log \log n)$ bits,
it suffices to show that
\begin{equation}
    \sum_t |A[t]| \le \log\binom{M}{m} + O((n + m) \log \log n).
    \label{eq:space-used-to-store-A}
\end{equation}

First consider the $A[t]$s used to encode the blocks of a given $J_i$. 
Let $j_{i, k^*_1}, j_{i, k^*_2}, \ldots, j_{i, k^*_s} \in [n]$ be the
first elements of each of the blocks in $J_i$, and define $a_1, \ldots, a_s$ 
so that $a_j$ is the sum of the $|A[t]|$s corresponding to the block starting at $j_{i, k^*_j}$.
Then, for every $j \in [1, s)$, we have that
$$a_j \le \log n \cdot \log (j_{i, k^*_{j + 1}} - j_{i, k^*_j}),$$
and for $j = s$ we have that 
$$a_s \le z \log n,$$
where $z$ is the number of elements in the last block. Summing over the blocks in $J_i$,
\begin{align*}
    \sum_j a_j & \le \sum_{j=1}^{s-1} \log n \cdot \log (j_{i, k^*_{j + 1}} - j_{i, k^*_j}) + z \log n \\
    & \le (s - 1) (\log n) \log \frac{n - i}{s - 1} + z \log n \tag{since $j_{i, k^*_{s}} - j_{i, k^*_{1}} \le n - i$ and by Jensen's inequality}\\
    & = d_i \log \frac{n - i}{d_i} + z \log d_i \tag{since $(s - 1) (\log n) + z = d_i$} \\
    & = d_i \log n - d_i \log d_i + O(d_i \log \log n) \\
    & = \log \binom{n - i}{d_i} + O(d_i \log \log n) \tag{by Stirling's approximation}
\end{align*}
Summing over the $J_i$s, it follows that
\begin{align*}
    \sum_t |A[t]| & \le \sum_i \log \binom{n - i}{d_i} + O(\sum_i d_i \log \log n) \\
    & = \sum_i \log \binom{n - i}{d_i} + O((n + m) \log \log n) \tag{since $\sum_i d_i = m$} \\
    & \le \log\binom{M}{m} + O((n + m) \log \log n) \tag{ by the identity $\sum_{i = 1}^n \log\binom{x_i}{y_i} \le \log\binom{\sum_{i = 1}^n x_i}{\sum_{i = 1}^n y_i}$},
\end{align*}
This establishes \eqref{eq:space-used-to-store-A}, and therefore \eqref{eq:space-used-to-store-f_G}.

\subsection{A Static Warmup}

We now show how to use the data structure above to build a $(1 + \epsilon)$-compact
(but static) representation of $G$. 

The data structure as described already allows us to:
\begin{enumerate}
\item Answer adjacency queries in $O(1)$ time.
\item Translate between global ranks and $\overline{f}$-inputs in $O(1)$ time.
\item Evaluate $f_G$ in $O(1)$ time.
\item Iterate, for a given vertex $i$, through the neighbors $j > i$ in $O(1)$ time per neighbor.
\end{enumerate}

Using Theorem \ref{thm:main}, applied to $f_G$, we can store an additional $O(\epsilon (n + m) \log n)$-bit
data structure that allows us to iterate through the inverses $f_G^{-1}(i)$ in $O(1)$ time per inverse,
for each $i \in [n]$. Converting each such inverse $p$ to a $\overline{f}$-input $(j, k)$ lets us
recover the neighbors $j < i$ of $i$. Thus, we can iterate through all the neighbors of $i$ in $O(1)$ time per neighbor.

The total space is 
$$\log\binom{M}{m} + O((n + m)\log\log n) + O(\epsilon (n + m) \log n)$$
bits and the data structure supports $O(\epsilon^{-1})$-time adjacency queries and unordered neighbor iterators. 
It is also straightforward to construct the data structure in $O(n + m)$ time. Thus, we get the static version of Theorem \ref{thm:dynamic-unordered-graphs}.

\subsection{Dynamizing}\label{sec:dynamic}

We now prove Theorem \ref{thm:dynamic-unordered-graphs}. The basic idea is 
to use dynamic function inversion rather than static function inversion.
We will also need to dynamize our representation of the graph $G$ itself.
We will do this by following the same phase-based structure as Theorem \ref{thm:main}: at the
beginning of a phase we build a static representation of $G$,
and during the phase we maintain a lower-order amount of metadata describing
the updates that have occurred since the rebuild.

Let $m_0$ be the number of edges at the beginning of a phase. The phase lasts
for $\Theta(\epsilon (n+m_0))$ updates. For a given phase, we use $G_0$ to denote the graph at the beginning of the phase, and $f_0$ to denote the function $f_{G_0}$.
At the beginning of each phase, we build the static representation of $f_0$ (as described in Section \ref{sec:encoding-f_G}). We will argue at the end of the section that, between phases, these rebuilds can 
be performed in a way that maintains the $(1+\epsilon)$-compactness guarantee, and where the amortized
time per update remains $O(\epsilon^{-1})$. 

During the phase, we distinguish between \defn{old edges}, which are edges of
$G_0$, and \defn{fresh edges}, which are edges inserted during the current phase
and still present in the graph. An old edge becomes \defn{touched} the first
time it is deleted during the phase. Once an old edge is touched, its old copy
is retired permanently for the rest of the phase. If the same edge is later
reinserted during the phase, it is represented as a fresh edge. This convention
ensures that no edge is ever returned once from the old static representation
and once from the fresh-edge metadata.

During a given phase, in addition to storing the static representation of $f_0$, we store as metadata: 
\begin{enumerate}
    \item a dictionary of touched old edges;
    \item linked lists of fresh edges grouped by owner; \label{item:fresh-edge-lists}
    \item linked lists of fresh edges grouped by non-owner endpoint; \label{item:fresh-edge-lists-by-non-owner}
    \item a dictionary of fresh edges (with pointers into the linked lists in items \ref{item:fresh-edge-lists} and \ref{item:fresh-edge-lists-by-non-owner}, so that fresh edges can be deleted from those lists in constant time);
    \item a decremental prefix-set structure (as given by Proposition \ref{prop:succinct-decremental-sets}) keeping track of which $\overline{f}_{G_0}$-inputs $(i, k)$ correspond to old edges that have not yet been touched; \label{item:decremental-prefix-set}
    \item an inverse-data structure (given by Theorem \ref{thm:main}) for the function $f_0'$ defined for any old edge of rank $r$ to be 
    $f'_0(r) = f_0(r)$ if $r$ has not been touched, and $f'_0(r) = n+1$ otherwise. Note that the oracle for $f_0'$ can be implemented in $O(1)$ time using the static representation of $f_0$ and the decremental prefix-set structure in item \ref{item:decremental-prefix-set}.
\end{enumerate}
These metadata are also rebuilt between phases.

\paragraph{Bounding space usage during a phase.}
We now bound the space used by the data structure during a given phase (not counting space used to perform 
the rebuilds between phases, which we will come back to at the end of the section).

The phase contains only $O(\epsilon (n+m_0))$ updates. Thus the touched-edge and
fresh-edge dictionaries, together with the fresh-edge linked lists, store only
$O(\epsilon (n+m_0))$ edges and use $O(\epsilon (n+m_0)\log n)$ bits. 
The decremental prefix-set structure uses $O(m_0)$ bits, and the
dynamic function-inversion data structure uses $O(\epsilon (n+m_0)\log n)$ bits. Altogether,
the phase metadata fits in
\[
    O((n+m_0)\epsilon\log n + (n+m_0)\log\log n)
\]
bits. Since the representation of $f_0$ itself uses 
$$\log\binom{M}{m_0} + O((n + m_0)\log\log n) + O(\epsilon (n + m_0) \log n)$$
bits, the total space used at any point during the phase is 
$$\log\binom{M}{m_0} + O((n + m_0)\log\log n) + O(\epsilon (n + m_0) \log n) + O((n + m_0)\epsilon\log n + (n + m_0)\log\log n).$$
Since $m_0 = m \pm O(\epsilon (n + m_0))$, this is 
$$\log\binom{M}{m} + O((n + m)\log\log n) + O(\epsilon (n + m) \log n),$$
as desired.

\paragraph{Adjacency queries.}
Next, we describe how to support each type of operation in $O(\epsilon^{-1})$ time. 
To answer $\mathsf{Adjacent}(i,j)$, assume without loss of generality that
$i<j$. We first check the fresh-edge dictionary. If $\{i,j\}$ is present there,
then the answer is yes. Otherwise, we check if $(i, j)$ was present in $G_0$
(using the static representation of $f_0$), and further check if $(i, j)$
has been touched during the current phase. If it was present in $G_0$ and has 
not been touched, then it is present, and otherwise it is not. 
The adjacency query takes $O(1)$ time.

\paragraph{Insertions.}
To perform $\mathsf{Insert}(i,j)$, we first check whether the edge is already
present, using the adjacency procedure above. If the edge is already present,
we do nothing. Otherwise, assume without loss of generality that $i<j$, so that
$i$ is the owner. We insert the edge into the fresh-edge dictionary, into the
fresh owned-edge list for $i$, and into the fresh non-owner list for $j$. This
takes $O(\epsilon^{-1})$ time.

\paragraph{Deletions.}
To perform $\mathsf{Delete}(i,j)$, assume without loss of generality that
$i<j$. If the edge is fresh, we remove it from the fresh-edge dictionary and
from its two fresh-edge linked lists. Otherwise, if the edge is present in the
old static representation and has not yet been touched, we mark it as touched.

In the latter case, we also update the decremental prefix-set structure, 
and the function $f'_0$ to reflect the fact that the edge has been touched. 
In total, the deletion takes $O(\epsilon^{-1})$ time.

\paragraph{Neighbor iterators.}
To iterate through the neighbors of a vertex $i$, we combine four sources:
\begin{enumerate}
    \item old untouched edges owned by $i$;
    \item old untouched edges owned by a smaller vertex and pointing to $i$;
    \item fresh edges owned by $i$;
    \item fresh edges whose non-owner endpoint is $i$.
\end{enumerate}
The first source can be iterated through using the data structure for $f_0$
along with the decremental prefix-set structure. The second source can be 
iterated through by calculating the inverses ${f'_0}^{-1}(i)$. The last two sources 
can be iterated through using the fresh-edge linked lists.

The combined iterator returns every
current neighbor exactly once, with $O(\epsilon^{-1})$ time per iterator operation.

\paragraph{Rebuilding.}
At the end of a phase, the static representation of $f_0 = f_{G_0}$
must be rebuilt to instead represent $f_1 = f_{G_1}$, where $G_1$ is the graph 
at the end of the phase. 

We now discuss how to perform this rebuild without temporarily blowing up 
the space usage of the data structure. Specifically, we wish to perform the 
rebuild in time $O(n+m_0)$, while preserving $(1 + \epsilon)$-compactness
as the rebuild occurs.

The rebuild can be accomplished by:
\begin{enumerate}
\item Sorting the changes that must be applied to order of edge rank;
\item Scanning through the array $A[1], \ldots, A[m_0]$, from left to right,
clearing out the entries as we go (so some prefix is all null entries), and 
filling in the entries of our new array $A'[1], \ldots, A'[m_1]$. 
\item Rebuilding the rank-select data structure, the monotone minimal perfect hash function,
the metadata structures, each in the straightforward $O(n + m_0)$-time manner. Note that
these all use little enough space that naive rebuilds are not a problem space-wise. 
\end{enumerate}

Critically, as we clear out a prefix of the array $A$, and fill in a prefix of the array $A'$,
we can argue that the \emph{total} space between them is never too large. Indeed, the amount of
space used by a prefix of $A$ (before being cleared out) and the same prefix of $A'$ (after being filled in)
is always the same up to $\pm O(\epsilon (n + m_0) \log n)$ bits. So, the total blowup in space during
the rebuild is at most an additive $O(\epsilon (n + m_0) \log n)$ bits, which does not violate the $(1+\epsilon)$-compactness guarantee.

The rebuild takes total time $O(n+m_0)$. Since the previous phase consisted of $O(\epsilon (n+m_0))$ updates,
this amortizes to $O(\epsilon^{-1})$ time per update. This completes the proof of Theorem \ref{thm:dynamic-unordered-graphs}.

\section{Dynamic Function Inversion in All Parameter Regimes}\label{sec:fiat-naor-dynamic}

Finally, in this section, we show that our dynamization techniques from Section \ref{sec:main} can also be applied to the classic Fiat-Naor construction. The main theorem of the section (using the same terminology as defined in Section \ref{sec:dynamic}) is the following:

\begin{restatable}{theorem}{thmfiatnaor}
        \label{thm:dynamic-fiat-naor}
        \thmfiatnaorbody
\end{restatable}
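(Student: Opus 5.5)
The plan is to follow the dynamization template of Section~\ref{sec:main}, replacing the $O(T)$-time chain structure with a Fiat--Naor structure. The key intermediate object is a static data structure supporting $\mathsf{Inverse}_f(y,i)$ queries, which return the $i$-th inverse of $y$ in some fixed order, in $\tilde{O}(t^3)$ time using $O(N\log N/t)$ bits. Once this exists, the phase-based machinery of Theorem~\ref{thm:dynamic} applies essentially verbatim. We break the updates into phases of $\Theta(N/t)$ operations. During a phase we keep the hash table $\mathcal{A}$ of original values $f_0(x)$, which lets us simulate the oracle for $f_0$; the fresh-inverse lists $\mathcal{B}$; and the decremental prefix-set structure $\mathcal{U}$ of Proposition~\ref{prop:succinct-decremental-sets}. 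Together these use $O(N\log N/t)$ bits. An iterator first enumerates the un-updated indices $i$ via $\mathcal{U}$ and answers $\mathsf{Inverse}_{f_0}(y,i)$ for each, then walks $\mathcal{B}$. We deamortize the $\tilde{O}(N)$ rebuild over $\Theta(N/t)$ updates using the two-copy trick. I need to check whether this rebuild cost actually gives $O(t^3)$ per update. Spreading $\tilde{O}(N)$ work over $N/t$ updates costs $\tilde{O}(t)$ per update, which is at most $O(t^3)$ in the relevant regime. If needed, I would lengthen the phase to $\Theta(N/t)$ updates and absorb the polylog factor into $t^3 \ge \log N$ once $t$ is at least polylogarithmic, treating small $t$ via Theorem~\ref{thm:main}.

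To obtain the all-inverses static structure, I would reuse the reductions of Sections~\ref{sec:heavy-hitters-to-non-heavy-hitters} and the preceding subsection. First, I split each heavy hitter $y$ (with $|f^{-1}(y)|\ge \log^2 N$) into $m(y)$ new codomain points via the hash $\phi$, as in Lemma~\ref{lem:phi}. This produces a function $f'$ on $[N']$ with $N'=N+O(N/\log^2 N)$, and all in-degrees of $f'$ are at most $\log^2 N$. I also store the $m_k(y)$ and $\psi_y$ tables so that an $\mathsf{Inverse}_f(y,i)$ query becomes a single $\mathsf{Inverse}_{f'}(y_j,\cdot)$ query in $O(1)$ extra time. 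Second, I partition the domain into groups $P_1,\dots,P_{\log^2 N}$, where $P_k$ holds the $k$-th inverse of each $y$. For each $k$, I build a Fiat--Naor instance that only needs to invert $f$ restricted to $P_k$, which is injective. A natural way is to give instance $k$ the function $f_k$ that equals $f$ on $P_k$ and sends other points to a dummy value, implementing $f_k$ through an $O(\log\log N)$-bit-per-element group label.

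The main obstacle is space. Running $\log^2 N$ separate Fiat--Naor instances, each sized for domain $N$, costs $\log^2 N$ times too much. Instead, instance $k$ must be sized for $|P_k|$ elements, using $O(|P_k|\log N/t)$ bits, so that the total over $k$ is $O(N\log N/t)$. The difficulty is that Fiat--Naor's analysis for the restricted function assumes the chains walk over a domain in which a constant fraction of points are relevant. When $|P_k|\ll N$, a random step $g(y)\in[N]$ almost never lands in $P_k$. I would fix this by composing with a rank map: use a monotone minimal perfect hash (Proposition~\ref{prop:mmphf}) or a rank/select structure on group labels (Proposition~\ref{prop:rankselect}) to identify $P_k$ with $[|P_k|]$. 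The $k$-th instance then lives on $[|P_k|]$ with $O(1)$-time oracle access, and since $f$ is injective on $P_k$, Fiat--Naor's guarantee for injective functions applies with time $\tilde{O}(t^3)$ (in fact $\tilde{O}(t^2)$). A second concern is randomness: Fiat--Naor needs limited-independence hash functions. I would take these from Proposition~\ref{prop:hashfunction}, sharing a single $N^{\delta}$-bit description across all instances, and succeed with high probability in $N$ via a union bound over the $\log^2 N$ instances. For very small groups with $|P_k|<N^{\Omega(1)}$, I would store inverses explicitly, since the total over such groups is tiny.
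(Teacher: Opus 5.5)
Your reduction fails on space, in exactly the regime this theorem targets (Fiat--Naor, $t=N^{\Omega(1)}$). The budget is $O(N\log N/t)$ bits, which is $o(N)$ once $t=\omega(\log N)$ and only about $N^{2/3}\log N$ bits at $t=N^{1/3}$. Several pieces of your construction cost $\Omega(N)$ bits regardless of $t$:

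The decremental prefix-set structure of Proposition~\ref{prop:succinct-decremental-sets} stores the unary in-degree string and a length-$N$ bitvector, so it is $\Theta(N)$ bits.

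The $O(\log\log N)$-bit group label per preimage, and the structure selecting the $u$-th element of $P_k$, take $\Theta(N\log\log N)$ bits. Also, Proposition~\ref{prop:mmphf} gives element-to-rank, not the rank-to-element direction your chains need.

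Splitting at threshold $\log^2 N$ can create $\Theta(N/\log^2 N)$ buckets. Their $m_k(y)$, $\psi_y$ and $(y,i)\leftrightarrow y_i$ tables take $\Theta(N/\log N)$ bits. On top of that come $\Theta(N\log\log N)$ bits of counters if you detect heavy hitters as in Section~\ref{sec:heavy-hitters-to-non-heavy-hitters}.

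So your claim that everything fits in $O(N\log N/t)$ bits holds only for $t=O(\log N/\log\log N)$, which Theorem~\ref{thm:main} already covers. This is built into your design, not a slip. The indexed interface $\mathsf{Inverse}_f(y,i)$ plus per-index deletion makes you materialize, for each of the $N$ preimages, its position within $f_0^{-1}(f_0(x))$.

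The missing idea is to keep per-inverse bookkeeping only for $O(N/t)$ objects. Use heavy-hitter threshold $t$, not $\log^2 N$. There are then at most $N/t$ heavy hitters. Split each into $\lceil 2|f^{-1}(y)|/t\rceil$ buckets of size $\Theta(t)$, using a $\log^2 N$-wise independent $\phi$, and find candidates by sampling so construction uses only $\tilde O(N/t)$ bits. This gives $O(N/t)$ buckets and a function of maximum in-degree at most $t$. Theorem~\ref{thm:static-fiat-naor-all-inverses} then returns whole inverse sets directly, with no group partition and no rank/select machinery.

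During a phase, the two cases are handled as follows:

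For a non-heavy $y$, recover all (at most $t$) $f_0$-inverses in $\tilde O(t^3)$ time via $\mathcal{A}$, and filter them by checking the current oracle value $f(x)=y$. No record of which inverses were updated is needed.

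For a heavy $y$, $\mathcal{U}$ is just a list of buckets $y_i$ that still contain an un-updated inverse. Each update spends $\tilde O(t^3)$ time recomputing the affected bucket to decide whether to drop it. This costs $\tilde O(N/t)$ bits, and every bucket the iterator visits yields at least one inverse.
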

Note that, in Theorem \ref{thm:dynamic-fiat-naor}, and throughout the section, $\tilde{O}(\cdot)$ hides $\polylog N$ terms. These terms are considered negligible for the classic Fiat-Naor construction, where the main focus is on the case of $t = N^{\Omega(1)}$. 

The classic construction by Fiat and Naor \cite{FiatNa91,FiatNa00} has a notion of a \defn{heavy hitter}, which is any element $y \in [N]$ such that $|f^{-1}(y)| \ge t$. A useful feature of the classic construction, which is left implicit in the classical analysis \cite{FiatNa91,FiatNa00}, is that, if all elements are \emph{non-heavy hitters}, the function-inversion query can be adapted to return \emph{all of} $f^{-1}(y)$ in $\tilde{O}(t^3)$ time, not just a single inverse. This is captured in the following variation of the Fiat and Naor result, due to Bibbens, Borevitz, and McCauley \cite{BibbensBorevitzMcCauley2026TextIndexingMismatches}:

\begin{theorem}
        Let $f: [N] \to [N]$ be a function which we have $O(1)$-time oracle access to, and with the feature that $|f^{-1}(y)| \le t$ for all $y \in [N]$. Then, we can construct in $\tilde{O}(N)$ time a data structure that returns \emph{all} inverses of a given element $y \in [N]$ in $\tilde{O}(t^3)$-time, that uses space $O(N \log N / t)$ bits, where the construction succeeds with high probability in $N$. 
        \label{thm:static-fiat-naor-all-inverses}
\end{theorem}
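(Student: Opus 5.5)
The plan is to revisit the Fiat--Naor construction and check that, once every element has in-degree at most $t$, its query can be made to enumerate the whole preimage set. Recall the construction. One first stores explicitly a set $L$ of roughly $N/t$ ``popular'' images, together with one inverse for each, in $O(N\log N/t)$ bits. One then builds $\ell = \tilde{O}(t)$ independent Hellman tables. Table $j$ uses $m = N/t^2$ chains of length $t$, and its step function is $h_j(x) = g_j(f(x))$. Here $g_j$ is a $k$-wise independent hash, with $k = \polylog N$, that is forced to avoid mapping into the preimage of $L$. Each table stores (endpoint, start) pairs. Summed over the tables this gives $\ell \cdot m\cdot O(\log N) = \tilde{O}(N\log N/t^2)\cdot t$, and after choosing the constants and absorbing logs we obtain $O(N\log N/t)$ bits. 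Construction walks all chains, so it costs $\tilde{O}(\ell m t) = \tilde{O}(N)$ time.

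The key change is to the query. Given $y$, for each table $j$ and each offset $0\le r<t$, we walk forward from $g_j(y)$. Whenever we hit an endpoint, we replay the corresponding chain from its start, and at every chain position $w$ with $f(w)=y$ we output $w$, deduplicating via a small hash set. This costs $O(t)$ per hit. There are $\tilde{O}(t)$ tables with $O(t)$ steps each, and each step triggers at most one replay of length $t$, for a total of $\tilde{O}(t^3)$. (Alternatively, endpoint collisions can be capped by the standard false-alarm bound, which again gives total $\tilde{O}(t^3)$.) Because $|f^{-1}(y)|\le t$, the output size itself is within budget.

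The main obstacle is correctness, namely showing that every $x\in f^{-1}(y)$, and not merely one preimage, lies on some chain of some table. The Fiat--Naor coverage argument shows that, with probability $\Omega(1)$ per table, a fixed ``good'' point $x$ is covered. The key steps are as follows. (i) The good points are those whose image $f(x)$ is not in $L$ and which are not caught in collisions. The low in-degree assumption ensures that $f(x)$ has at most $t$ preimages, so the $k$-wise independent analysis of Fiat and Naor applies to each $x$ individually rather than to each $y$. (ii) Repeating over $\ell=\Theta(t\log N)$ tables, which is absorbed into $\tilde{O}$, makes every good $x$ covered with high probability, after a union bound over all $x\in[N]$. (iii) Bad points, whose image lies in $L$, need separate handling. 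For these I would replace the single stored inverse by an explicit list of all preimages. Since $|L| = O(N/t)$ and each element of $L$ has at most $t$ preimages, this is not directly within budget. Instead I would shrink $L$ by defining popular images with respect to the rerandomized step and storing $O(N/t)$ preimage pairs in total. I would verify that the Fiat--Naor argument only needs $\sum_{y\in L}|f^{-1}(y)| = O(N/t)$.

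Since this theorem is attributed to Bibbens, Borevitz, and McCauley, the proof can also simply cite their analysis, with the plan above serving as a sketch of why the all-inverses variant holds.
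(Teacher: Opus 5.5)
\textbf{There is a genuine gap: the coverage step fails with your parameters.} Your overall strategy is the right one: run Fiat--Naor chains with no heavy-hitter table, replay every chain whose endpoint the query walk hits, and report every chain position $w$ with $f(w)=y$ across all tables. The paper itself gives no proof beyond the citation to Bibbens, Borevitz, and McCauley.

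The problem is that you use Hellman's random-function parameters, $\tilde{O}(t)$ tables of $m=N/t^2$ chains of length $t$, rather than Fiat--Naor's. A table of chains of $h_j=g_j\circ f$ covers $\Theta(mt)$ distinct points only under the matrix-stopping condition $mt^2q(f)=O(1)$, where $q(f)=\sum_y|f^{-1}(y)|^2/N^2$. The hypothesis $|f^{-1}(y)|\le t$ only gives $q(f)\le t/N$, so your $m$ overshoots by a factor of $t$.

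Concretely, let $f$ be exactly $t$-to-$1$ onto a set $Y$ of size $N/t$, and let $g_j$ be truly random. The images along a chain then evolve by $y\mapsto f(g_j(y))$, which is a random mapping on $Y$. In a random mapping on $N/t$ points, $N/t^2$ trajectories of length $t$ merge so quickly that together they visit only $O(N/t^{3/2})$ points. So a fixed $x$ lies on a chain of a given table with probability $O(t^{-3/2})$. It lies on a chain of some one of your $\tilde{O}(t)$ tables with probability only $\tilde{O}(t^{-1/2})$, so most preimages are never reported. Separately, your claim that a fixed good $x$ is covered with probability $\Omega(1)$ per table cannot hold with your parameters, since a table touches at most $mt=N/t$ of the $N$ points.

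\textbf{The fix is Fiat--Naor's actual setting.} Take $m=\Theta(N/t^3)$ chains per table, so that $mt^2\cdot t/N=O(1)$. Then each table contains a fixed $x$ with probability $\Omega(mt/N)=\Omega(1/t^2)$, independently across tables. Hence $\ell=\Theta(t^2\log N)$ tables cover every $x\in[N]$ with high probability, using $\ell m=O(N\log N/t)$ stored pairs. The extra logarithm is absorbed by running the construction with $t\log N$ in place of $t$.

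The $t^3$ then comes from $t^2$ tables times $t$ walk steps. Your worst-case bound of one length-$t$ replay per step is therefore no longer affordable, since it would give $\tilde{O}(t^4)$. You need the false-alarm bound instead: the query walk merges into only $O(mt^2q(f))=O(1)$ chains per table in expectation, which gives $O(t)$ expected replay work per table.

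Finally, your step (iii) is unnecessary. The table $L$ in Fiat--Naor exists only to push the in-degree of the remaining images down to about $t$, which the hypothesis already guarantees. So take $L=\emptyset$: every $x$ is good, and no preimage lists need to be stored explicitly.
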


In the rest of the section, we show how to use Theorem \ref{thm:static-fiat-naor-all-inverses} in a black-box fashion to prove Theorem \ref{thm:dynamic-fiat-naor}. Throughout, we suppose without loss of generality that $t \ge \log^2 N$ (since we are ignoring polylogarithmic terms).

\paragraph{Step 1: Recovering all inverses. } The first step is to modify the static construction so that it can also recover all inverses for heavy hitters (in time at most $\tilde{O}(t^3)$ per inverse). We do this with the same approach as in Section \ref{sec:heavy-hitters-to-non-heavy-hitters}.

For each heavy hitter $y$, let 
\begin{equation}
    m(y) = \left\lceil \frac{2|f^{-1}(y)|}{t} \right\rceil,
    \label{eq:mdefn-fn}
\end{equation}
and define $y_1, \ldots, y_{m(y)}$ to be the integers in the interval
$$\left( N + \sum_{\text{heavy hitter }y' < y} m(y'),\; N + \sum_{\text{heavy hitter }y' \le y} m(y') \right].$$
Let $\phi:[N] \rightarrow [0, 1]$ be a $\log^2 N$-wise independent hash function constructed using Proposition \ref{prop:hashfunction} to support $O(1)$ evaluation time and take space at most, say, $N^{1/3}$ bits (and simulating output range $[0, 1]$ by mapping to some large polynomial-size range and renormalizing); we will condition on the high-probability event from Proposition \ref{prop:hashfunction} that $\phi$ is $\log^2 N$-wise independent.
Let $N' = \sum_{\text{heavy hitter }y} m(y)$, and define a new function $f':[N'] \rightarrow [N']$ by 
$$
f'(x) = \begin{cases}
0 & \text{if } x > N, \\
f(x) & \text{if } x \in [N] \text{ and } f(x) \text{ is not a heavy hitter}, \\
f(x)_{\lceil m(f(x)) \cdot \phi(x) \rceil} & \text{if } x \in [N] \text{ and } f(x) \text{ is a heavy hitter}.
\end{cases}
$$
As in Section \ref{sec:heavy-hitters-to-non-heavy-hitters}, the basic idea behind $f'$ is that, for each heavy hitter $y$, it reroutes each element $x \in f^{-1}(y)$ to a random element among $y_1, \ldots, y_{m(y)}$. By distributing the preimage of $y$ among $m(y)$ new elements, we can eliminate heavy hitters.

\begin{lemma}
With high probability in $N$, $f'$ has no heavy hitters. Moreover, for each heavy hitter $y$ and each $i \in [m(y)]$, we have $|f'^{-1}(y_i)| = \Theta(t)$.
\label{lem:phi2}
\end{lemma}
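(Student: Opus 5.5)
The plan is to mirror the proof of Lemma \ref{lem:phi}, with $\log^2 N$ replaced by $t$, and with the full-independence Chernoff bound replaced by Proposition \ref{prop:chernofflimited}. We condition throughout on the event from Proposition \ref{prop:hashfunction} that $\phi$ is $\log^2 N$-wise independent.

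First, non-heavy-hitter elements $y \in [N]$ keep their preimage under $f'$ unchanged, so $|f'^{-1}(y)| = |f^{-1}(y)| < t$. Every heavy hitter $y$ itself acquires an empty preimage under $f'$. The element $0$ (or any dummy sink for $x > N$) receives only the $N' - N$ padding elements. Since $m(y) \ge 2|f^{-1}(y)|/t$ for heavy hitters $y$, we have $N' - N \le \sum_y m(y) = O(N/t)$; this is small, and I would either treat it separately or note that the padding inputs can be excluded from the inversion structure. So the only real content is the claim about the new elements $y_i$.

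Fix a heavy hitter $y$ and an index $i \in [m(y)]$. Write
\[
|f'^{-1}(y_i)| = \sum_{x \in f^{-1}(y)} \mathbbm{1}\!\left[\lceil m(y)\phi(x)\rceil = i\right].
\]
This is a sum of $|f^{-1}(y)|$ indicator variables, and they are $\log^2 N$-wise independent. After renormalizing the polynomial-size output range, each indicator has probability $1/m(y) \pm 1/\poly(N)$. Since $|f^{-1}(y)| \ge t$, we have $m(y) \in [2|f^{-1}(y)|/t,\, 3|f^{-1}(y)|/t]$ (using $\lceil z\rceil \le 1.5z$ for $z \ge 2$). Hence the mean satisfies
\[
\mu \in [t/3,\, t/2]
\]
up to negligible error.

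Now apply Proposition \ref{prop:chernofflimited} with $n = N$ (padding the sum with zero variables if needed), $K = \log^2 N \ge B_d \log N$, and $\mu \ge t/3 \ge C_d^2 \log N$; the last inequality uses the standing assumption $t \ge \log^2 N$. With probability $1 - N^{-d}$, we get $|X - \mu| \le C_d\sqrt{\mu \log N} = o(\mu)$, so $X \in [t/4,\, 0.6t] \subseteq [\Omega(t),\, t)$. Finally, take a union bound over all pairs $(y, i)$, of which there are at most $N' \le 2N$.

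The main point requiring care is that Proposition \ref{prop:chernofflimited} needs $\mu = \Omega(\log N)$, which is exactly why the assumption $t \ge \log^2 N$ is needed. A second issue is that the "no heavy hitters" threshold is $t$, so the upper bound must be strictly below $t$. The slack between $\mu \le t/2$ and $t$ absorbs the deviation term, so this goes through.
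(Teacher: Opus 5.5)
Your proposal is correct and follows essentially the same argument as the paper: write $|f'^{-1}(y_i)|$ as a sum of $\log^2 N$-wise independent indicators with mean $\Theta(t)$, apply Proposition~\ref{prop:chernofflimited} using $t \ge \log^2 N$, and union bound over all pairs $(y,i)$. Your additional care goes slightly beyond what the paper writes without changing the route. This includes padding to $n = N$ so that each failure probability is $N^{-d}$ rather than only $|f^{-1}(y)|^{-d}$, accounting for the discretization of $\phi$, and flagging the sink element that receives the inputs $x > N$.
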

\begin{proof}
It suffices to show that, for each heavy hitter $y$ (with respect to $f$), and for each $i \in [m(y)]$, we have $|f'^{-1}(y_i)| \in [\Omega(t), t]$. Notice that 
$$|f'^{-1}(y_i)| = \sum_{x \in f^{-1}(y)} \mathbbm{1}\!\left[\left\lceil m(y) \cdot \phi(x) \right\rceil = i\right]$$
is a sum of $|f^{-1}(y)|$ $\log^2 N$-wise independent 0-1 random variables with total mean $|f^{-1}(y)| / m(y) \in [0.25 t, 0.5 t]$. By a Chernoff bound for $\log^2 N$-wise independent random variables (Proposition \ref{prop:chernofflimited}), and using the fact that $t \ge \log^2 N$, it follows that $\Pr[\,|f'^{-1}(y_i)| \not\in [0.1 t, 0.9 t]\,] \le e^{-\Omega(\log^2 N)}$. Union bounding over all heavy hitters $y$ and all $i \in [m(y)]$, the lemma follows. 
\end{proof}

As in Section \ref{sec:heavy-hitters-to-non-heavy-hitters}, we have that $N'$ is only slightly larger than $N$. By design, 
$$N'- N \le \sum_{\text{heavy hitter }y} m(y) \le \sum_{\text{heavy hitter }y} |f^{-1}(y)| / t = O(N / t).$$
Thus we can afford to store $O((N' - N) \polylog N)$ bits of metadata in our data structure. 

With this in mind, it is straightforward to construct $f'$ in $O(N)$ time, and using $\tilde{O}(N/t)$ bits, as follows. We first randomly sample a $(\log^2 N / t)$-fraction of elements and declare for each sampled element $x$ that $f(x)$ is a potential heavy hitter. Note that this sampling step will, with high probability in $N$, find all heavy hitters, since the probability of a heavy-hitter being missed is at most
$$(1 - t / N)^{(N/t)\log^2 N} \le e^{-\Omega(\log^2 N)}.$$
Next, we perform a pass on the elements of $[N]$ to calculate $|f^{-1}(y)|$ for each candidate heavy-hitter $y$ (this uses $\tilde{O}(N /t)$ bits). This lets us determine the true heavy hitters, and their $m(y)$ values (which we store in a hash table). Finally, for each heavy hitter $y$ and each $i \in [m(y)]$, we store a hash table mapping $(y, i)$ to $y_i$ (and vice-versa). This results in an $\tilde{O}(N / t)$-bit data structure which can be used to translate calls to $f'$ into calls to $f$ and vice-versa. 

Finally, we can use $f'$ to iterate through the inverses of a heavy hitter $y$ in $\tilde{O}(t^3)$ time per inverse by simply iterating through each $y_i$, and then using Theorem \ref{thm:static-fiat-naor-all-inverses} to get the entire set $f^{-1}(y_i)$ in $\tilde{O}(t^3)$ time.

\paragraph{Step 2: Dynamizing the construction. }Having modified the static construction to keep track of all inverses for every element, we can now modify it to support dynamic updates. We do this by using the same high-level approach as in Section \ref{sec:dynamizing}. 

As in Section \ref{sec:dynamizing}, we will proceed in phases. At the start of each phase we construct a new data structure (in $\tilde{O}(N)$ time) using the construction above. Then, during the phase, we will keep track of the following three things, where $f_0$ denotes the function at the start of the phase, and $f$ denotes the current function. Note that only the third bullet point differs from Section \ref{sec:dynamizing}.
\begin{enumerate}
\item For any $x \in [N]$ such that $f(x)$ has been updated, we store $f_0(x)$ in a hash table $\mathcal{A}$.
\item For each $y \in [N]$, keep track of all inverses $x \in f^{-1}(y)$ that have had $f(x)$ \emph{updated} to $y$ during the phase (call these \defn{fresh inverses} for $y$). This is stored in a hash table $\mathcal{B}$ that keeps, for each $y \in [N]$ that has any fresh inverses, a linked list of $y$'s fresh inverses. 
\item Finally, we keep an auxiliary data structure $\mathcal{U}$ that keeps a linked list, for each $y \in [N]$, of which $i \in [m(y)]$ have the property that not all $f'$-inverses of $y_i$ have been updated.
\end{enumerate}

The first two data structures are straightforward to implement with $O(1)$-time operations and using $O(\log N)$ bits of space per entry. Because the phase lasts for $O(N / t)$ updates, the space for these data structures is $O(N \log N / t)$ bits. 

The third data structure $\mathcal{U}$ is also straightforward to implement. Each time we update $f(x)$ for an element $x$ with $f'(x) = y_i$ for some $i$, we spend $\tilde{O}(t^3)$ time to recover all $x \in f'^{-1}(y_i)$ and then check whether any of them still satisfy $f(x) = y$. If they do not, we remove $i$ from the linked list $\mathcal{U}[y]$ (we can find it in the linked list by keeping a hash table mapping $(y, i)$ to the corresponding list node). This gives a simple implementation of $\mathcal{U}$ that uses $\tilde{O}(N / t)$ bits of space and supports $\tilde{O}(t^3)$-time updates. 

Using these three data structures, we can answer a function-inversion query for an element $y$ as follows. We first check if $y$ has any fresh inverses (using $\mathcal{B}$); if not, we check if $y$ is a heavy hitter for $f_0$, and break into two cases:
\begin{itemize}
\item If it is, we use $\mathcal{U}$ to find some $y_i$ such that not all $f'$-inverses of $y_i$ have been updated (if no such $y_i$ exists, we return $\bot$); we then recover the $f'$-inverses of $y_i$ in $\tilde{O}(t^3)$ time, and check which of them (there must be at least one) still satisfy $f(x) = y$.
\item If it is not, we recover the $f$-inverses of $y$ in $\tilde{O}(t^3)$ time, and check which of them (if any) still satisfy $f(x) = y$. If none of them do, we return $\bot$. 
\end{itemize}
Likewise, using the same approach, we can support an iterator for the inverses of an element $y$ in $\tilde{O}(t^3)$ time per inverse. Finally, rebuilds take $\tilde{O}(N)$ time and $\tilde{O}(N / t)$ bits of space, and can be deamortized exactly as in Section \ref{sec:dynamizing}. This gives Theorem \ref{thm:dynamic-fiat-naor}.

\section{AI Acknowledgement}

In addition to formatting and grammar editing, AI (GPT 5.5 Pro) was used to produce an initial draft of Section \ref{sec:dynamic-unordered-graphs} based on detailed notes provided by the authors. The section was then heavily edited by the authors.

\bibliographystyle{plain}
\bibliography{main}

\end{document}